\DeclareMathOperator{\oneD}{\texttt{1D}}
\DeclareMathOperator{\undef}{\texttt{undef}}
\newcommand{\transpose}[1]{{#1}^{\texttt{T}}}
\newcommand{\ta}{\ensuremath{\mathtt{a}}}
\newcommand{\tb}{\ensuremath{\mathtt{b}}}
\newcommand{\tc}{\ensuremath{\mathtt{c}}}
\newcommand{\td}{\ensuremath{\mathtt{d}}}
\newcommand{\te}{\ensuremath{\mathtt{e}}}
\newcommand{\tf}{\ensuremath{\mathtt{f}}}
\newcommand{\tg}{\ensuremath{\mathtt{g}}}
\newcommand{\tth}{\ensuremath{\mathtt{h}}}
\definecolor{darkgreen}{rgb}{0.0, 0.7, 0.0}
\definecolor{darkorange}{rgb}{1.0, 0.3, 0.0}
\definecolor{darkyellow}{rgb}{0.9, 0.9, 0.0}
\definecolor{lightgrey}{rgb}{0.8, 0.8, 0.8}
\definecolor{MyColor}{rgb}{1, 0, 1}
\newcounter{row}
\newcounter{col}
\newtheorem{example}{Example}
\newtheorem{remark}{Remark}
\newtheorem{definition}{Definition}
\newtheorem{lemma}{Lemma}
\newtheorem{theorem}{Theorem}
\newtheorem{corollary}{Corollary}
\newtheorem{proposition}{Proposition}
\begin{document}

\title{Two-Dimensional Pattern Languages\thanks{This document is a full version (i.\,e., it contains all proofs) of the conference paper \cite{FerSchSub2013}.}}

\author[1]{Henning Fernau}
\author[1]{Markus~L.~Schmid}
\author[2]{K.~G.~Subramanian}

\affil[1]{Fachbereich 4 -- Abteilung Informatikwissenschaften, Universit\"at Trier, 54286 Trier, Germany, \texttt{\{fernau,mschmid\}@uni-trier.de}}
\affil[2]{School of Computer Sciences, Universiti Sains Malaysia, 11800 Penang, Malaysia, \texttt{kgsmani1948@yahoo.com}}

\maketitle

\begin{abstract}
We introduce several classes of array languages obtained by generalising Angluin's pattern languages to the two-dimensional case. These classes of two-dimensional pattern languages are compared with respect to their expressive power and their closure properties are investigated.
\end{abstract}

\section{Introduction}\label{sec:intro}

Several methods of generation of \emph{two-dimensional languages} (also called \emph{array languages} or \emph{picture languages}) 
have been proposed in the literature, extending the techniques and results of formal string language theory. 
A picture is considered as a rectangular array of terminal symbols in the two-dimensional plane. 
Models based on grammars or automata as well as those based on theoretical properties of the string languages are well-known and have been extensively investigated. We refer the interested readers to books and surveys like the ones by  Rosenfeld~\cite{ros:pic}, Wang~\cite{wan:arr}, Rosenfeld~and~Siromoney~\cite{ros:pic2}, Giammarresi~and~Restivo~\cite{gia:two}, or Morita~\cite{mor:two}.  
For example, regular \emph{string} languages (also known as recognizable string languages) can be characterized in terms of local languages and projections.
Based on a similar idea, 
the class REC of recognizable \emph{picture} languages (see Giammarresi~and~Restivo~\cite{gia:rec}) was proposed as a two dimensional counterpart of regular string languages. 
In this work, we attempt to generalise a class of string languages to the two-dimensional case, which also provides several desirable features and  has therefore attracted considerable interest over the last three decades
in the formal language theory community as well as in the learning theory community: Angluin's pattern languages (see \cite{ang:fin2}). \par
In this context, a \emph{pattern} is a string over an alphabet $\{x_1, x_2, x_3, \ldots\}$ of \emph{variables}, e.\,g., $\alpha := x_1\,x_1\,x_2\,x_2\,x_1$. For some finite alphabet $\Sigma$ of \emph{terminal symbols}, the \emph{pattern language} described by $\alpha$ (with respect to $\Sigma$) is the set of all words over $\Sigma$ that can be derived from $\alpha$ by uniformly substituting the variables in $\alpha$ by (non-empty) terminal words. For example, if $\Sigma := \{\ta, \tb, \tc\}$, then $u := \tb \tc \tb\tb \tc \tb\tc \tc \ta\tc \tc \ta\tb \tc \tb$ and $v := \ta  \tb\ta  \tb\ta  \tb  \ta\ta  \tb  \ta\ta  \tb$ are words of the pattern language given by $\alpha$, since replacing $x_1$ by $\tb  \tc  \tb$ and $x_2$ by $\tc  \tc  \ta$ turns $\alpha$ into $u$ and replacing $x_1$ by $\ta  \tb$ and $x_2$ by $\ta  \tb  \ta$ turns $\alpha$ into $v$. On the other hand, the word $\tc \ta \tb \ta \tc \ta \tb \ta \tb \ta \tb \ta$ is not a member of the pattern language of $\alpha$.\par
One of the most notable features of pattern languages is that they have natural and compact human readable descriptors (or generators), namely the patterns. In particular, this advantage becomes evident when patterns are compared to other language descriptors as, e.\,g., grammars or automata, which are usually quite involved even though the language they describe is rather simple. Nevertheless, patterns can compete with common automata models and grammars in terms of expressive power and their practical relevance is demonstrated by the widespread use of so-called extended regular expressions with backreferences, which implicitly use the concept of patterns and are capable of defining all pattern languages.\footnote{In fact, these extended regular expressions with backreferences are nowadays a standard element of most text editors and programming languages (cf. Friedl~\cite{fri:mas}). } \par
The main goal of this paper is to generalise the concept of patterns as language descriptors to the two-dimensional case, while preserving the desirable features of (one-dimensional) pattern languages, i.\,e., the simplicity and compactness of their descriptors. The work done so far on two-dimensional languages demonstrates that there are difficulties that seem to be symptomatic for the task of generalising a class of string languages to the two-dimensional case. Firstly, such a generalisation is usually accompanied with a substantial increase in complexity of the descriptors (e.\,g., when extending context-free or contextual grammars to the two-dimensional case (see Fernau~et~al.~\cite{FerFreHol98a}, Freund~et~al.~\cite{fre:con})) and, secondly, there are often many competing and seemingly different ways to generalise a specific class of string languages, which all can be considered natural (e.\,g., it is still on debate what the appropriate two-dimensional counterpart of the class of regular languages 
might be (see Giammarresi~et~al.~\cite{Giaetal96}, Matz~\cite{mat:rec})). Our two-dimensional patterns, to be introduced in this work, are as simple and compact as their one-dimensional counterparts. Although there are several different possibilities of how these two-dimensional patterns can describe two-dimensional languages, one of these sticks out as the intuitively most natural one. Hence, the model of Angluin's pattern languages seems to be comparatively two-dimensional friendly.\par
Besides the conceptional contribution of this paper, we present a comparison between the expressive power of different classes of two-dimensional pattern languages and an investigation of their closure properties. We conclude the paper by outlining further research questions and possible extensions to the model of two-dimensional pattern languages.

\section{Preliminaries}\label{sec:def}

In this section, we briefly recall the standard definitions and notations regarding one- and two-dimensional words and languages.\par
Let $\mathbb{N} := \{1, 2, 3, \ldots \}$ and let $\mathbb{N}_{0} := \mathbb{N} \cup \{0\}$. For a finite alphabet $\Sigma$, a \emph{string} or \emph{word} (\emph{over $\Sigma$}) is a finite sequence of symbols from $\Sigma$, and $\varepsilon$ stands for the \emph{empty string}. The notation $\Sigma^+$ denotes the set of all nonempty strings over $\Sigma$, and $\Sigma^*:=\Sigma^+ \cup \{ \varepsilon \}$. For the \emph{concatenation} of two strings $w_1, w_2$ we write $w_1 \cdot w_2$ or simply $w_1 w_2$. We say that a string $v \in \Sigma^*$ is a \emph{factor} of a string $w \in \Sigma^*$ if there are $u_1, u_2 \in \Sigma^*$ such that $w = u_1 \cdot v \cdot u_2$. If $u_1$ or $u_2$ is the empty string, then $v$ is a \emph{prefix} (or a \emph{suffix}, respectively) of $w$. The notation $|w|$ stands for the length of a string $w$.\par
A \emph{two-dimensional word} (or \emph{array}) \emph{over $\Sigma$} is a tuple 
\begin{equation*}
W := ((a_{1,1}, a_{1,2}, \ldots, a_{1,n}), (a_{2,1}, a_{2,2}, \ldots, a_{2,n}), \ldots, (a_{m,1}, a_{m,2}, \ldots, a_{m,n}))\,, 
\end{equation*}
where $m, n \in \mathbb{N}$ and, for every $i$, $1 \leq i \leq m$, and $j$, $1 \leq j \leq n$, $a_{i, j} \in \Sigma$. We define the \emph{number of columns} (or \emph{width}) and \emph{number of rows} (or \emph{height}) of $W$ by $|W|_c := n$ and $|W|_r := m$, respectively. The \emph{empty array} is denoted by $\lambda$, i.\,e., $|\lambda|_c = |\lambda|_r = 0$. For the sake of convenience, we also denote $W$ by $[a_{i, j}]_{m, n}$ or by a matrix of one of the following forms:
\begin{equation*}
\begin{smallmatrix}
  a_{1,1} & a_{1,2} & \ldots & a_{1,n} \\
  a_{2,1} & a_{2,2} & \ldots & a_{2,n} \\
  \vdots & \vdots & \ddots & \vdots \\
  a_{m,1} & a_{m,2} & \ldots & a_{m,n} 
\end{smallmatrix}
,\:\:\:\:\:\:
\begin{bsmallmatrix}
  a_{1,1} & a_{1,2} & \ldots & a_{1,n} \\
  a_{2,1} & a_{2,2} & \ldots & a_{2,n} \\
  \vdots & \vdots & \ddots & \vdots \\
  a_{m,1} & a_{m,2} & \ldots & a_{m,n} 
\end{bsmallmatrix}\,.
\end{equation*}
If we want to refer to the $j^{\text{th}}$ symbol in row $i$ of the array $W$, then we use $W[i, j] = a_{i, j}$. By $\Sigma^{++}$, we denote the set of all nonempty arrays over $\Sigma$, and $\Sigma^{**}:=\Sigma^{++} \cup \{ \lambda \}$. Every subset $L \subseteq \Sigma^{**}$ is an \emph{array language}. \par
Let $W := [a_{i, j}]_{m, n}$ and $W' := [a'_{i, j}]_{m', n'}$ be two non-empty arrays over $\Sigma$. 
The \emph{column concatenation} of $W$ and $W'$, denoted by $W \varobar W'$, is undefined if $m \neq m'$ and is the array
\begin{equation*}
\begin{smallmatrix}
  a_{1,1} & a_{1,2} & \ldots & a_{1,n} & b_{1,1} & b_{1,2} & \ldots & b_{1,n'} \\
  a_{2,1} & a_{2,2} & \ldots & a_{2,n} & b_{2,1} & b_{2,2} & \ldots & b_{2,n'}  \\
  \vdots & \vdots & \ddots & \vdots & \vdots & \vdots & \ddots & \vdots \\
  a_{m,1} & a_{m,2} & \ldots & a_{m,n} &  b_{m',1} & b_{m',2} & \ldots & b_{m',n'} 
\end{smallmatrix}
\end{equation*}
otherwise. The \emph{row concatenation} of $W$ and $W'$, denoted by $W \varominus W'$, is undefined if $n \neq n'$ and is the array
\begin{equation*}
\begin{smallmatrix}
  a_{1,1} & a_{1,2} & \ldots & a_{1,n} \\
  a_{2,1} & a_{2,2} & \ldots & a_{2,n} \\
  \vdots & \vdots & \ddots & \vdots \\
  a_{m,1} & a_{m,2} & \ldots & a_{m,n} \\
  b_{1,1} & b_{1,2} & \ldots & b_{1,n'} \\
  b_{2,1} & b_{2,2} & \ldots & b_{2,n'} \\
  \vdots & \vdots & \ddots & \vdots \\
  b_{m',1} & b_{m',2} & \ldots & b_{m',n'} 
\end{smallmatrix}
\end{equation*}
otherwise. Intuitively speaking, the vertical line and the horizontal line in the symbols $\varobar$ and $\varominus$, respectively, indicate the edge where the arrays are concatenated. In order to denote that, e.\,g., $U \varominus V$ is undefined, we also write $U \varominus V = \undef$. 

Furthermore, for every array $U$, $U \varobar \lambda = \lambda \varobar U = U \varominus \lambda = \lambda \varominus U = U$ and 
$U \varobar \undef = \undef \varobar U = U \varominus \undef = \undef \varominus U = \undef$. Algebraically speaking, if $\Sigma^{**}_{\undef}:=\Sigma^{**}\cup\{\undef\}$, then
$(\Sigma^{**}_{\undef},\varobar,\lambda)$ and $(\Sigma^{**}_{\undef},\varominus,\lambda)$ both form monoids with $\undef$ as an absorbing element.

\begin{example}\label{concatExample1}
Let 
\begin{equation*}
W_1 := 
\begin{bsmallmatrix}
  \ta & \tb & \ta \\
  \tb & \tc & \ta \\
  \ta & \tb & \tb \\
 \end{bsmallmatrix}
, W_2 := 
\begin{bsmallmatrix}
  \tb & \tc \\
  \tb & \ta \\
  \tc & \ta
 \end{bsmallmatrix}
, W_3 := 
\begin{bsmallmatrix}
  \ta & \tb & \tc \\
  \tc & \tb & \tb
 \end{bsmallmatrix}
\text{ and } W_4 := 
\begin{bsmallmatrix}
  \ta & \ta \\
  \ta & \tb
 \end{bsmallmatrix}\,.
\end{equation*}
Then $W_1 \varominus W_2$ and $W_1 \varobar W_3 = \undef$, but 
\begin{equation*}
W_1 \varobar W_2 = 
\begin{bsmallmatrix}
\ta & \tb & \ta & \tb & \tc \\
  \tb & \tc & \ta &  \tb & \ta \\
  \ta & \tb & \tb &  \tc & \ta
 \end{bsmallmatrix}
\text{ and } W_1 \varominus W_3 =
\begin{bsmallmatrix}
  \ta & \tb & \ta \\
  \tb & \tc & \ta \\
  \ta & \tb & \tb \\
  \ta & \tb & \tc \\
  \tc & \tb & \tb 
\end{bsmallmatrix}\,.
\end{equation*}
\end{example}

\begin{example}
Let $W_1, W_2, W_3$ and $W_4$ be defined as in Example~\ref{concatExample1}.
\begin{align*}
&(W_1 \varobar W_2) \varominus (W_3 \varobar W_4) = (W_1 \varominus W_3) \varobar (W_2 \varominus W_4) =
\begin{bsmallmatrix}
  \ta & \tb & \ta & \tb & \tc \\
  \tb & \tc & \ta & \tb & \ta \\
  \ta & \tb & \tb & \tc & \ta \\
  \ta & \tb & \tc & \ta & \ta \\
  \tc & \tb & \tb & \ta & \tb
 \end{bsmallmatrix}\,,\\
&(W_1 \varobar W_2) \varominus (W_4 \varobar W_3) = 
\begin{bsmallmatrix}
  \ta & \tb & \ta & \tb & \tc \\
  \tb & \tc & \ta & \tb & \ta \\
  \ta & \tb & \tb & \tc & \ta \\
  \ta & \ta & \ta & \tb & \tc \\
  \ta & \tb & \tc & \tb & \tb
 \end{bsmallmatrix}\,,\:\:
(W_1 \varominus W_4) \varobar (W_2 \varominus W_3) = \undef\,.
\end{align*}
\end{example}

Next, we define some operations for array languages. The row and column concatenation for array languages $L_1$ and $L_2$ is defined by $L_1 \varominus L_2 := \{U \varominus V \mid U \in L_1, V \in L_2, U \varominus V \neq \undef\}$ and $L_1 \varobar L_2 := \{U \varobar V \mid U \in L_1, V \in L_2, U \varobar V \neq \undef\}$, respectively. For an array language $L$ and $k \in \mathbb{N}$, $L^{\varominus k}$ denotes the $k$-fold row concatenation of $L$, i.\,e., $L^{\varominus k} := L_1 \varominus L_2 \varominus \ldots \varominus L_k$, $L_i = L$, $1 \leq i \leq k$. The $k$-fold column concatenation, denoted by $L^{\varobar k}$, is defined analogously. The \emph{row} and \emph{column concatenation closure} of an array language $L$ is defined by $L^{\varominus *} := \bigcup_{k = 1}^{\infty} L^{\varominus k}$ and $L^{\varobar *} := \bigcup_{k = 1}^{\infty} L^{\varobar k}$, respectively. Obviously, the row and column concatenation closure of an array language correspond to the Kleene closure of a string language. \par
Now, we turn our attention to some geometric operations for arrays. The \emph{transposition} of an array $U$, denoted by $\transpose{U}$, is obtained by reflecting $U$ along the main diagonal. 
The \emph{$\varominus$-reflection} and \emph{$\varobar$-reflection} of $U$, denoted by $U^{\varominus\texttt{R}}$ and $U^{\varobar\texttt{R}}$, respectively, 
are obtained by reflecting $U$ along the horizontal and vertical axis, respectively. 
The \emph{right turn} and \emph{left turn} of $U$, denoted by $U^{\curvearrowright}$ and $U^{\curvearrowleft}$, respectively, is obtained by turning $U$ through $90$ degrees to the right and to the left, respectively. For example, if $U := 
\begin{bsmallmatrix}
  \ta & \tb & \tc & \td \\
  \te & \tf & \tg & \tth 
\end{bsmallmatrix}$, 
then 
\begin{align*}
\transpose{U} =
\begin{bsmallmatrix}
  \ta & \te \\
  \tb & \tf \\
  \tc & \tg \\
  \td & \tth \\
\end{bsmallmatrix}\,,
U^{\varominus\texttt{R}} =
\begin{bsmallmatrix}
  \te & \tf & \tg & \tth \\
  \ta & \tb & \tc & \td \\
\end{bsmallmatrix}\,,
U^{\varobar\texttt{R}} =
\begin{bsmallmatrix}
  \td & \tc & \tb & \ta \\
  \tth & \tg & \tf & \te \\
\end{bsmallmatrix}\,,
U^{\curvearrowright} =
\begin{bsmallmatrix}
  \te & \ta \\
  \tf & \tb \\
  \tg & \tc \\
  \tth & \td \\
\end{bsmallmatrix}\,,
U^{\curvearrowleft} =
\begin{bsmallmatrix}
  \td & \tth \\
  \tc & \tg \\
  \tb & \tf \\
  \ta & \te \\
\end{bsmallmatrix}\,,
(U^{\curvearrowright})^{\curvearrowright} =
\begin{bsmallmatrix}
  \tth & \tg & \tf & \te \\
  \td & \tc & \tb & \ta 
\end{bsmallmatrix}\,.
\end{align*}
We address left and right turn also as \emph{quarter-turns} below.
Moreover, the twofold right turn (displayed right-most in the example above) 
is also known as a \emph{half-turn}. 
\par 
A special operation considered in the context of arrays is the \emph{conjugation} of an array $U \in \Sigma^{**}$ with $|\Sigma| = 2$, denoted by $U^{\texttt{C}}$, which means that the two symbols of $\Sigma$ are exchanged in $U$, e.\,g., 
$\left(\begin{bsmallmatrix}
  \ta & \tb & \ta & \ta \\
  \tb & \ta & \ta & \tb \\
\end{bsmallmatrix}\right)^{\texttt{C}} = 
\begin{bsmallmatrix}
  \tb & \ta & \tb & \tb \\
  \ta & \tb & \tb & \ta \\
\end{bsmallmatrix}$. 
This can be also viewed as a quite restricted form of a two-dimensional morphism defined in the next section.\par 
All these operations for arrays are extended to array languages in the obvious way. \par
Next, we briefly summarise the concept of (one-dimensional) pattern languages as introduced in \cite{ang:fin2} by Angluin. Technically, the version of pattern languages used here are called \emph{nonerasing terminal-free} pattern languages (for an overview of different versions of one-dimensional pattern languages, the reader is referred to \cite{mat:pat} by Mateescu and Salomaa).\par
A (one-dimensional) \emph{pattern} is a string over an alphabet $X := \{x_1, x_2, x_3, \ldots\}$ of \emph{variables}, e.\,g., $\alpha := x_1\,x_1\,x_2\,x_2\,x_1$. In Section~\ref{sec:intro}, we have seen an intuitive definition of the language described by a pattern $\alpha$. This intuition can be formalised in an elegant way by using the concept of (\emph{word}) \emph{morphisms}, i.\,e., mappings $h : \Sigma_1^+ \rightarrow \Sigma_2^+$, which satisfy $h(u \, v) = h(u)\,h(v)$, for all $u, v \in \Sigma_1^+$. In this regard, for some finite alphabet $\Sigma$, the (one-dimensional) \emph{pattern language} of $\alpha$ (with respect to $\Sigma$) is the set $L^{\oneD}_{\Sigma}(\alpha) := \{h(\alpha) \mid h : X^+ \rightarrow \Sigma^+ \text{ is a morphism}\}$. An alternative, yet equivalent, way to define pattern languages is by means of factorisations. To this end, let $\alpha := y_1\,y_2 \ldots y_n$, $y_i \in X$, $1 \leq i \leq n$. Then $L^{\oneD}_{\Sigma}(\alpha)$ is the set of all words $w \in \Sigma^+$ that have a \emph{characteristic} factorisation for $\alpha$, i.\,e., a factorisation $w = u_1\,u_2 \cdots u_n$, such that, for every $i$, $1 \leq i \leq j \leq n$, $y_i = y_j$ implies $u_i = u_j$. It can be easily seen, that these two definitions are equivalent. However, for the two-dimensional case, we shall see that a generalisation of these two approaches will lead to different versions of two-dimensional pattern languages. The class of all one-dimensional pattern languages over the alphabet $\Sigma$ is denoted by $\mathcal{L}^{\oneD}_{\Sigma}$. 
We recall the example pattern $\alpha = x_1\,x_1\,x_2\,x_2\,x_1$ and the words $u := \tb \tc \tb\tb \tc \tb\tc \tc \ta\tc \tc \ta\tb \tc \tb$ and $v := \ta  \tb\ta  \tb\ta  \tb  \ta\ta  \tb  \ta\ta  \tb$ of Section~\ref{sec:intro} Since $h(\alpha) = u$ and $g(\alpha) = v$, where $h$ and $g$ are the morphisms induced by $h(x_1) := \tb  \tc  \tb$, $h(x_2) :=  \tc  \tc  \ta$ and $g(x_1) := \ta  \tb$, $g(x_2) := \ta  \tb  \ta$, we can conclude that $u, v \in L^{\oneD}_{\Sigma}(\alpha)$, where $\Sigma := \{\ta, \tb, \tc\}$.

\section{Two-Dimensional Pattern Languages}\label{sec:2DPatternLanguages}

As already mentioned, this work deals with the task of generalising pattern languages from the one-dimensional to the two-dimensional case. In order to motivate our approach to solve this task, we first spent some effort on illustrating the general difficulties and obstacles that arise.\par
Abstractly speaking, a pattern language for a given pattern $\alpha$ is the collection of all elements that satisfy $\alpha$. Thus, a sound definition of how elements satisfy patterns directly entails a sound definition of a class of pattern languages. In the one-dimensional case, the situation that a word satisfies a pattern is intuitively clear and it can be defined in several equivalent ways, i.\,e., a word $w$ satisfies the pattern $\alpha$ if and only if 
\begin{itemize}
\item $w$ can be derived from $\alpha$ by uniformly substituting the variables in $\alpha$,
\item $w$ is a morphic image of $\alpha$,
\item $w$ has a characteristic factorisation for $\alpha$.
\end{itemize}
We shall now demonstrate that for a two-dimensional pattern, i.\,e., a two-dimensional word over the set of variables $X$, e.\,g., $\alpha := 
\begin{bsmallmatrix}
x_1&x_2\\
x_3&x_1
\end{bsmallmatrix}$,
these concepts do not work anymore or they describe fundamentally different situations. For instance, the basic operation of substituting a single symbol in a word by another word cannot that easily be extended to the two-dimensional case. For example, the replacements $x_1 \mapsto \begin{bsmallmatrix}\ta&\ta\end{bsmallmatrix}, x_2 \mapsto \begin{bsmallmatrix}\tc\\\tc\end{bsmallmatrix}$ and $x_3 \mapsto \begin{bsmallmatrix}\tb\end{bsmallmatrix}$ may turn $\alpha$ into one of the following objects,
\begin{equation*}
\begin{bsmallmatrix}
&&\tc&\\
\ta&\ta&\tc&\\
&\tb&\ta&\ta
\end{bsmallmatrix} ,
\begin{bsmallmatrix}
\ta&\ta&\tc\\
&&\tc\\
\tb&\ta&\ta
\end{bsmallmatrix}, 
\begin{bsmallmatrix}
&&\tc\\
\ta&\ta&\tc\\
\tb&\ta&\ta
\end{bsmallmatrix},
\begin{bsmallmatrix}
&&&\tc\\
\ta&\ta&&\tc\\
&\tb&\ta&\ta
\end{bsmallmatrix},
\end{equation*}
which are not two-dimensional words, since they all contain holes or are not of rectangular shape and, most importantly, are not uniquely defined. On the other hand, it is straightforward to generalise the concept of a morphism to the two-dimensional case:

\begin{definition}\label{homoDef}
A mapping $h: \Sigma_1^{++} \rightarrow \Sigma_2^{++}$ is a two-dimensional morphism if it satisfies $h(V \varobar W) = h(V) \varobar h(W)$ and $h(V \varominus W)=h(V) \varominus h(W)$ for all $V, W \in \Sigma_1^{++}$. 
\end{definition}

Hence, we may say that a two-dimensional word $W$ satisfies a two-dimensional pattern $\alpha$ if and only if there exists a two-dimensional morphism which maps $\alpha$ to $W$. Unfortunately, this definition seems to be too strong as demonstrated by the following example. From an intuitive point of view, the two-dimensional word
$W := \begin{bsmallmatrix}
\ta& \ta& \tb& \ta& \ta& \tb\\
\ta& \ta& \tb& \ta& \ta& \tb\\
\tc& \tc& \tc&\tc&\tc&\tc
\end{bsmallmatrix}$
should satisfy the two-dimensional pattern
$\alpha := \begin{bsmallmatrix}
x_1 & x_1\\
x_2 & x_2
\end{bsmallmatrix}$,
but there is no two-dimensional morphism mapping $\alpha$ to $W$. This is due to the fact that, as pointed out by the following proposition (which has also been mentioned by Siromoney~et~al. in \cite{sir:pic}), a two-dimensional morphism is a mapping with a surprisingly strong condition.

\begin{proposition}\label{morphismProp}
Let $\Sigma_1 := \{a_1, a_2, \ldots, a_k\}$ and $\Sigma_2$ be alphabets. If a mapping $h : \Sigma_1^{++} \rightarrow \Sigma_2^{++}$ is a two-dimensional morphism, then
\begin{equation*}
|h(\begin{bsmallmatrix} a_1 \end{bsmallmatrix})|_c = |h(\begin{bsmallmatrix} a_2 \end{bsmallmatrix})|_c = \ldots = |h(\begin{bsmallmatrix} a_k \end{bsmallmatrix})|_c \text{ and } |h(\begin{bsmallmatrix} a_1 \end{bsmallmatrix})|_r = |h(\begin{bsmallmatrix} a_2 \end{bsmallmatrix})|_r = \ldots = |h(\begin{bsmallmatrix} a_k \end{bsmallmatrix})|_r\,.
\end{equation*}
\end{proposition}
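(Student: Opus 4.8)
The plan is to show that a two-dimensional morphism forces all single-symbol images to share a common width and a common height, by exploiting the fact that column concatenation requires equal height and row concatenation requires equal width. The key observation is that for any two symbols $a_i, a_j \in \Sigma_1$, both $\begin{bsmallmatrix} a_i \end{bsmallmatrix} \varobar \begin{bsmallmatrix} a_j \end{bsmallmatrix}$ and $\begin{bsmallmatrix} a_i \end{bsmallmatrix} \varominus \begin{bsmallmatrix} a_j \end{bsmallmatrix}$ are defined single-row and single-column arrays, respectively, so a valid morphism must produce defined images of these as well.

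First I would fix two arbitrary symbols $a_i, a_j$ and consider the array $\begin{bsmallmatrix} a_i \end{bsmallmatrix} \varobar \begin{bsmallmatrix} a_j \end{bsmallmatrix}$, which is defined since both factors have height $1$. Applying the morphism property gives $h(\begin{bsmallmatrix} a_i \end{bsmallmatrix} \varobar \begin{bsmallmatrix} a_j \end{bsmallmatrix}) = h(\begin{bsmallmatrix} a_i \end{bsmallmatrix}) \varobar h(\begin{bsmallmatrix} a_j \end{bsmallmatrix})$, and since $h$ maps into $\Sigma_2^{++}$ the right-hand side must be a genuine array (not $\undef$). Because column concatenation is undefined whenever the two operands differ in height, this forces $|h(\begin{bsmallmatrix} a_i \end{bsmallmatrix})|_r = |h(\begin{bsmallmatrix} a_j \end{bsmallmatrix})|_r$. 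Symmetrically, applying $h$ to the defined array $\begin{bsmallmatrix} a_i \end{bsmallmatrix} \varominus \begin{bsmallmatrix} a_j \end{bsmallmatrix}$ yields a row concatenation $h(\begin{bsmallmatrix} a_i \end{bsmallmatrix}) \varominus h(\begin{bsmallmatrix} a_j \end{bsmallmatrix})$ that must be defined, which forces $|h(\begin{bsmallmatrix} a_i \end{bsmallmatrix})|_c = |h(\begin{bsmallmatrix} a_j \end{bsmallmatrix})|_c$.

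Since $a_i$ and $a_j$ were arbitrary, both equalities hold for every pair of symbols, which immediately chains into the two asserted strings of equalities. The argument is essentially a single application of the definition in each of the two concatenation directions, so there is little in the way of a genuine obstacle; the only subtlety worth stating explicitly is that the codomain $\Sigma_2^{++}$ contains no undefined element, so the morphism equations are only meaningful when both sides are defined, and it is precisely the defined-ness of the left-hand side (guaranteed by $h$ mapping into $\Sigma_2^{++}$) that rules out a height or width mismatch on the right. I would close by noting that this explains the failure in the motivating example: any morphism would have to assign the same dimensions to the images of all symbols, which is incompatible with the different block sizes demanded there.
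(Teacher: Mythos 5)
Your proof is correct and follows essentially the same argument as the paper: both exploit that $h(\begin{bsmallmatrix} a_i \end{bsmallmatrix} \varobar \begin{bsmallmatrix} a_j \end{bsmallmatrix})$ and $h(\begin{bsmallmatrix} a_i \end{bsmallmatrix} \varominus \begin{bsmallmatrix} a_j \end{bsmallmatrix})$ must lie in $\Sigma_2^{++}$ while the corresponding concatenations of images would be $\undef$ under a height or width mismatch. The only difference is presentational -- the paper argues by contraposition while you argue directly -- and this has no mathematical significance.
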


\begin{proof}
We prove the statement of the proposition by contraposition. To this end, we assume that, for some $i, j$, $1 \leq i < j \leq k$, $|h(\begin{bsmallmatrix} a_i \end{bsmallmatrix})|_r \neq |h(\begin{bsmallmatrix} a_j \end{bsmallmatrix})|_r$, which implies $h(\begin{bsmallmatrix} a_i \end{bsmallmatrix}) \varobar h(\begin{bsmallmatrix} a_j \end{bsmallmatrix}) = \undef$. Hence, since $h(\begin{bsmallmatrix} a_i \end{bsmallmatrix} \varobar \begin{bsmallmatrix} a_j \end{bsmallmatrix}) = h(\begin{bsmallmatrix} a_i & a_j \end{bsmallmatrix}) \in \Sigma_2^{++}$, we can conclude that $h(\begin{bsmallmatrix} a_i \end{bsmallmatrix} \varobar \begin{bsmallmatrix} a_j \end{bsmallmatrix}) \neq h(\begin{bsmallmatrix} a_i \end{bsmallmatrix}) \varobar h(\begin{bsmallmatrix} a_j \end{bsmallmatrix})$, which contradicts the morphism property. Similarly, if $|h(\begin{bsmallmatrix} a_i \end{bsmallmatrix})|_c \neq |h(\begin{bsmallmatrix} a_j \end{bsmallmatrix})|_c$, then $h(\begin{bsmallmatrix} a_i \end{bsmallmatrix} \varominus 
\begin{bsmallmatrix} a_j \end{bsmallmatrix}) \neq h(\begin{bsmallmatrix} a_i \end{bsmallmatrix}) \varominus h(\begin{bsmallmatrix} a_j \end{bsmallmatrix})$.
\end{proof}

Similarly as in the string case, homomorphisms $h: \Sigma_1^{++} \rightarrow \Sigma_2^{++}$ are uniquely defined by giving the images $h(\Sigma_1)$.
If in particular $h(\Sigma_1)\subseteq\Sigma_2$, we term the resulting
morphism a \emph{letter-to-letter morphism}, while in the even more restricted
case when the restriction $h_{\Sigma_1}$ of $h$ to $\Sigma_1$ yields
a surjective mapping $h_{\Sigma_1}:\Sigma_1\to\Sigma_2$, $h$ is referred to as
a \emph{projection}.
We can conclude that the existence of a two-dimensional morphism seems to be a reasonable sufficient criterion for the situation that a two-dimensional word satisfies a two-dimensional pattern, but not a necessary one. \par
In fact, it turns out that characteristic factorisations provide the most promising approach to formalise how a two-dimensional word satisfies a two-dimensional pattern. Recall the example pattern 
$\alpha = \begin{bsmallmatrix}
x_1 & x_1\\
x_2 & x_2
\end{bsmallmatrix}$
from above. Since $\alpha = (\begin{bsmallmatrix} x_1 \end{bsmallmatrix} \varobar \begin{bsmallmatrix} x_1 \end{bsmallmatrix}) \varominus (\begin{bsmallmatrix} x_2 \end{bsmallmatrix} \varobar \begin{bsmallmatrix} x_2 \end{bsmallmatrix})$, a characteristic factorisation of a two-dimensional word $U$ for $\alpha$ is a factorisation of the form $U = (V_1 \varobar V_1) \varominus (V_2 \varobar V_2)$. On the other hand, since $\alpha = (\begin{bsmallmatrix} x_1 \end{bsmallmatrix} \varominus \begin{bsmallmatrix} x_2 \end{bsmallmatrix}) \varobar (\begin{bsmallmatrix} x_1 \end{bsmallmatrix} \varominus \begin{bsmallmatrix} x_2 \end{bsmallmatrix})$, we could as well regard a factorisation $U = (V_1 \varominus V_2) \varobar (V_1 \varominus V_2)$ as characteristic for $\alpha$. For the sake of convenience, we say that the former factorisation is of \emph{column-row type} and the latter one is of \emph{row-column type}. Obviously, the two-dimensional word $W$ from above has a characteristic factorisation of column-row 
type and a characteristic factorisation of row-column type (with respect to $\alpha$): 
\begin{align*}
&W = (V_1 \varobar V_1) \varominus (V_2 \varobar V_2) = (\begin{bsmallmatrix} \ta&\ta&\tb \\ \ta&\ta&\tb \end{bsmallmatrix} \varobar \begin{bsmallmatrix} \ta&\ta&\tb \\ \ta&\ta&\tb \end{bsmallmatrix}) \varominus (\begin{bsmallmatrix} \tc&\tc&\tc \end{bsmallmatrix} \varobar \begin{bsmallmatrix} \tc&\tc&\tc \end{bsmallmatrix}) = 
\begin{bsmallmatrix}
\ta& \ta& \tb& \ta& \ta& \tb\\
\ta& \ta& \tb& \ta& \ta& \tb\\
\tc& \tc& \tc&\tc&\tc&\tc
\end{bsmallmatrix}\,,\\
&W = (V_1 \varominus V_2) \varobar (V_1 \varominus V_2) = (\begin{bsmallmatrix} \ta&\ta&\tb \\ \ta&\ta&\tb \end{bsmallmatrix} \varominus \begin{bsmallmatrix} \tc&\tc&\tc \end{bsmallmatrix}) \varobar (\begin{bsmallmatrix} \ta&\ta&\tb \\ \ta&\ta&\tb \end{bsmallmatrix} \varominus \begin{bsmallmatrix} \tc&\tc&\tc \end{bsmallmatrix}) = 
\begin{bsmallmatrix}
\ta& \ta& \tb& \ta& \ta& \tb\\
\ta& \ta& \tb& \ta& \ta& \tb\\
\tc& \tc& \tc&\tc&\tc&\tc
\end{bsmallmatrix}\,.
\end{align*}
As a matter of fact, for every two-dimensional word $U$ there exists a characteristic factorisation for $\alpha = \begin{bsmallmatrix}
x_1 & x_1\\
x_2 & x_2
\end{bsmallmatrix}$ of column-row type if and only if there exists a characteristic factorisation for $\alpha$ of row-column type. However, this is a particularity of $\alpha$ and, e.\,g., for 
$\alpha' = \begin{bsmallmatrix}
x_1 & x_2 & x_3\\
x_2 & x_3 & x_1
\end{bsmallmatrix}$ and 
$W' := \begin{bsmallmatrix}
\ta& \ta& \ta& \tb& \tc\\
\tb& \tc&\ta& \ta& \ta\\
\end{bsmallmatrix}$,
there exists a characteristic factorisation of column-row type $W' = (V_1 \varobar V_2 \varobar V_3) \varominus (V_2 \varobar V_3 \varobar V_1) = (\begin{bsmallmatrix} \ta& \ta& \ta \end{bsmallmatrix} \varobar \begin{bsmallmatrix} \tb \end{bsmallmatrix} \varobar \begin{bsmallmatrix} \tc \end{bsmallmatrix}) \varominus (\begin{bsmallmatrix} \tb \end{bsmallmatrix} \varobar \begin{bsmallmatrix} \tc \end{bsmallmatrix} \varobar \begin{bsmallmatrix} \ta& \ta& \ta \end{bsmallmatrix})$, but no characteristic factorisation of row-column type. Furthermore, the column-row factorisation of $W'$ is somewhat at odds with our intuitive understanding of what it means that a two-dimensional word satisfies a two-dimensional pattern. This is due to the fact that factorising $W'$ into $(\begin{bsmallmatrix} \ta& \ta& \ta \end{bsmallmatrix} \varobar \begin{bsmallmatrix} \tb \end{bsmallmatrix} \varobar \begin{bsmallmatrix} \tc \end{bsmallmatrix}) \varominus (\begin{bsmallmatrix} \tb \end{bsmallmatrix} \varobar \begin{bsmallmatrix} 
\tc \end{bsmallmatrix} \varobar \begin{bsmallmatrix} \ta& \ta& \ta \end{bsmallmatrix})$ means that we associate the two-dimensional factors $\begin{bsmallmatrix} \ta& \ta& \ta \end{bsmallmatrix}$, $\begin{bsmallmatrix} \tb \end{bsmallmatrix}$ and $\begin{bsmallmatrix} \tc \end{bsmallmatrix}$ with the variables $x_1$, $x_2$ and $x_3$, respectively, but in the pattern $\alpha'$ the vertical neighbourship relation between the occurrence of $x_2$ in the first row and the occurrence of $x_3$ in the second row is not preserved in $W'$ with respect to the corresponding two-dimensional factors $\begin{bsmallmatrix} \tb \end{bsmallmatrix}$ and $\begin{bsmallmatrix} \tc \end{bsmallmatrix}$. More precisely, while a column-row factorisation preserves the horizontal neighbourship relation of the variables, it may violate their vertical neighbourship relation, where for row-column factorisations it is the other way around. Consequently, if we want both the vertical as well as the horizontal neighbourship relation to be 
preserved, we should require that the two-dimensional word $U$ can be disassembled into two-dimensional factors that induce both a column-row as well as a row-column factorisation. More precisely, we say that $U$ satisfies 
$\alpha' = \begin{bsmallmatrix}
x_1 & x_2 & x_3\\
x_2 & x_3 & x_1
\end{bsmallmatrix}$
if and only if there exist two-dimensional words $V_1, V_2$ and $V_3$, such that $U = (V_1 \varobar V_2 \varobar V_3) \varominus (V_2 \varobar V_3 \varobar V_1) = (V_1 \varominus V_2) \varobar (V_2 \varominus V_3) \varobar (V_3 \varominus V_1)$, which we call a \emph{proper} characteristic factorisation of $U$. \par
We are now ready to formalise the ideas developed so far and we can finally give a sound definition of two-dimensional pattern languages. Although we consider the class of two-dimensional pattern languages that results from the proper characteristic factorisations as the natural two-dimensional counterpart of the class of one-dimensional pattern languages, we shall also define the other classes of two-dimensional pattern languages which were sketched above. \par
For the definition of two-dimensional patterns, we use the same set of variables $X$ that has already been used in the definition of one-dimensional pattern languages. An \emph{array pattern} is a non-empty two-dimensional word over $X$ and a \emph{terminal array} is a non-empty two-dimensional word over a \emph{terminal} alphabet $\Sigma$. If it is clear from the context that we are concerned with array patterns and terminal arrays, then we simply say pattern and array, respectively. 
Any mapping $h : X \rightarrow \Sigma^{++}$ is called a \emph{substitution}. For any substitution $h$, by $h_{\varobar, \varominus}$, we denote the mapping $X^{++} \rightarrow \Sigma^{++}$ defined in the following way. For any $\alpha := [y_{i, j}]_{m, n} \in X^{++}$, we define 
\begin{align*}
h_{\varobar, \varominus}(\alpha) :=\:&(h(y_{1,1}) \varobar h(y_{1,2}) \varobar \ldots \varobar h(y_{1,n})) \varominus\\
&(h(y_{2,1}) \varobar h(y_{2,2}) \varobar \ldots \varobar h(y_{2,n})) \varominus \ldots \varominus\\
&(h(y_{m,1}) \varobar h(y_{m,2}) \varobar \ldots \varobar h(y_{m,n}))\,.
\end{align*}
Similarly, $h_{\varominus, \varobar} : X^{++} \rightarrow \Sigma^{++}$ is defined by 
\begin{align*}
h_{\varominus, \varobar}(\alpha) :=\:&(h(y_{1,1}) \varominus h(y_{2,1}) \varominus \ldots \varominus h(y_{m,1})) \varobar\\
&(h(y_{1,2}) \varominus h(y_{2,2}) \varominus \ldots \varominus h(y_{m,2})) \varobar \ldots \varobar\\
&(h(y_{1,n}) \varominus h(y_{2,n}) \varominus \ldots \varominus h(y_{m,n}))\,.
\end{align*}
Intuitively speaking, both mappings $h_{\varobar, \varominus}$ and $h_{\varobar, \varominus}$, when applied to an array pattern $\alpha$, first substitute every variable occurrence of $\alpha$ by a terminal array according to the substitution $h$ and then these $m \times n$ individual terminal arrays are assembled to one terminal array by either first column-concatenating all the $n$ terminal arrays in every individual row and then row-concatenating the resulting $m$ terminal arrays, or by first row-concatenating all the $m$ terminal arrays in every individual column and then column-concatenating the resulting $n$ terminal arrays.\par
Let $\alpha \in X^{++}$, $W \in \Sigma^{++}$ and let $h : X \rightarrow \Sigma^{++}$. The array $W$ is a (1) \emph{column-row image} of $\alpha$ (\emph{with respect to $h$}), (2) a \emph{row-column image} of $\alpha$ (\emph{with respect to $h$}) or (3) a \emph{proper image} of $\alpha$ (\emph{with respect to $h$}) if and only if (1) $h_{\varobar, \varominus}(\alpha) = W$, (2) $h_{\varominus, \varobar}(\alpha) = W$ or (3) $h_{\varobar, \varominus}(\alpha) = h_{\varominus, \varobar}(\alpha) = W$, respectively. The mapping $h$ is called a \emph{column-row substitution for $\alpha$ and $W$}, a \emph{row-column substitution for $\alpha$ and $W$} or a \emph{proper substitution for $\alpha$ and $W$}, respectively. We say that $W$ is a column-row, a row-column or a proper image of $\alpha$ if there exists a column-row, a row-column or a proper substitution, respectively, for $\alpha$ and $W$. \par
A nice and intuitive way to interpret the different kinds of images of array 
patterns is to imagine a grid to be placed over the terminal array. 
The vertical lines of the grid represent a column concatenation and the 
horizontal lines of the grid represent a row concatenation of the corresponding factorisation. 
This means that every rectangular area of the grid corresponds to an occurrence 
of a variable $x$ in the array pattern or, more precisely, 
to the array $h(x)$ substituted for $x$. The fact that an array satisfies a pattern 
is then represented by the situation that each two rectangular areas of the 
grid that correspond to occurrences of the same variable must have identical 
content. In Figure~\ref{fig-illustration}, an example for each a morphic image, 
a proper image, a column-row image and a row-column image of a $5 \times 4$ 
pattern is represented in this illustrative way.

\begin{figure}[tb]
\begin{center}
\begin{tikzpicture}[scale=.4]
\coordinate (coord0) at (0,0);
\coordinate (xshift) at (8.5,0);
\coordinate (width) at (7.5,0);
\coordinate (height) at (0,12.5);

\draw ($(0,0)$) -> ($(0,0) + (width)$);
\draw ($(0,0) + (width)$) -> ($(0,0) + (width) - (height)$);
\draw ($(0,0) + (width) - (height)$) -> ($(0,0) - (height)$);
\draw ($(0,0) - (height)$) -> ($(0,0)$);

\draw ($(0,0) + (1.875,0)$) -> ($(0,0) + (1.875,0) - (height)$);
\draw ($(0,0) + (3.75,0)$) -> ($(0,0) + (3.75,0) - (height)$);
\draw ($(0,0) + (5.625,0)$) -> ($(0,0) + (5.625,0) - (height)$);

\draw ($(0,0) - (0, 2.5)$) -> ($(0,0) - (0, 2.5) + (width)$);
\draw ($(0,0) - (0, 5)$) -> ($(0,0) - (0, 5) + (width)$);
\draw ($(0,0) - (0, 7.5)$) -> ($(0,0) - (0, 7.5) + (width)$);
\draw ($(0,0) - (0, 10)$) -> ($(0,0) - (0, 10) + (width)$);


\draw ($(0,0) + (xshift)$) -> ($(0,0) + (width) + (xshift)$);
\draw ($(0,0) + (width) + (xshift)$) -> ($(0,0) + (width) - (height) + (xshift)$);
\draw ($(0,0) + (width) - (height) + (xshift)$) -> ($(0,0) - (height) + (xshift)$);
\draw ($(0,0) - (height) + (xshift)$) -> ($(0,0) + (xshift)$);

\draw ($(0,0) + (2,0) + (xshift)$) -> ($(0,0) + (2,0) - (height) + (xshift)$);
\draw ($(0,0) + (3,0) + (xshift)$) -> ($(0,0) + (3,0) - (height) + (xshift)$);
\draw ($(0,0) + (4.5,0) + (xshift)$) -> ($(0,0) + (4.5,0) - (height) + (xshift)$);

\draw ($(0,0) - (0, 2)+ (xshift)$) -> ($(0,0) - (0, 2) + (width)+ (xshift)$);
\draw ($(0,0) - (0, 3.5)+ (xshift)$) -> ($(0,0) - (0, 3.5) + (width)+ (xshift)$);
\draw ($(0,0) - (0, 8)+ (xshift)$) -> ($(0,0) - (0, 8) + (width)+ (xshift)$);
\draw ($(0,0) - (0, 11)+ (xshift)$) -> ($(0,0) - (0, 11) + (width)+ (xshift)$);


\draw ($(0,0) + 2*(xshift)$) -> ($(0,0) + (width) + 2*(xshift)$);
\draw ($(0,0) + (width) + 2*(xshift)$) -> ($(0,0) + (width) - (height) + 2*(xshift)$);
\draw ($(0,0) + (width) - (height) + 2*(xshift)$) -> ($(0,0) - (height) + 2*(xshift)$);
\draw ($(0,0) - (height) + 2*(xshift)$) -> ($(0,0) + 2*(xshift)$);

\draw ($(0,0) + (2.5,0) + 2*(xshift)$) -> ($(0,0) + (2.5,0) - (0, 3) + 2*(xshift)$);
\draw ($(0,0) + (5,0) + 2*(xshift)$) -> ($(0,0) + (5,0) - (0, 3) + 2*(xshift)$);
\draw ($(0,0) + (5.5,0) + 2*(xshift)$) -> ($(0,0) + (5.5,0) - (0, 3) + 2*(xshift)$);

\draw ($(0,0) + (1,0) - (0, 3) + 2*(xshift)$) -> ($(0,0) + (1,0) - (0, 5) + 2*(xshift)$);
\draw ($(0,0) + (3.5,0) - (0, 3) + 2*(xshift)$) -> ($(0,0) + (3.5,0) - (0, 5) + 2*(xshift)$);
\draw ($(0,0) + (6,0) - (0, 3) + 2*(xshift)$) -> ($(0,0) + (6,0) - (0, 5) + 2*(xshift)$);

\draw ($(0,0) + (2,0) - (0, 5) + 2*(xshift)$) -> ($(0,0) + (2,0) - (0, 6.5) + 2*(xshift)$);
\draw ($(0,0) + (3.5,0) - (0, 5) + 2*(xshift)$) -> ($(0,0) + (3.5,0) - (0, 6.5) + 2*(xshift)$);
\draw ($(0,0) + (5,0) - (0, 5) + 2*(xshift)$) -> ($(0,0) + (5,0) - (0, 6.5) + 2*(xshift)$);

\draw ($(0,0) + (2.5,0) - (0, 6.5) + 2*(xshift)$) -> ($(0,0) + (2.5,0) - (0, 9.5) + 2*(xshift)$);
\draw ($(0,0) + (5,0) - (0, 6.5) + 2*(xshift)$) -> ($(0,0) + (5,0) - (0, 9.5) + 2*(xshift)$);
\draw ($(0,0) + (6.5,0) - (0, 6.5) + 2*(xshift)$) -> ($(0,0) + (6.5,0) - (0, 9.5) + 2*(xshift)$);

\draw ($(0,0) + (1,0) - (0, 9.5) + 2*(xshift)$) -> ($(0,0) + (1,0) - (height) + 2*(xshift)$);
\draw ($(0,0) + (4,0) - (0, 9.5) + 2*(xshift)$) -> ($(0,0) + (4,0) - (height) + 2*(xshift)$);
\draw ($(0,0) + (6,0) - (0, 9.5) + 2*(xshift)$) -> ($(0,0) + (6,0) - (height) + 2*(xshift)$);

\draw ($(0,0) - (0, 3)+ 2*(xshift)$) -> ($(0,0) - (0, 3) + (width)+ 2*(xshift)$);
\draw ($(0,0) - (0, 5)+ 2*(xshift)$) -> ($(0,0) - (0, 5) + (width)+ 2*(xshift)$);
\draw ($(0,0) - (0, 6.5)+ 2*(xshift)$) -> ($(0,0) - (0, 6.5) + (width)+ 2*(xshift)$);
\draw ($(0,0) - (0, 9.5)+ 2*(xshift)$) -> ($(0,0) - (0, 9.5) + (width)+ 2*(xshift)$);


\draw ($(0,0) + 3*(xshift)$) -> ($(0,0) + (width) + 3*(xshift)$);
\draw ($(0,0) + (width) + 3*(xshift)$) -> ($(0,0) + (width) - (height) + 3*(xshift)$);
\draw ($(0,0) + (width) - (height) + 3*(xshift)$) -> ($(0,0) - (height) + 3*(xshift)$);
\draw ($(0,0) - (height) + 3*(xshift)$) -> ($(0,0) + 3*(xshift)$);

\draw ($(0,0) + (2,0) + 3*(xshift)$) -> ($(0,0) + (2,0) - (height) + 3*(xshift)$);
\draw ($(0,0) + (3,0) + 3*(xshift)$) -> ($(0,0) + (3,0) - (height) + 3*(xshift)$);
\draw ($(0,0) + (6,0) + 3*(xshift)$) -> ($(0,0) + (6,0) - (height) + 3*(xshift)$);

\draw ($(0,0) - (0, 2)+ 3*(xshift)$) -> ($(0,0) - (0, 2) + (2,0) + 3*(xshift)$);
\draw ($(0,0) - (0, 1.5) + (2,0) + 3*(xshift)$) -> ($(0,0) - (0, 1.5) + (3,0) + 3*(xshift)$);
\draw ($(0,0) - (0, 2.5) + (3,0) + 3*(xshift)$) -> ($(0,0) - (0, 2.5) + (6,0) + 3*(xshift)$);
\draw ($(0,0) - (0, 4) + (6,0) + 3*(xshift)$) -> ($(0,0) - (0, 4) + (7.5,0) + 3*(xshift)$);

\draw ($(0,0) - (0, 4.5)+ 3*(xshift)$) -> ($(0,0) - (0, 4.5) + (2,0)+ 3*(xshift)$);
\draw ($(0,0) - (0, 6)+ (2,0) + 3*(xshift)$) -> ($(0,0) - (0, 6) + (3,0)+ 3*(xshift)$);
\draw ($(0,0) - (0, 7)+ (3,0) + 3*(xshift)$) -> ($(0,0) - (0, 7) + (6,0)+ 3*(xshift)$);
\draw ($(0,0) - (0, 5)+ (6,0) + 3*(xshift)$) -> ($(0,0) - (0, 5) + (7.5,0)+ 3*(xshift)$);

\draw ($(0,0) - (0, 8)+ 3*(xshift)$) -> ($(0,0) - (0, 8) + (2,0)+ 3*(xshift)$);
\draw ($(0,0) - (0, 8) + (2,0) + 3*(xshift)$) -> ($(0,0) - (0, 8) + (3,0)+ 3*(xshift)$);
\draw ($(0,0) - (0, 9) + (3,0) + 3*(xshift)$) -> ($(0,0) - (0, 9) + (6,0)+ 3*(xshift)$);
\draw ($(0,0) - (0, 8.5) + (6,0) + 3*(xshift)$) -> ($(0,0) - (0, 8.5) + (7.5,0)+ 3*(xshift)$);

\draw ($(0,0) - (0, 12)+ 3*(xshift)$) -> ($(0,0) - (0, 12) + (2,0)+ 3*(xshift)$);
\draw ($(0,0) - (0, 11)+ (2,0) + 3*(xshift)$) -> ($(0,0) - (0, 11) + (3,0)+ 3*(xshift)$);
\draw ($(0,0) - (0, 11.5)+ (3,0) + 3*(xshift)$) -> ($(0,0) - (0, 11.5) + (6,0)+ 3*(xshift)$);
\draw ($(0,0) - (0, 10)+ (6,0) + 3*(xshift)$) -> ($(0,0) - (0, 10) + (7.5,0)+ 3*(xshift)$);


\node[anchor=center] at ($(0, 0) + (3.75, 0) - (height) - (0,0.8)$) {morphic image};
\node[anchor=center] at ($(0, 0) + (3.75, 0) - (height) - (0,0.8) + (xshift)$) {proper image};
\node[anchor=center] at ($(0, 0) + (3.75, 0) - (height) - (0,0.8) + 2*(xshift)$) {column-row image};
\node[anchor=center] at ($(0, 0) + (3.75, 0) - (height) - (0,0.8) + 3*(xshift)$) {row-column image};

\end{tikzpicture}
\end{center}
\caption{Illustrating possible image partitions.}
\label{fig-illustration}
\end{figure}
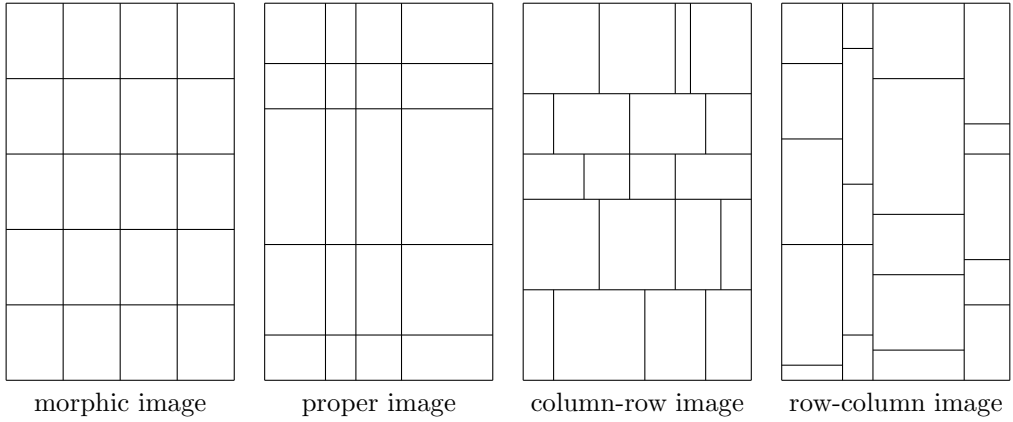

Alternatively, we can interpret the property that a terminal array $W$ is a certain type of image of an array pattern as a tiling of $W$. More precisely, $W$ satisfies a given array pattern $\alpha$ with $n$ different variables if and only if $n$ tiles can be allocated to the $n$ variables of $\alpha$ such that combining the tiles as indicated by the structure of $\alpha$ yields $W$. The grids depicted in Figure~\ref{fig-illustration} then illustrate the structure of such a tiling.
The definitions of the corresponding classes of pattern languages are now straightforward:
\begin{definition}
Let $\alpha \in X^{++}$ be an array pattern. We define the following variants of two-dimensional pattern languages:
\begin{itemize}
\item $L_{\Sigma, h}(\alpha) := \{W \in \Sigma^{**} \mid W \text{ is a morphic image of } \alpha\}$,
\item $L_{\Sigma, p}(\alpha) := \{W \in \Sigma^{**} \mid W \text{ is a proper image of } \alpha\}$,
\item $L_{\Sigma, r}(\alpha) := \{W \in \Sigma^{**} \mid W \text{ is a column-row image of } \alpha\}$,
\item $L_{\Sigma, c}(\alpha) := \{W \in \Sigma^{**} \mid W \text{ is a row-column image of } \alpha\}$,
\item $L_{\Sigma, rc}(\alpha) := L_{\Sigma, r}(\alpha) \cup L_{\Sigma, c}(\alpha)$.
\end{itemize}
For a pattern $\alpha$, we also denote the above languages by \emph{$Z$ pattern language of $\alpha$}, where $Z \in \{h, p, r, c, rc\}$.
For every $x \in \{r, c, rc, p, h\}$, we define $\mathcal{L}_{\Sigma, x} := \{L_{\Sigma, x}(\alpha) \mid \alpha \in X^{++}\}$ and $\mathcal{L}_{x} := \{\mathcal{L}_{\Sigma, x} \mid \Sigma \text{ is some alphabet}\}$.
\end{definition}

Since, for a fixed array pattern $\alpha$, every morphic image is a proper image and every proper image is a row-column image as well as a column-row image, the following subset relations between the different types of pattern languages hold (in the following diagram, an arrow denotes a subset relation):\par
\begin{tikzpicture}
[every path/.style={>=latex},every node/.style={text centered}]
  \node            (0) at (0,0) {$\:$};
  \node            (1) at (2,0) {$L_{\Sigma, h}(\alpha)$};
  \node            (2) at (5,0) {$L_{\Sigma, p}(\alpha)$}; 
  \node            (3a) at (8,-0.3) {$L_{\Sigma, c}(\alpha)$};
  \node            (3b) at (8,0.3) {$L_{\Sigma, r}(\alpha)$};
  \node            (4) at (11,0) {$L_{\Sigma, rc}(\alpha)$};

  \draw[->] (1) -- (2);
  \draw[->] (2) -- (3a);
  \draw[->] (2) -- (3b);
  \draw[->] (3a) -- (4);
  \draw[->] (3b) -- (4);
\end{tikzpicture}
\begin{remark}
As indicated in the introductory part of this section, we consider the class of $p$ pattern languages as the most natural class of two-dimensional pattern languages. Another observation that supports this claim is that the $p$ pattern languages are compatible, in a certain sense, to the one-dimensional pattern languages. More precisely, for a \emph{one-dimensional} (i.\,e., $1 \times n$) array pattern $\alpha$ the set $L_{\Sigma, p}(\alpha) \cap \{W \in \Sigma^{++} \mid |W|_r = 1\}$ coincides with the one-dimensional pattern language of $\alpha$. This does not hold for the $h$ pattern languages (since in the one-dimensional case the words variables are mapped to can differ in length), but holds for the $r$, $c$ and $rc$ pattern languages. However, as pointed out above, the $r$, $c$ and $rc$ pattern language of a given pattern $\alpha$ may contain arrays that, from an intuitive point of view, do not satisfy $\alpha$.
\end{remark}

\section{General Observations}

In this section, we state some general lemmas about two-dimensional morphisms and array pattern languages, which shall be important for proving the further results presented in this paper. First, we refine Proposition~\ref{morphismProp}, by giving a convenient characterisation for the morphism property for mappings on arrays. To this end, we define a substitution $h$ to be \emph{$(m, n)$-uniform} if, for every $x \in X$, $|h(x)|_{r} = m$ and $|h(x)|_{c} = n$ and a substitution is \emph{uniform} if it is $(m, n)$-uniform, for some $m, n \in \mathbb{N}$.

\begin{lemma}\label{lem-hom-charac}
A mapping $h:\Sigma^{**} \to \Gamma^{**}$ is a two-dimensional morphism if and only if $h = g_{\varominus, \varobar} = g_{\varobar, \varominus}$, where $g : \Sigma \to \Gamma^{**}$ is a uniform substitution.
\end{lemma}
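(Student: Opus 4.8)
The plan is to prove the two implications separately, using as the central tool the observation that, for a uniform substitution, assembling the substituted cells into a grid does not depend on whether one concatenates rows or columns first.

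For the forward direction I would start from a two-dimensional morphism $h:\Sigma^{**}\to\Gamma^{**}$ and define $g:\Sigma\to\Gamma^{**}$ by $g(a):=h(\begin{bsmallmatrix}a\end{bsmallmatrix})$, identifying each letter $a\in\Sigma$ with the corresponding $1\times 1$ array. First note that $h(\lambda)=\lambda$, forced by $h(U)=h(U\varobar\lambda)=h(U)\varobar h(\lambda)$. Proposition~\ref{morphismProp} immediately gives that all images $g(a)$ share a common height $m$ and a common width $n$, so $g$ is $(m,n)$-uniform. The key step is then to decompose an arbitrary array $\alpha=[y_{i,j}]_{p,q}\in\Sigma^{++}$ as an iterated concatenation of its cells,
\[
\alpha=\varominus_{i=1}^{p}\Bigl(\varobar_{j=1}^{q}\begin{bsmallmatrix}y_{i,j}\end{bsmallmatrix}\Bigr)=\varobar_{j=1}^{q}\Bigl(\varominus_{i=1}^{p}\begin{bsmallmatrix}y_{i,j}\end{bsmallmatrix}\Bigr),
\]
where every intermediate array is well defined. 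Applying the morphism property repeatedly to the first decomposition yields $h(\alpha)=\varominus_i(\varobar_j g(y_{i,j}))=g_{\varobar,\varominus}(\alpha)$, and applying it to the second yields $h(\alpha)=g_{\varominus,\varobar}(\alpha)$. Hence $h=g_{\varobar,\varominus}=g_{\varominus,\varobar}$ with $g$ uniform, as required.

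For the converse direction I would assume $g$ is an $(m,n)$-uniform substitution and let $h=g_{\varobar,\varominus}=g_{\varominus,\varobar}$; the goal is to show that $h$ is a morphism. The crucial auxiliary fact is that, since all blocks $g(y_{i,j})$ have the same dimensions $m\times n$, every column concatenation within a row (all blocks share $m$ rows) and every row concatenation within a column (all blocks share $n$ columns) is defined; consequently both $g_{\varobar,\varominus}(\alpha)$ and $g_{\varominus,\varobar}(\alpha)$ evaluate to the same $(pm)\times(qn)$ block array $B(\alpha)$ whose entry at position $((i-1)m+s,(j-1)n+t)$ equals $g(y_{i,j})[s,t]$. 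With this description in hand the morphism equations become transparent: if $V\varobar W$ is defined, then $V$ and $W$ have the same height, so $B(V\varobar W)$ is exactly the horizontal juxtaposition $B(V)\varobar B(W)$, which is itself defined since both summands have height $m$ times the common height of $V$ and $W$; this gives $h(V\varobar W)=h(V)\varobar h(W)$. The identity $h(V\varominus W)=h(V)\varominus h(W)$ follows symmetrically, and $h(\lambda)=\lambda$ is immediate.

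I expect the main obstacle to be the converse direction, specifically the rigorous justification that $g_{\varobar,\varominus}(\alpha)$ and $g_{\varominus,\varobar}(\alpha)$ both coincide with the block array $B(\alpha)$. This is an instance of the interchange law $(A\varobar B)\varominus(C\varobar D)=(A\varominus C)\varobar(B\varominus D)$, an occurrence of which already appears in the examples above, and the real care is in checking that uniformity keeps every concatenation away from the absorbing element $\undef$. Once every intermediate step is guaranteed to be defined, a straightforward induction on $p$ and $q$ (or a direct index computation on the entries of $B(\alpha)$) closes the argument. The forward direction, by contrast, is essentially bookkeeping once Proposition~\ref{morphismProp} supplies uniformity.
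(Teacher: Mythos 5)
Your proof is correct and follows essentially the same route as the paper: both directions hinge on defining $g$ on single cells, invoking Proposition~\ref{morphismProp} for uniformity, and reassembling arrays cell by cell (your explicit double decomposition is just the unrolled form of the paper's structural induction). The only difference is that you spell out the details — the block-array description $B(\alpha)$ and the interchange law — that the paper's \emph{if} direction dismisses as obvious.
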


\begin{proof}
We first observe that if $g$ is uniform, then $g_{\varominus, \varobar} = g_{\varobar, \varominus}$ obviously holds (so it is sufficient to prove the statement of the lemma only for one of these two mappings). Furthermore, if $g$ is uniform, then, for every $U, V \in \Sigma^{**}$, $g_{\varobar, \varominus}(U \varominus V) = g_{\varobar, \varominus}(U) \varominus g_{\varobar, \varominus}(V)$ and $g_{\varobar, \varominus}(U \varobar V) = g_{\varobar, \varominus}(U) \varobar g_{\varobar, \varominus}(V)$, which proves the \emph{if} direction. In order to prove the \emph{only if} direction, we assume that $h$ is a two-dimensional morphism and we define a substitution $g$ by $g(b) := h(b)$, $b \in \Sigma$. Furthermore, let $\widehat{g} \in \{g_{\varominus, \varobar}, g_{\varobar, \varominus}\}$. We now show that $h$ equals $\widehat{g}$ by induction. By definition, for every $b \in \Sigma$, $h(b) = \widehat{g}(b)$. Now let $U, V \in \Sigma^{**}$ with $h(U) = g'(U)$ and $h(V) = g'(V)$. Then $h(U \varominus V) = h(U) \varominus h(V) = g'(U) \varominus g'(V) = g'(U \varominus V)$ and, analogously, $h(U \varobar V) = h(U) \varobar h(V) = g'(U) \varobar g'(V) = g'(U \varobar V)$. By induction, it follows that $h = \widehat{g}$. Consequently, we can conclude that $g$ is uniform. 
\end{proof}

The next lemma states that the composition of two two-dimensional morphisms is again a two-dimensional morphism.

\begin{lemma}\label{lem-hom-closure}
Let $h_1:\Gamma^{**}\to \Sigma^{**}$ and $h_2:\Sigma^{**}\to\Delta^{**}$ be two-dimensional morphisms.
Then, the composition $h_{1,2}:=h_1\circ h_2:\Gamma^{**}\to\Delta^{**}$ is also a two-dimensional morphism.
 \end{lemma}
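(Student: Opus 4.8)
The plan is to prove that the composition of two two-dimensional morphisms is again a morphism by verifying the two defining equations of Definition~\ref{homoDef} directly. The statement we must establish is that $h_{1,2} = h_1 \circ h_2$ satisfies $h_{1,2}(V \varobar W) = h_{1,2}(V) \varobar h_{1,2}(W)$ and $h_{1,2}(V \varominus W) = h_{1,2}(V) \varominus h_{1,2}(W)$ for all $V, W \in \Gamma^{**}$. The natural approach is to expand the composition and apply the morphism property of $h_1$ and then of $h_2$ in turn.

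First I would fix $V, W \in \Gamma^{**}$ and compute, for the column-concatenation case,
\begin{equation*}
h_{1,2}(V \varobar W) = h_2(h_1(V \varobar W)) = h_2(h_1(V) \varobar h_1(W))\,,
\end{equation*}
where the second equality uses that $h_1$ is a two-dimensional morphism. Now setting $V' := h_1(V)$ and $W' := h_1(W)$, which are elements of $\Sigma^{**}$, I would apply the morphism property of $h_2$ to obtain $h_2(V' \varobar W') = h_2(V') \varobar h_2(W') = h_{1,2}(V) \varobar h_{1,2}(W)$. The row-concatenation case $h_{1,2}(V \varominus W) = h_{1,2}(V) \varominus h_{1,2}(W)$ follows by the identical argument with $\varominus$ in place of $\varobar$.

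The one point that genuinely needs care is the handling of the absorbing element $\undef$, since both concatenation operations are partial. If $V \varobar W = \undef$ (for instance when $V$ and $W$ have differing numbers of rows), then the inner application $h_1(V \varobar W)$ is evaluated on $\undef$, and we must ensure the chain of equalities still goes through. Here Lemma~\ref{lem-hom-charac} is the convenient tool: it tells us that a two-dimensional morphism is induced by a uniform substitution, so $h_1$ maps arrays of equal height to arrays of equal height (and likewise for width), meaning $V \varobar W = \undef$ forces $h_1(V) \varobar h_1(W) = \undef$ as well, and $h_2$ then propagates $\undef$ to $\undef$ by the monoid conventions fixed in Section~\ref{sec:def}. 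Thus both sides equal $\undef$ and the equation holds trivially. I expect this degenerate-case bookkeeping, rather than the main computation, to be the only real obstacle; the core argument is a two-line application of the morphism property of each factor. Finally, I would note that $h_{1,2}$ is a well-defined mapping $\Gamma^{**} \to \Delta^{**}$ so that the conclusion is meaningful, completing the proof.
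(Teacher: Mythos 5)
Your proof is correct, but it takes a different route from the paper's. You verify Definition~\ref{homoDef} directly by chaining the two morphism identities, $h_2(h_1(V\varobar W))=h_2(h_1(V)\varobar h_1(W))=h_2(h_1(V))\varobar h_2(h_1(W))$, and you rightly flag the only delicate point, namely the degenerate case $V\varobar W=\undef$; there you appeal to Lemma~\ref{lem-hom-charac} to see that $h_1$ preserves the height mismatch. (In fact even that appeal is avoidable: the identity $h_1(V\varobar W)=h_1(V)\varobar h_1(W)$ with left-hand side $\undef$ already forces the right-hand side to be $\undef$, and $h_2$ then absorbs $\undef$ by the monoid conventions.) The paper instead routes the whole argument through Lemma~\ref{lem-hom-charac}: it writes each $h_i$ as the map induced by a uniform substitution, observes that the composition of two uniform substitutions is again uniform and that the induced maps compose accordingly ($g_{\varominus,\varobar}\circ f_{\varominus,\varobar}=(g\circ f)_{\varominus,\varobar}$), and then applies the characterisation once more in the reverse direction. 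Your version is more elementary and self-contained; the paper's version has the advantage of producing the explicit uniform substitution that induces $h_{1,2}$, which is the form in which the lemma is later exploited (e.g.\ in the proof of Proposition~\ref{prop-hom-intersection}). Either argument is a complete proof of the statement.
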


\begin{proof}
We first observe the following. If $g$ and $f$ are some uniform substitutions, then $h := g \circ f$ is a uniform substitution as well. Furthermore, it can be easily verified that $g_{\varominus, \varobar} \circ f_{\varominus, \varobar} = h_{\varominus, \varobar}$. With Lemma~\ref{lem-hom-charac}, this directly implies the statement of the lemma.
\end{proof}

It is intuitively clear that the structure of a pattern fully determines the corresponding pattern language and the actual names of the variables are irrelevant, e.\,g., the patterns $\begin{bsmallmatrix} x_1 & x_2 & x_1 \\ x_2 & x_3 & x_3 \end{bsmallmatrix}$ and $\begin{bsmallmatrix} x_7 & x_3 & x_7 \\ x_3 & x_5 & x_5 \end{bsmallmatrix}$ should be considered identical. In the following we formalise this intuition. Two array patterns $\alpha := [y_{i, j}]_{m,n}$ and $\beta := [z_{i, j}]_{m',n'}$ are \emph{equivalent up to a renaming}, denoted by $\alpha \sim \beta$, if and only if $m = m'$, $n = n'$ and, for every $i, j, i', j'$, $1 \leq i, i' \leq m$, $1 \leq j, j' \leq n$, $y_{i, j} = y_{i', j'}$ if and only if $z_{i, j} = z_{i', j'}$. 

\begin{lemma}\label{equalityLemma}
Let $z, z' \in \{h, p, r, c, rc\}$, let $\Sigma$ be an alphabet with $|\Sigma| \geq 2$ and let $\alpha, \beta \in X^{++}$. If $L_{\Sigma, z}(\alpha) = L_{\Sigma, z'}(\beta)$, then $\alpha \sim \beta$.
\end{lemma}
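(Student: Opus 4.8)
The plan is to recover the full structural data of a pattern—its two dimensions and its variable-equality relation—by looking only at the \emph{smallest} arrays in its language, and then to match this data across $\alpha$ and $\beta$. First I would pin down the dimensions. Substituting a single terminal symbol for each variable is a $(1,1)$-uniform substitution, so by Lemma~\ref{lem-hom-charac} all the mappings coincide on it and produce an $m \times n$ array; hence the constant array of size $|\alpha|_r \times |\alpha|_c$ lies in $L_{\Sigma, z}(\alpha)$ for every $z$. Conversely, for a column-row image $h_{\varobar, \varominus}(\alpha)$ the number of rows equals $\sum_{i=1}^{m} |h(y_{i,1})|_r \geq m$ and the number of columns is $\geq n$, with the symmetric bounds holding for row-column images, and hence also for proper, morphic, and $rc$ images, which are all contained in these. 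Thus $\min\{|W|_r \mid W \in L_{\Sigma,z}(\alpha)\} = m$ and $\min\{|W|_c \mid W \in L_{\Sigma,z}(\alpha)\} = n$, so $L_{\Sigma,z}(\alpha) = L_{\Sigma,z'}(\beta)$ forces $m = m'$ and $n = n'$.

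Next I would identify exactly the arrays of these minimal dimensions $m \times n$. The row/column counting above shows that such an array can be an image only under a substitution $h$ whose restriction to the variables occurring in $\alpha$ sends each of them to a $1 \times 1$ array: since $\sum_i |h(y_{i,1})|_r = m$ and $\sum_j |h(y_{i,j})|_c = n$ with every summand at least $1$, all these summands must equal $1$. Conversely, every substitution of this kind produces exactly the array $[h(y_{i,j})]_{m,n}$, and for $(1,1)$-substitutions the image type is irrelevant. Consequently, for each $z \in \{h,p,r,c,rc\}$ the set of $m \times n$ arrays in $L_{\Sigma,z}(\alpha)$ equals the set of $\Sigma$-colourings $W$ of the $m \times n$ grid that are constant on the classes of the equivalence relation $R_{\alpha}$ defined by $(i,j) \mathrel{R_\alpha} (i',j')$ iff $y_{i,j} = y_{i',j'}$; that is, $W[i,j] = W[i',j']$ whenever $(i,j) \mathrel{R_\alpha} (i',j')$. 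Defining $R_\beta$ from $\beta$ in the same way, equality of the two languages forces these two colouring sets to coincide.

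Finally I would deduce $R_\alpha = R_\beta$ from the coincidence of these colouring sets, and this is where the hypothesis $|\Sigma| \geq 2$ is used. Suppose for contradiction that some pair of positions $p, q$ satisfies $(p,q) \in R_\beta$ but $(p,q) \notin R_\alpha$ (the symmetric case is identical). Fixing two distinct symbols $\ta, \tb \in \Sigma$, I would colour the grid so as to respect $R_\alpha$, assigning $\ta$ to the $R_\alpha$-class of $p$, $\tb$ to the $R_\alpha$-class of $q$ (legitimate precisely because $(p,q) \notin R_\alpha$ puts them in distinct classes), and, say, $\ta$ elsewhere. The resulting $W$ is constant on $R_\alpha$-classes, hence in the minimal set for $\alpha$; but $W[p] = \ta \neq \tb = W[q]$ while $(p,q) \in R_\beta$, so $W$ is not in the minimal set for $\beta$, a contradiction. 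Therefore $R_\alpha = R_\beta$, which combined with $m = m'$ and $n = n'$ is exactly $\alpha \sim \beta$.

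The hard part is the second step: one must check carefully that arrays of minimal dimensions are \emph{forced} to arise from $1 \times 1$ substitutions, and that this characterisation is genuinely uniform across all five image types—in particular for the morphic case, where Lemma~\ref{lem-hom-charac} already confines us to uniform substitutions and thus, in minimal dimension, to $(1,1)$-uniform ones. Once the minimal arrays are identified with the $R_\alpha$-respecting colourings, the separation argument in the last step is routine.
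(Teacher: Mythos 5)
Your proposal is correct and follows essentially the same route as the paper's proof: both identify the minimal-dimension arrays as exactly the $(1,1)$-substitution images (i.e., the colourings respecting the pattern's variable-equality relation) and then separate $\alpha$ from $\beta$ with a two-symbol colouring that respects one relation but not the other. Your write-up is somewhat more explicit about why minimal dimensions force $1\times 1$ substitutions uniformly across all five image types, but the underlying argument is the same.
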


\begin{proof}
We assume that $L_{\Sigma, z}(\alpha) = L_{\Sigma, z'}(\beta)$ and note that this implies that $|\alpha|_c = |\beta|_c = m$ and $|\alpha|_r = |\beta|_r = n$. This is due to the fact that if $|\alpha|_c < |\beta|_c$ or $|\alpha|_r < |\beta|_r$, then the array obtained from $\alpha$ by replacing every variable by a single symbol $\ta \in \Sigma$ is in $L_{\Sigma, z}(\alpha)$, but not in $L_{\Sigma, z'}(\beta)$. We further assume that $\alpha \nsim \beta$, which implies that there are $i, j, i', j'$ with $1 \leq i, i' \leq m$ and $1 \leq j, j' \leq n$, such that $\alpha[i, j] = \alpha[i', j']$ and $\beta[i, j] \neq \beta[i', j']$ (or $\alpha[i, j] \neq \alpha[i', j']$ and $\beta[i, j] = \beta[i', j']$, for which an analogous argument applies). We now define a substitution $h : X \rightarrow \Sigma$ in the following way. For every $x \in X$, if $\beta[i, j] = x$, then $h(x) := \ta$ and if $\beta[i, j] \neq x$, then $h(x) := \tb$. We observe that, since $h_{\varominus, \varobar}(\beta)$ is a morphic image, a proper image, a row-column image and a column-row image of $\beta$, $h_{\varominus, \varobar}(\beta) \in L_{\Sigma, z'}(\beta)$. Furthermore, $h_{\varominus, \varobar}(\beta)[i, j] \neq h_{\varominus, \varobar}(\beta)[i', j']$. On the other hand, for every $W \in L_{\Sigma, z}(\alpha)$, with $|W|_c = |h_{\varominus, \varobar}(\beta)|_c$ and $|W|_r = |h_{\varominus, \varobar}(\beta)|_r$, $W[i, j] = W[i', j']$ is 
satisfied. Thus, $h_{\varominus, \varobar}(\beta) \notin L_{\Sigma, z}(\alpha)$, which implies that $L_{\Sigma, z}(\alpha) \neq L_{\Sigma, z'}(\beta)$.
\end{proof}

For every $z, z' \in \{h, p, r, c, rc\}$, $z \neq z'$, $\alpha \sim \beta$ does not necessarily imply $L_{\Sigma, z}(\alpha) = L_{\Sigma, z'}(\beta)$, as pointed out by, e.\,g., $L_{\Sigma, h}(\begin{bsmallmatrix} x_1 & x_2 \end{bsmallmatrix}) \neq L_{\Sigma, p}(\begin{bsmallmatrix} x_1 & x_2 \end{bsmallmatrix})$ or $L_{\Sigma, p}(\begin{bsmallmatrix} x_1 & x_2 \\ x_3 & x_1 \end{bsmallmatrix}) \neq L_{\Sigma, c}(\begin{bsmallmatrix} x_1 & x_2 \\ x_3 & x_1 \end{bsmallmatrix})$. On the other hand, since, for every $z, \in \{h, p, r, c, rc\}$, $\alpha \sim \beta$ obviously implies $L_{\Sigma, z}(\alpha) = L_{\Sigma, z}(\beta)$, two $z$ pattern languages are equivalent if and only if they are described by two patterns that are equivalent up to a renaming.\par
In the remainder of this work, we do not distinguish anymore between patterns that are equivalent up to a renaming, i.\,e., from now on we say that $\alpha$ and $\beta$ are equivalent, denoted by $\alpha = \beta$ for simplicity, if they are actually the same arrays or if they are equivalent up to a renaming.

\section{Comparison of Array Pattern Language Classes}

In this section, we provide a pairwise comparison of our different classes of array pattern languages and, furthermore, we compare them with the class of recognisable array languages, denoted by REC, which is one of the most prominent classes of array languages. For a detailed description of REC, the reader is referred to the survey \cite{gia:two} by Giammarresi~and~Restivo. Next, we show that, for every alphabet $\Sigma$ with $|\Sigma| \geq 2$, the language classes REC, $\mathcal{L}_{\Sigma, h}$, $\mathcal{L}_{\Sigma, p}$, $\mathcal{L}_{\Sigma, r}$, $\mathcal{L}_{\Sigma, c}$ and $\mathcal{L}_{\Sigma, rc}$ are pairwise incomparable. More precisely, for every $\mathcal{L}_1, \mathcal{L}_2 \in \{\text{REC}, \mathcal{L}_{\Sigma, h}, \mathcal{L}_{\Sigma, p}, \mathcal{L}_{\Sigma, r}, \mathcal{L}_{\Sigma, c}, \mathcal{L}_{\Sigma, rc}\}$ with $\mathcal{L}_1 \neq \mathcal{L}_2$, we show that $\mathcal{L}_1 \setminus \mathcal{L}_2 \neq \emptyset$, $\mathcal{L}_2 \setminus \mathcal{L}_1 \neq \emptyset$ and $\mathcal{L}_1 \cap \mathcal{L}_2 \neq \emptyset$. The non-emptiness of the pairwise intersections of these language classes can be easily seen:
\begin{proposition}
For every $z \in \{h,p,r,c,rc\}$, $L_{z, \Sigma}(\begin{bsmallmatrix} x_1 \end{bsmallmatrix}) = \Sigma^{++}$ and $\Sigma^{++} \in \text{REC}$.
\end{proposition}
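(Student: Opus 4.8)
The plan is to establish the single equality $L_{\Sigma, h}(\begin{bsmallmatrix} x_1 \end{bsmallmatrix}) = \Sigma^{++}$ and then to propagate it to all remaining variants $z \in \{p, r, c, rc\}$ by means of the subset diagram displayed after the definition of the pattern language classes, before finally recalling the recognisability of $\Sigma^{++}$. The whole argument is a warm-up: there is no real obstacle, only a couple of points that deserve a moment's care.

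First I would observe that the pattern $\begin{bsmallmatrix} x_1 \end{bsmallmatrix}$ consists of a single cell, so that no column or row concatenation is ever carried out when forming its images: for any substitution $h$, both $h_{\varobar, \varominus}(\begin{bsmallmatrix} x_1 \end{bsmallmatrix})$ and $h_{\varominus, \varobar}(\begin{bsmallmatrix} x_1 \end{bsmallmatrix})$ collapse to $h(x_1)$. Consequently the column-row, row-column and proper images of $\begin{bsmallmatrix} x_1 \end{bsmallmatrix}$ all coincide with $h(x_1)$, which already foreshadows why every variant will yield the same language.

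Next, for the inclusion $\Sigma^{++} \subseteq L_{\Sigma, h}(\begin{bsmallmatrix} x_1 \end{bsmallmatrix})$, I would fix an arbitrary $W \in \Sigma^{++}$ and define the substitution $g : X \to \Sigma^{++}$ by $g(x) := W$ for every $x \in X$. This $g$ is trivially uniform, since all variables are mapped to the same array $W$ of fixed dimensions, so by Lemma~\ref{lem-hom-charac} the mapping $g_{\varobar, \varominus} = g_{\varominus, \varobar}$ is a two-dimensional morphism, and it sends $\begin{bsmallmatrix} x_1 \end{bsmallmatrix}$ to $W$. Thus $W$ is a morphic image of $\begin{bsmallmatrix} x_1 \end{bsmallmatrix}$, whence $W \in L_{\Sigma, h}(\begin{bsmallmatrix} x_1 \end{bsmallmatrix})$. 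The converse inclusion is immediate, because every image of a pattern lies in $\Sigma^{++}$ (substitutions map into $\Sigma^{++}$, and the concatenation of non-empty arrays, when defined, is again non-empty). Hence $L_{\Sigma, h}(\begin{bsmallmatrix} x_1 \end{bsmallmatrix}) = \Sigma^{++}$.

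Then, for every $z \in \{p, r, c, rc\}$, the subset relations give $L_{\Sigma, h}(\begin{bsmallmatrix} x_1 \end{bsmallmatrix}) \subseteq L_{\Sigma, z}(\begin{bsmallmatrix} x_1 \end{bsmallmatrix}) \subseteq \Sigma^{++}$ (the second inclusion again because all images are non-empty arrays over $\Sigma$), and since the two outer sets are equal, so is the sandwiched one. Finally, for the membership $\Sigma^{++} \in \text{REC}$ I would simply recall the standard fact that the set of all non-empty arrays over $\Sigma$ is recognisable, being the local, tiling-definable language whose tiling system admits every tile, referring to Giammarresi~and~Restivo~\cite{gia:two}. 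The only step requiring attention is verifying that the uniformity hypothesis of Lemma~\ref{lem-hom-charac} can actually be met for a genuine morphism; this is unproblematic precisely because only $x_1$ occurs in the pattern, leaving us free to assign the remaining variables images of matching dimensions.
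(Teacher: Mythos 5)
Your argument is correct, and it is essentially the canonical fleshing-out of what the paper treats as immediate (the paper states this proposition without proof, as ``easily seen''): the single-cell pattern makes all image notions collapse to $h(x_1)$, the constant substitution $x \mapsto W$ is uniform and hence induces a genuine two-dimensional morphism via Lemma~\ref{lem-hom-charac}, and the remaining variants follow by the sandwich of subset relations. Nothing further is needed.
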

It remains to find, for every $\mathcal{L}_1, \mathcal{L}_2 \in \{\text{REC}, \mathcal{L}_{\Sigma, h}, \mathcal{L}_{\Sigma, p}, \mathcal{L}_{\Sigma, r}, \mathcal{L}_{\Sigma, c}, \mathcal{L}_{\Sigma, rc}\}$, a separating language $L_1 \in \mathcal{L}_1 \setminus \mathcal{L}_2$ and a separating language $L_2 \in \mathcal{L}_2 \setminus \mathcal{L}_1$. We first present all these separating languages in a table and then we formally prove their separating property. In rows $2$ to $6$ of the following table, if a pattern $\alpha$ is the entry that corresponds to the row labeled by class $\mathcal{L}_{\Sigma, z}$ and the column labeled by class $\mathcal{L}_{\Sigma, z'}$, where $z, z' \in \{h, p, r, c, rc\}$, $z \neq z'$, then this means that $L \in \mathcal{L}_{\Sigma, z} \setminus \mathcal{L}_{\Sigma, z'}$. Row $1$, on the other hand, contains recognisable array languages that are not array pattern languages.

\begin{center}
  \begin{tabular}{| c !{\vrule width 1pt} c | c | c | c | c | c |}
    \hline
     & REC & \textbf{$\mathcal{L}_{\Sigma, h}$} & $\mathcal{L}_{\Sigma, p}$ & $\mathcal{L}_{\Sigma, r}$ & $\mathcal{L}_{\Sigma, c}$ & $\mathcal{L}_{\Sigma, rc}$ \\ \noalign{\hrule height 1pt}   
    REC & -- & $\{\begin{bsmallmatrix}  \ta \end{bsmallmatrix}\}$ & $\{\begin{bsmallmatrix}  \ta \end{bsmallmatrix}\}$ & $\{\begin{bsmallmatrix}  \ta \end{bsmallmatrix}\}$ & $\{\begin{bsmallmatrix}  \ta \end{bsmallmatrix}\}$ & $\{\begin{bsmallmatrix}  \ta \end{bsmallmatrix}\}$ \\ \hline 
    $\mathcal{L}_{\Sigma, h}$ & $\begin{bsmallmatrix} x_1 \\ x_1 \end{bsmallmatrix}$ & -- & $\begin{bsmallmatrix} x_1&x_2 \\ x_3&x_4 \\ \end{bsmallmatrix}$ & $\begin{bsmallmatrix} x_1&x_2 \\ x_3&x_4 \\ \end{bsmallmatrix}$ & $\begin{bsmallmatrix} x_1&x_2 \\ x_3&x_4 \\ \end{bsmallmatrix}$ & $\begin{bsmallmatrix} x_1&x_2 \\ x_3&x_4 \\ \end{bsmallmatrix}$ \\ \hline 
    $\mathcal{L}_{\Sigma, p}$ & $\begin{bsmallmatrix} x_1 \\ x_1 \end{bsmallmatrix}$ & $\begin{bsmallmatrix} x_1&x_2 \\ x_3&x_4 \\ \end{bsmallmatrix}$ & -- & $\begin{bsmallmatrix} x_1&x_2 \\ x_2&x_1 \end{bsmallmatrix}$ & $\begin{bsmallmatrix} x_1&x_2 \\ x_2&x_1 \end{bsmallmatrix}$ & $\begin{bsmallmatrix} x_1&x_2 \\ x_2&x_1 \end{bsmallmatrix}$ \\ \hline
    $\mathcal{L}_{\Sigma, r}$ & $\begin{bsmallmatrix} x_1 \\ x_1 \end{bsmallmatrix}$ & $\begin{bsmallmatrix} x_1&x_2 \\ x_3&x_4 \\ \end{bsmallmatrix}$ & $\begin{bsmallmatrix} x_1&x_2 \\ x_2&x_1 \end{bsmallmatrix}$ & -- & $\begin{bsmallmatrix} x_1&x_2 \\ x_2&x_1 \end{bsmallmatrix}$ & $\begin{bsmallmatrix} x_1&x_2 \\ x_2&x_1 \end{bsmallmatrix}$ \\ \hline 
    $\mathcal{L}_{\Sigma, c}$ & $\begin{bsmallmatrix} x_1 \\ x_1 \end{bsmallmatrix}$ & $\begin{bsmallmatrix} x_1&x_2 \\ x_3&x_4 \\ \end{bsmallmatrix}$ & $\begin{bsmallmatrix} x_1&x_2 \\ x_2&x_1 \end{bsmallmatrix}$ & $\begin{bsmallmatrix} x_1&x_2 \\ x_2&x_1 \end{bsmallmatrix}$  & -- & $\begin{bsmallmatrix} x_1&x_2 \\ x_2&x_1 \end{bsmallmatrix}$ \\ \hline 
    $\mathcal{L}_{\Sigma, rc}$ & $\begin{bsmallmatrix} x_1 \\ x_1 \end{bsmallmatrix}$ & $\begin{bsmallmatrix} x_1&x_2 \\ x_3&x_4 \\ \end{bsmallmatrix}$ & $\begin{bsmallmatrix} x_1&x_2 \\ x_2&x_1 \end{bsmallmatrix}$ & $\begin{bsmallmatrix} x_1&x_2 \\ x_2&x_1 \end{bsmallmatrix}$ & $\begin{bsmallmatrix} x_1&x_2 \\ x_2&x_1 \end{bsmallmatrix}$ & -- \\ \hline 
    \end{tabular}
\end{center}

\begin{lemma}\label{nonRECLang}
$L_{\Sigma, h}(\begin{bsmallmatrix} x_1 \\ x_1 \end{bsmallmatrix}) \notin \text{REC}$.
\end{lemma}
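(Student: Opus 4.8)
The plan is to first make the language completely explicit and then reduce its non-recognisability to the non-regularity of the string copy language $\{ww \mid w \in \Sigma^+\}$. Write $\alpha := \begin{bsmallmatrix} x_1 \\ x_1 \end{bsmallmatrix}$. A morphic image of $\alpha$ is $h(\alpha)$ for some two-dimensional morphism $h$, and by Lemma~\ref{lem-hom-charac} every such $h$ is induced by a uniform substitution $g$ with $h = g_{\varominus, \varobar} = g_{\varobar, \varominus}$. Since $\alpha$ contains only the single variable $x_1$, the uniformity condition is vacuous and $h(\alpha) = g(x_1) \varominus g(x_1)$. Hence I obtain the clean characterisation $L_{\Sigma, h}(\alpha) = \{U \varominus U \mid U \in \Sigma^{++}\}$, i.e.\ the set of arrays whose top half coincides with their bottom half.

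The tool I would use is the standard property of $\text{REC}$ (see \cite{gia:two}): if $L \in \text{REC}$ and $n$ is fixed, then the string language obtained from $\{W \in L \mid |W|_c = n\}$ by reading each array row by row (each row being one symbol of $\Sigma^n$) is regular. This follows from the analogous statement for a fixed \emph{number of rows} — where the $2\times 2$ tiles of a tiling system turn column-to-column transitions into a finite automaton over $\Sigma^{|W|_r}$ — together with the closure of $\text{REC}$ under transposition. The crucial point, and the one I would flag as the main obstacle, is \emph{which} dimension to fix: restricting $L_{\Sigma,h}(\alpha)$ to a fixed \emph{height} makes the copying a purely per-column condition (each column is simply doubled vertically), which is regular and hence proves nothing. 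The vertical copying only becomes genuinely non-local when it is read \emph{across} rows, so one must fix the width instead (equivalently, transpose first).

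I would therefore apply the tool with width $n := 1$. An array $W \in L_{\Sigma,h}(\alpha)$ with $|W|_c = 1$ has the form $U \varominus U$ where $U = \begin{bsmallmatrix} a_1 \\ \vdots \\ a_m \end{bsmallmatrix}$ is a single column, so reading $W$ row by row over $\Sigma^1 = \Sigma$ yields exactly the string $a_1 \cdots a_m\, a_1 \cdots a_m = ww$ with $w = a_1 \cdots a_m$. Thus $\{W \in L_{\Sigma,h}(\alpha) \mid |W|_c = 1\}$ corresponds precisely to the copy language $\{ww \mid w \in \Sigma^+\}$. If $L_{\Sigma,h}(\alpha)$ were in $\text{REC}$, this copy language would have to be regular; but since $|\Sigma| \geq 2$ (the standing assumption throughout this section), $\{ww \mid w \in \Sigma^+\}$ is not regular, e.g.\ by the pumping lemma applied to $a^p\, b\, a^p\, b$. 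This contradiction establishes $L_{\Sigma,h}(\alpha) \notin \text{REC}$.

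I expect the only real work to lie in stating the $\text{REC}$ property in the orientation I need and in being explicit that the fixed-height restriction is a dead end; the reduction to $\{ww\}$ and its non-regularity are then routine.
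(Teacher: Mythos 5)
Your proof is correct, but it takes a genuinely different route from the paper's. You collapse the problem to width one and reduce to the non-regularity of the string copy language $\{ww \mid w \in \Sigma^+\}$, invoking the standard fact that for $L \in \text{REC}$ the restriction to a fixed number of columns, read row by row, is a regular string language; your preliminary observation that fixing the \emph{height} instead is a dead end (vertical doubling becomes a per-column, hence local, condition) is accurate and worth flagging. The paper instead works directly with the characterisation of $\text{REC}$ by local array languages and projections and runs a pigeonhole argument at full width: for fixed width $n$ there are $|\Sigma|^{mn}$ arrays $W \varominus W$ with $|W|_r = m$ but at most $|\Gamma|^{2n}$ possibilities for the two middle rows of their local pre-images, so for large $m$ two distinct $W_1 \neq W_2$ share the same interface rows, and locality then allows swapping bottom halves to force $W_1 \varominus W_2$ into the language, a contradiction. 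Your argument is shorter and delegates the combinatorial work to classical string-language theory, at the price of relying on an auxiliary structural fact about $\text{REC}$ that the paper does not state (though it is in the cited survey of Giammarresi and Restivo); the paper's argument is self-contained modulo the local-language characterisation and is essentially the two-dimensional analogue of the same counting idea. Note that both proofs implicitly require $|\Sigma| \geq 2$ (your reduction because $\{ww\}$ over a unary alphabet is regular, the paper's because $s^{mn} > r^{2n}$ needs $s \geq 2$); this is indeed the standing assumption of the section, and the lemma is false for unary $\Sigma$.
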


\begin{proof}
In this proof, we use the characterisation of REC by local array languages and projections (see Giammarresi~and~Restivo \cite{gia:two} and also the next section). Let $\alpha := \begin{bsmallmatrix} x_1 \\ x_1 \end{bsmallmatrix}$ and $L := L_{\Sigma, h}(\alpha) = \{W \varominus W \mid W \in \Sigma^{++}\}$. Suppose $L \in REC$. Then there is a local array language $L'$ over an alphabet $\Gamma$ so that $L$ is a projection of $L'$. For the sake of convenience, we define $r := |\Gamma|$ and $s := |\Sigma|$. For every $m, n \in \mathbb{N}$, let
\begin{equation*}
L_{m,n} := \{W \varominus W \mid W \in \Sigma^{++} \text{ and } |W|_r = m, |W|_c = n\} \subseteq L\,.
\end{equation*}
Obviously, $|L_{m,n}| = s^{mn}$. Now let $L'_{m,n}$ be the set of pictures in $L^\prime,$ whose projections are in $L_{m,n}$. In the arrays of $L^\prime_{m,n},$ there are at most $r^{2n}$ possibilities of how the $m^{\text{th}}$ and $(m+1)^{\text{th}}$ row can look like. For sufficiently large $m,$ $s^{mn} > r^{2n}.$ Thus, there exist two arrays $W_1 \varominus W_1$ and  $W_2 \varominus W_2,$ $W_1 \neq W_2$ in $L_{m,n}$ such that the corresponding arrays $W_1^\prime \varominus W_1^\prime$ and $W_2^\prime \varominus W_2^\prime$ in $L^\prime_{m,n}$ have the same $m^{\text{th}}$ row and the same $(m + 1)^{\text{th}}$ row. Hence, since $L'$ is a local array language, $W_1^\prime \varominus W_2^\prime, W_2^\prime \varominus W_1^\prime \in L^\prime_{m,n}$ and therefore $W_1 \varominus W_2, W_2 \varominus W_1 \in L_{m,n} \subseteq L$, which is a contradiction.
\end{proof}
 
It can be easily verified that, for every $z \in \{p,r,c,rc\}$, $L_{\Sigma, z}(\alpha) = L_{\Sigma, h}(\alpha)$, where $\alpha := \begin{bsmallmatrix} x_1 \\ x_1 \end{bsmallmatrix}$. Hence, for every $z \in \{h, p, r, c, rc\}$, $L_{\Sigma, z}(\alpha) \notin \text{REC}$, which implies the first column of the table. Furthermore, for every $z \in \{h,p,r,c,rc\}$, $\{\begin{bsmallmatrix}  \ta \end{bsmallmatrix}\} \notin L_{\Sigma, z}(\alpha)$, but $\{\begin{bsmallmatrix}  \ta \end{bsmallmatrix}\} \in \text{REC}$, which implies the first row of the table. \par
We point out that, by Lemma~\ref{equalityLemma}, for every $z, z' \in \{h,p,r,c,rc\}$, $z \neq z'$, if there exists a pattern $\beta$ with $L_{\Sigma, z}(\beta) \neq L_{\Sigma, z'}(\beta)$, then $L_{\Sigma, z}(\beta) \in \mathcal{L}_{\Sigma, z} \setminus \mathcal{L}_{\Sigma, z'}$ and $L_{\Sigma, z'}(\beta) \in \mathcal{L}_{\Sigma, z'} \setminus \mathcal{L}_{\Sigma, z}$. Consequently, in order to prove the remaining entries of the table, it is sufficient to identify, for every $z, z' \in \{h,p,r,c,rc\}$, $z \neq z'$, a pattern $\beta$ with $L_{\Sigma, z}(\beta) \neq L_{\Sigma, z'}(\beta)$, which is done by the following two lemmas.

\begin{lemma}\label{hIncompLemma}
For every $z \in \{p,c,r,cr\}$, $L_{\Sigma, h}(\begin{bsmallmatrix} x_1&x_2 \\ x_3&x_4 \\ \end{bsmallmatrix}) \neq L_{\Sigma, z}(\begin{bsmallmatrix} x_1&x_2 \\ x_3&x_4 \\ \end{bsmallmatrix})$.
\end{lemma}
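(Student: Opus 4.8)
The plan is to exploit the fundamental difference between morphic images and the other image types, namely Proposition~\ref{morphismProp}: a two-dimensional morphism forces all single-variable images to share the same width and the same height, whereas the proper, row-column, and column-row images impose only the weaker ``grid-compatibility'' constraints. Since by Lemma~\ref{lem-hom-charac} a morphic image $h_{\varobar,\varominus}(\alpha)$ arises precisely from a \emph{uniform} substitution, every $W \in L_{\Sigma,h}(\alpha)$ must be obtainable by substituting each of $x_1,x_2,x_3,x_4$ by arrays of one common size $m \times n$; for the $2 \times 2$ pattern $\begin{bsmallmatrix} x_1&x_2 \\ x_3&x_4 \end{bsmallmatrix}$ with four \emph{distinct} variables this makes $L_{\Sigma,h}(\alpha)$ quite restrictive. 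The inclusion $L_{\Sigma,h}(\alpha) \subseteq L_{\Sigma,z}(\alpha)$ already holds for every $z \in \{p,r,c,rc\}$ by the subset diagram in the excerpt, so it suffices to exhibit a single witness array $W \in L_{\Sigma,z}(\alpha) \setminus L_{\Sigma,h}(\alpha)$.

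Concretely, I would first argue that, because the substitution is uniform, any $W$ in the morphic language decomposes into a $2\times 2$ grid of equal-sized $m \times n$ blocks, so in particular $|W|_c$ and $|W|_r$ are both even multiples of the block dimensions and the four quadrants are cut along the single midline in each direction. The key point is to choose a terminal array $W$ whose natural ``tiling'' for the pattern forces the vertical cut in the top row to fall at a different column than the vertical cut in the bottom row (and similarly for horizontal cuts), which is possible for $p$, $r$, $c$, $rc$ images because there no single variable is repeated across rows \emph{and} columns in a conflicting way — the four variables are all distinct, so the only compatibility condition a column-row or row-column image must satisfy is that the four blocks fit together into a rectangle, not that all four have a common size. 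A convenient explicit witness is something like $W := \begin{bsmallmatrix} \ta & \ta & \tb \\ \tb & \tc & \tc \end{bsmallmatrix}$: one checks it is a column-row (hence row-column, since for distinct variables the two coincide, giving a proper image) image of $\alpha$ by taking $h(x_1) := \begin{bsmallmatrix}\ta&\ta\end{bsmallmatrix}$, $h(x_2) := \begin{bsmallmatrix}\tb\end{bsmallmatrix}$, $h(x_3) := \begin{bsmallmatrix}\tb\end{bsmallmatrix}$, $h(x_4) := \begin{bsmallmatrix}\tc&\tc\end{bsmallmatrix}$, but it cannot be a uniform substitution image because the widths $|h(x_1)|_c = 2$ and $|h(x_2)|_c = 1$ disagree, contradicting Proposition~\ref{morphismProp}.

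The main obstacle, and the step I would spend the most care on, is verifying that the chosen $W$ genuinely fails to lie in $L_{\Sigma,h}(\alpha)$ \emph{for every} possible uniform substitution, not merely the one used above — one must rule out all alternative $m\times n$ block decompositions of the fixed array $W$. This is where Proposition~\ref{morphismProp} does the real work: any morphic preimage would require $W$ to split into four blocks of one common size $m \times n$ with $2m = |W|_r$ and $2n = |W|_c$, so I would check the small finite list of candidate block sizes (determined by the even divisibility of the dimensions of $W$) and show each one either does not divide the dimensions or produces four blocks that are not consistent with the repetition pattern of $\alpha$ — which for four distinct variables is vacuous, so the only constraint is the geometric one, and a $W$ with odd or suitably chosen dimensions kills it outright. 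A clean way to avoid any case analysis is to pick $W$ with, say, $|W|_c = 3$: since a uniform morphic image of a $2$-column pattern must have even width $2n$, no uniform substitution can produce a width-$3$ array, so $W \notin L_{\Sigma,h}(\alpha)$ is immediate, while the non-uniform substitution above exhibits $W \in L_{\Sigma,z}(\alpha)$ for all $z \in \{p,r,c,rc\}$. This reduces the entire argument to producing one odd-width proper image, after which the statement follows.
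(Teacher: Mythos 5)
Your overall strategy is the same as the paper's: since $L_{\Sigma,h}(\alpha)\subseteq L_{\Sigma,z}(\alpha)$ for every $z$, it suffices to exhibit one array that is an image of $\alpha$ for all $z\in\{p,r,c,rc\}$ under a non-uniform substitution but whose dimensions are incompatible with any uniform substitution (the paper uses the all-$\ta$ $3\times 3$ array with blocks of sizes $2\times 2$, $2\times 1$, $1\times 2$, $1\times 1$). Your closing observation --- that a uniform substitution of a two-column pattern forces even width, so an odd-width witness lies outside $L_{\Sigma,h}(\alpha)$ by Lemma~\ref{lem-hom-charac} --- is correct and matches the paper's use of that lemma.

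However, there is a genuine error in the step establishing membership for $z\in\{p,c\}$. The parenthetical claim that for distinct variables the column-row and row-column images ``coincide'' is false: distinctness of the variables removes the consistency constraints, but not the geometric definedness constraints on the concatenations. For your substitution, $h(x_1)\varominus h(x_3)=\begin{bsmallmatrix}\ta&\ta\end{bsmallmatrix}\varominus\begin{bsmallmatrix}\tb\end{bsmallmatrix}=\undef$ because the widths $2$ and $1$ differ, so $h_{\varominus,\varobar}(\alpha)=\undef\neq W$; your $h$ witnesses only that $W$ is a column-row image (the vertical cut falls after column $2$ in the top row but after column $1$ in the bottom row, which is exactly the ``column-row but not proper'' situation of Figure~\ref{fig-illustration}). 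The witness array itself can be saved: taking $g(x_1):=\begin{bsmallmatrix}\ta&\ta\end{bsmallmatrix}$, $g(x_2):=\begin{bsmallmatrix}\tb\end{bsmallmatrix}$, $g(x_3):=\begin{bsmallmatrix}\tb&\tc\end{bsmallmatrix}$, $g(x_4):=\begin{bsmallmatrix}\tc\end{bsmallmatrix}$ aligns the cuts, and then $g_{\varobar,\varominus}(\alpha)=g_{\varominus,\varobar}(\alpha)=W$, so $W$ is a proper (hence also row-column and column-row) image and the rest of your argument goes through. As written, though, the membership claim for $z=p$ and $z=c$ is unsupported.
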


\begin{proof}
Let $\beta := 
\begin{bsmallmatrix} 
x_1&x_2 \\ 
x_3&x_4 \\ 
\end{bsmallmatrix}$ and let $W :=
\begin{bsmallmatrix}
  \ta & \ta & \ta  \\
  \ta & \ta & \ta  \\
  \ta & \ta & \ta  \\
 \end{bsmallmatrix}
$. For every $z \in \{p,c,r,cr\}$, $W \in \mathcal{L}_{\Sigma, z}(\beta)$, since $g_{\varominus, \varobar}(\beta) = g_{\varobar, \varominus}(\beta) = W$, where $g$ is defined by
 \begin{equation*}
g(x_1) :=
\begin{bsmallmatrix}
  \ta & \ta \\
  \ta & \ta \\
 \end{bsmallmatrix},\:\:\\
g(x_2) :=
\begin{bsmallmatrix}
  \ta \\
  \ta \\
 \end{bsmallmatrix},\:\:\\
 g(x_3) := \begin{bsmallmatrix}
  \ta & \ta \\
 \end{bsmallmatrix},\:\:\\
 g(x_4) := \begin{bsmallmatrix}
  \ta \\
 \end{bsmallmatrix}\,.
\end{equation*}
By Lemma~\ref{lem-hom-charac}, it is obvious that there does not exist any morphism $h$ with $h(\beta) = W$. Thus, $W \notin L_{\Sigma,h}(\beta)$ and, for every $z \in \{p, c, r, cr\}$, $L_{\Sigma,h}(\beta) \neq L_{\Sigma,z}(\beta)$.
\end{proof}

\begin{lemma}\label{prcIncompLemma}
For every $z, z' \in \{p,c,r,cr\}$, $z \neq z'$, $L_{\Sigma, z}(\begin{bsmallmatrix} x_1&x_2 \\ x_2&x_1 \end{bsmallmatrix}) \neq L_{\Sigma, z'}(\begin{bsmallmatrix} x_1&x_2 \\ x_2&x_1 \\ \end{bsmallmatrix})$.
\end{lemma}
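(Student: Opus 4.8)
The plan is to exploit a simple, substitution-independent dimensional invariant of the two image types and to combine it with the inclusion diagram established after the definition of the classes, so that all six required inequalities collapse to exhibiting only two arrays. Write $\alpha := \begin{bsmallmatrix} x_1&x_2 \\ x_2&x_1 \end{bsmallmatrix}$ and, for a substitution $h$, abbreviate $A := h(x_1)$ and $B := h(x_2)$. Then the column-row image is $h_{\varobar, \varominus}(\alpha) = (A \varobar B) \varominus (B \varobar A)$, which is defined only if $|A|_r = |B|_r$ and in that case has exactly $2|A|_r$ rows; dually, the row-column image $h_{\varominus, \varobar}(\alpha) = (A \varominus B) \varobar (B \varominus A)$ is defined only if $|A|_c = |B|_c$ and then has exactly $2|A|_c$ columns. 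The first step is therefore to record the key invariant: \emph{every} array in $L_{\Sigma, r}(\alpha)$ has an even number of rows and \emph{every} array in $L_{\Sigma, c}(\alpha)$ has an even number of columns, no matter which substitution witnesses membership.

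Next I would use the inclusions $L_{\Sigma, p}(\alpha) \subseteq L_{\Sigma, r}(\alpha)$, $L_{\Sigma, p}(\alpha) \subseteq L_{\Sigma, c}(\alpha)$, $L_{\Sigma, r}(\alpha) \subseteq L_{\Sigma, rc}(\alpha)$ and $L_{\Sigma, c}(\alpha) \subseteq L_{\Sigma, rc}(\alpha)$, together with $L_{\Sigma, rc}(\alpha) = L_{\Sigma, r}(\alpha) \cup L_{\Sigma, c}(\alpha)$, to reduce the six unordered pairs to just two witnesses. It suffices to produce one array $W \in L_{\Sigma, r}(\alpha) \setminus L_{\Sigma, c}(\alpha)$ and one array $W' \in L_{\Sigma, c}(\alpha) \setminus L_{\Sigma, r}(\alpha)$. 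Indeed, $W$ then lies in $L_{\Sigma, r}(\alpha)$ and in $L_{\Sigma, rc}(\alpha)$ but in neither $L_{\Sigma, c}(\alpha)$ nor $L_{\Sigma, p}(\alpha)$ (the latter since $L_{\Sigma, p}(\alpha) \subseteq L_{\Sigma, c}(\alpha)$), thereby separating the pairs $(r,c)$, $(r,p)$, $(c,rc)$ and $(p,rc)$; symmetrically, $W'$ separates $(c,r)$, $(c,p)$ and $(r,rc)$. Together these cover all six cases.

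For the witnesses I would take $W := \begin{bsmallmatrix} \ta&\tb&\tb \\ \tb&\tb&\ta \end{bsmallmatrix}$, which belongs to $L_{\Sigma, r}(\alpha)$ via $h(x_1) := \begin{bsmallmatrix} \ta \end{bsmallmatrix}$ and $h(x_2) := \begin{bsmallmatrix} \tb&\tb \end{bsmallmatrix}$, since $(A \varobar B) \varominus (B \varobar A) = W$. As $W$ has three columns, the invariant shows that $W$ cannot be a row-column image, so $W \notin L_{\Sigma, c}(\alpha)$ and hence $W \notin L_{\Sigma, p}(\alpha)$. Symmetrically I would take $W' := \transpose{W} = \begin{bsmallmatrix} \ta&\tb \\ \tb&\tb \\ \tb&\ta \end{bsmallmatrix}$, obtained via $h(x_1) := \begin{bsmallmatrix} \ta \end{bsmallmatrix}$ and $h(x_2) := \begin{bsmallmatrix} \tb \\ \tb \end{bsmallmatrix}$; having three rows, it is not a column-row image, so $W' \notin L_{\Sigma, r}(\alpha)$ and hence $W' \notin L_{\Sigma, p}(\alpha)$.

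The one point that needs genuine care, and which I regard as the main obstacle, is the non-membership claim $W \notin L_{\Sigma, c}(\alpha)$ (and its transpose): this must hold for \emph{all} substitutions, not merely for the ones that first come to mind. Exactly this uniformity is supplied by the parity invariant from the first step, which is why establishing that invariant cleanly is the crux; once it is in place, the remainder is bookkeeping against the inclusion diagram, and the routine verifications of the two explicit factorisations above.
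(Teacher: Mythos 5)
Your proof is correct and follows essentially the same route as the paper's: the paper likewise relies on the parity invariant (every column-row image of $\begin{bsmallmatrix} x_1&x_2 \\ x_2&x_1 \end{bsmallmatrix}$ has an even number of rows, every row-column image an even number of columns) together with one $2\times 3$ and one $3\times 2$ witness, the only difference being that its witnesses are the all-$\ta$ arrays. Note the one (trivially repairable) loss of generality in your version: your witnesses use two distinct letters $\ta,\tb$, so as written the argument requires $|\Sigma|\geq 2$, whereas the parity argument never uses the distinctness of the entries and the all-$\ta$ witnesses settle the claim over any alphabet.
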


\begin{proof}
Let $\gamma := 
\begin{bsmallmatrix} 
x_1&x_2 \\ 
x_2&x_1 
\end{bsmallmatrix}$ and let 
$W_1 :=
\begin{bsmallmatrix}
  \ta & \ta\\
  \ta & \ta\\
  \ta & \ta\\
 \end{bsmallmatrix},
 W_2 :=
\begin{bsmallmatrix}
  \ta & \ta & \ta\\
  \ta & \ta & \ta\\
\end{bsmallmatrix}$.
We observe that $g_{\varominus, \varobar}(\gamma) = W_1$, $g'_{\varobar, \varominus}(\gamma) = W_2,$ where $g, g'$ are defined by 
 \begin{equation*}
g(x_1) :=
\begin{bsmallmatrix}
  \ta \\
 \end{bsmallmatrix},\,
g(x_2) :=
\begin{bsmallmatrix}
  \ta\\
  \ta\\
 \end{bsmallmatrix},\,
 g'(x_1) := \begin{bsmallmatrix}
  \ta\\
 \end{bsmallmatrix},\,
 g'(x_2) := \begin{bsmallmatrix}
  \ta & \ta \\
 \end{bsmallmatrix}\,.
\end{equation*}
Thus, $W_1 \in L_{\Sigma, c}(\gamma)$, $W_2 \in L_{\Sigma, r}(\gamma)$ and $W_1, W_2 \in L_{\Sigma, rc}(\gamma)$. On the other hand, $W_1, W_2 \notin L_{\Sigma, p}(\gamma)$, since every proper image of $\gamma$ must have an even number of columns and an even number of rows. Consequently, for every $z \in \{r,c, rc\}$, $L_{\Sigma, p}(\gamma) \neq L_{\Sigma, z}(\gamma)$. Similarly, $W_1 \notin L_{\Sigma, r}$ and $W_2 \notin L_{\Sigma, c}$, since every column-row image of $\gamma$ must have an even number of rows and every row-column image of $\gamma$ must have an even number of columns. This implies that, for every $z, z' \in \{c,r,cr\}$, $z \neq z'$, $L_{\Sigma, z}(\gamma) \neq L_{\Sigma, z'}(\gamma)$, which concludes the proof.
\end{proof}

\section{Closure Properties of Array Pattern Languages}

The research of closure properties of classes of formal languages is a classical topic in this area. However, the number of natural properties is richer in the case of arrays compared to the more conventional string case. Thus, in this section, we classify the operations that shall be investigated in this regard according to whether or not they correspond to string language operations. \par
First, in Section~\ref{sec:stringcaseops}, we investigate operations that correspond to string language operations. These are the Boolean operations of union, intersection and complementation, and also two special cases of (inverse) morphisms: letter-to-letter morphisms, and more special surjective  letter-to-letter morphisms, known as  
\emph{projections} in the terminology of array languages, and, more generally, the two-dimensional morphisms as defined in Section~\ref{sec:2DPatternLanguages}\par
Next, in Section~\ref{sec:similartostringcaseops}, we take a closer look at operations similar to string language operations. More precisely, we investigate closure under concatenation and concatenation closure (or Kleene star), which constitute classical operations for string languages, but, with respect to the array case, we encounter an important difference, namely, there are two different types of concatenations: row and column concatenation. In particular, the concatenation of two arrays could be undefined (just because the dimensions do not match), but the concatenation of the two corresponding languages need not be empty.\par
Finally, in Section~\ref{sec:arraycaseops}, we investigate \emph{operations special to arrays}, that are usually not considered or even defined for string languages. These are mainly geometric operations like quarter turn, half turn, reflection or transposition of an array.

\subsection{String Language Operations}\label{sec:stringcaseops}

We first point out that, due to Lemma~\ref{lem-transfer} below, whenever a non-closure result is known for terminal-free non-erasing string pattern languages, this would straightforwardly transfer to the array case. We will therefore focus on finding proofs for the string case for non-closure properties, and conversely, we will try to give proofs for the array case for closure properties.
Interestingly enough, (non-)closure properties have not been studied for the (classical)  terminal-free non-erasing string pattern languages, all published
proofs that we are aware of for this topic use terminals or erasing. So, our study also contributes to the theory of  string pattern languages.
Conversely, if we do not manage to find proofs or examples as required for the mentioned approach, this implicitly always raises an open classical string language question. For any mode $z \in \{h, p, r, c, rc\}$ and any pattern $\pi$, let $L_{\Sigma, z}^{\oneD}(\pi)$ denote those arrays from $L_{\Sigma, z}(\pi)$ that have just one row, i.\,e., $L_{\Sigma, z}^{\oneD}(\pi) := \{W \in L_{\Sigma, z}(\pi) \mid |W|_r = 1\}$. Clearly, such arrays can be interpreted as strings and vice versa. In this sense, $L_{\Sigma, z}^{\oneD}(\pi)$ and the string language $L^{\oneD}_{\Sigma}(\pi)$ generated by the pattern $\pi$ coincide, as long as $z\neq h$. For $z=h$, we encounter the special case that all inserted words have to be of the same length. Let us formulate this more formally:

\begin{lemma}\label{lem-transfer}
Let $\pi$ be an array pattern of height one.
Then, $\pi$ is, at the same time, a string pattern.
Moreover, for any $z \in \{p, r, c, rc\}$, 
$L_{\Sigma, z}^{\oneD}(\pi)=L^{\oneD}_{\Sigma}(\pi)$, while
 $L_{\Sigma, h}^{\oneD}(\pi)\subseteq L^{\oneD}_{\Sigma}(\pi)$.
\end{lemma}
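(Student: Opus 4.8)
The plan is to handle the four non-morphic modes $z \in \{p,r,c,rc\}$ together after a preliminary reduction, and then treat $z = h$ separately. The first assertion is immediate: a height-one array pattern is a $1 \times n$ array $\begin{bsmallmatrix} y_1 & y_2 & \cdots & y_n \end{bsmallmatrix}$ over $X$, which is literally the string $y_1 y_2 \cdots y_n \in X^+$. Next I would observe that, because $\pi$ has only one row, the mappings $h_{\varobar, \varominus}$ and $h_{\varominus, \varobar}$ both collapse to the single column concatenation $h(y_1) \varobar h(y_2) \varobar \cdots \varobar h(y_n)$: in $h_{\varominus, \varobar}$ each column consists of a single entry, so the inner row concatenations are trivial, and both orders of assembly produce the same array. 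Consequently $h_{\varobar, \varominus}(\pi) = h_{\varominus, \varobar}(\pi)$ for every substitution $h$, which yields $L_{\Sigma, r}(\pi) = L_{\Sigma, c}(\pi) = L_{\Sigma, p}(\pi) = L_{\Sigma, rc}(\pi)$ and reduces all four cases to a single one.

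For the equality $L_{\Sigma, z}^{\oneD}(\pi) = L^{\oneD}_{\Sigma}(\pi)$ I would prove both inclusions by matching images against \emph{characteristic factorisations}. If $W \in L_{\Sigma, z}(\pi)$ with $|W|_r = 1$, then $W = h(y_1) \varobar \cdots \varobar h(y_n)$ for some substitution $h$; since a column concatenation is defined only when the heights agree and the result has height one, every factor $h(y_j)$ must be a height-one array, i.e.\ a string $u_j \in \Sigma^+$, and $W = u_1 u_2 \cdots u_n$. As $h$ is a function on $X$, we have $y_i = y_j \Rightarrow u_i = u_j$, which is exactly the defining condition of a characteristic factorisation of $W$ for the string pattern $\pi$; hence $W \in L^{\oneD}_{\Sigma}(\pi)$. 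Conversely, any characteristic factorisation $W = u_1 \cdots u_n$ induces a height-one substitution $h(y_j) := u_j$ with $h(y_1) \varobar \cdots \varobar h(y_n) = W$, so $W \in L_{\Sigma, z}^{\oneD}(\pi)$. This settles the non-morphic modes.

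Finally, for $z = h$ I would invoke Lemma~\ref{lem-hom-charac}: a two-dimensional morphism sending $\pi$ to $W$ arises from a \emph{uniform} substitution $g$, so all factors $g(y_j)$ share a common width and a common height. Restricting to images with $|W|_r = 1$ forces each $g(y_j)$ to be a string, and uniformity then forces all these strings to have the same length. Thus every element of $L_{\Sigma, h}^{\oneD}(\pi)$ arises from a characteristic factorisation all of whose factors have equal length, which is a special case of a characteristic factorisation; hence $L_{\Sigma, h}^{\oneD}(\pi) \subseteq L^{\oneD}_{\Sigma}(\pi)$. The inclusion is in general strict, since factorisations with factors of differing lengths are realisable for $z \ne h$ but not for $z = h$.

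I do not expect a genuine obstacle here. The only steps requiring care are the reduction forcing each factor to have height one, which rests on the fact that a column concatenation equals $\undef$ unless the heights coincide, and the correct identification of the array-image condition with the characteristic-factorisation definition of one-dimensional pattern languages; both are bookkeeping rather than substantive difficulties.
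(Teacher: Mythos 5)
Your proof is correct and follows the same route the paper takes: the paper treats this lemma as immediate from the observation that height-one arrays are strings, that both assembly orders collapse to a single column concatenation, and that the $h$-mode additionally forces equal-length factors via uniform substitutions (Lemma~\ref{lem-hom-charac}). You have merely spelled out the bookkeeping the paper leaves implicit, and all steps check out.
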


We shall now prove non-closure properties for $L^{\oneD}_{\Sigma}(\pi)$, which directly carry over to the classes $L_{\Sigma, z}(\pi)$, $z \in \{h, p, r, c, rc\}$ (for some operations, however, the class $L_{\Sigma, h}(\pi)$  constitutes a special case, which is treated separately). To this end, we will mostly focus on two patterns: $\alpha=xyx$ and $\beta=xxy$. The next lemma states an immediate observation for these patterns.

\begin{lemma}
Over the terminal alphabet $\Sigma=\{\ta,\tb\}$, 
let $L_s(\alpha)$ ($L_s(\beta)$) denote
the shortest words that can be described by $\alpha$ and $\beta$, respectively, disallowing erasing.
Then,
\begin{eqnarray*}
L_s(\alpha)&=&\{\ta \ta \ta, \ta \tb \ta, \tb \ta \tb, \tb \tb \tb\}\,,\\
L_s(\beta)&=&\{\ta \ta \ta, \ta \ta \tb, \tb \tb \ta, \tb \tb \tb\}\,.
\end{eqnarray*}
\end{lemma}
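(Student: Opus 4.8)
The plan is to exploit the elementary fact that, for a nonerasing substitution, the length of the image of a string pattern is bounded below by the number of positions of the pattern, with equality precisely when every variable is replaced by a single terminal symbol.

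First I would record the length bound. If $h : X \to \Sigma^{+}$ is a nonerasing substitution and $\pi = y_1 y_2 \cdots y_n$ is a string pattern with $y_i \in X$, then the image satisfies $|h(\pi)| = \sum_{i=1}^{n} |h(y_i)| \geq n$, since each $|h(y_i)| \geq 1$; moreover equality holds if and only if $|h(y_i)| = 1$ for every $i$, that is, if and only if $h$ maps each variable to a single symbol of $\Sigma$. Since both $\alpha = xyx$ and $\beta = xxy$ have exactly three positions, the shortest words in $L^{\oneD}_{\Sigma}(\alpha)$ and $L^{\oneD}_{\Sigma}(\beta)$ therefore have length exactly three and are obtained precisely by those substitutions that send each variable to a single terminal.

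Next I would carry out the finite enumeration. Both patterns use exactly the two variables $x$ and $y$, so over $\Sigma = \{\ta, \tb\}$ there are $2^2 = 4$ single-symbol substitutions, each determined by a choice $(h(x), h(y)) \in \{\ta, \tb\}^{2}$. For $\alpha = xyx$ the image equals $h(x)\,h(y)\,h(x)$, which for the four choices yields $\ta \ta \ta$, $\ta \tb \ta$, $\tb \ta \tb$ and $\tb \tb \tb$; for $\beta = xxy$ the image equals $h(x)\,h(x)\,h(y)$, yielding $\ta \ta \ta$, $\ta \ta \tb$, $\tb \tb \ta$ and $\tb \tb \tb$. Reading off these four words in each case reproduces exactly the sets $L_s(\alpha)$ and $L_s(\beta)$ asserted in the statement.

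There is essentially no genuine obstacle here: the only step deserving a word of justification is the length bound together with its equality condition, after which the claim reduces to a straightforward check of four cases per pattern. The lemma is really a direct computation, stated for convenient reference in the subsequent non-closure arguments for $\alpha$ and $\beta$.
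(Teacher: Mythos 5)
Your proof is correct and matches the paper's intent: the paper states this lemma as an immediate observation without proof, and your argument (length lower bound with equality exactly for single-letter substitutions, then enumeration of the four substitutions per pattern) is precisely the straightforward verification being taken for granted.
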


\begin{proposition}\label{prop-union}
 For any non-unary alphabet $\Sigma$, $\mathcal{L}^{\oneD}_{\Sigma}$ is not closed under union.
\end{proposition}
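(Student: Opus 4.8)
The plan is to prove that $\mathcal{L}^{\oneD}_{\Sigma}$ is not closed under union by exhibiting two patterns whose associated pattern languages are in the class, but whose union is not. Given the emphasis placed on the patterns $\alpha = xyx$ and $\beta = xxy$ in the preceding lemma, I would use precisely these. Both $L^{\oneD}_{\Sigma}(\alpha)$ and $L^{\oneD}_{\Sigma}(\beta)$ are members of $\mathcal{L}^{\oneD}_{\Sigma}$ by definition, so the entire task reduces to showing that $L^{\oneD}_{\Sigma}(\alpha) \cup L^{\oneD}_{\Sigma}(\beta)$ is \emph{not} a one-dimensional pattern language for any single pattern.

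First I would argue by contradiction: assume $L := L^{\oneD}_{\Sigma}(\alpha) \cup L^{\oneD}_{\Sigma}(\beta) = L^{\oneD}_{\Sigma}(\pi)$ for some pattern $\pi$. The key structural constraint comes from looking at the shortest words in $L$, which by the previous lemma are exactly the length-three words in $L_s(\alpha) \cup L_s(\beta) = \{\ta\ta\ta, \ta\tb\ta, \tb\ta\tb, \ta\ta\tb, \tb\tb\ta, \tb\tb\tb\}$ (over $\Sigma = \{\ta,\tb\}$). Since $\pi$ is terminal-free and nonerasing, the length of the shortest word in $L^{\oneD}_{\Sigma}(\pi)$ equals $|\pi|$ (each variable is replaced by a single symbol), so I would deduce $|\pi| = 3$, say $\pi = z_1 z_2 z_3$. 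The shortest words of $L^{\oneD}_{\Sigma}(\pi)$ are obtained by all single-symbol substitutions of the variables of $\pi$, and their number and equality pattern are completely determined by which of $z_1, z_2, z_3$ coincide. I would then check each possible equality type of $\pi$ against the observed set of six shortest words: for instance, if all three variables are equal, only $\ta\ta\ta$ and $\tb\tb\tb$ arise; if all are distinct, all eight length-three words arise; and each of the three ``one coincidence'' shapes ($z_1=z_2\neq z_3$, etc.) yields exactly the four shortest words of $\beta$ or $\alpha$ alone. None of these produces exactly the six-element target set, giving the contradiction.

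The main obstacle, and the step requiring the most care, is making the reduction ``$L^{\oneD}_{\Sigma}(\pi)$ determines its minimal words from the equality type of $\pi$'' fully rigorous for terminal-free nonerasing patterns, and confirming that the set of six shortest words genuinely cannot be realized. I would verify that each of the five equality types of a length-three pattern yields a minimal-word set of size $2$, $8$, or $4$, none equal to $6$, so a simple counting argument on the number of length-three words already suffices: six is not among $\{2,4,8\}$. This counting shortcut is the cleanest way to close the argument and sidesteps any delicate case analysis about \emph{which} words appear. I would then note that this argument depends only on $|\Sigma| \geq 2$, so the result holds for every non-unary alphabet, as claimed in the statement.
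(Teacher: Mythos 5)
Your proposal is correct and follows essentially the same route as the paper: both reduce to a length-three pattern $\gamma = z_1z_2z_3$ via the shortest words in the union and then rule out every equality type of the three variables. The only cosmetic difference is that you close the case analysis by counting the minimal words ($2$, $4$ or $8$, never $6$), whereas the paper exhibits, for each type with coinciding variables, a specific word ($\ta\tb\ta$ or $\ta\ta\tb$) that cannot be generated.
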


\begin{proof}
Without loss of generality, $\{\ta, \tb\}\subseteq\Sigma$. In the following argument, we actually focus on 
$\Sigma=\{\ta, \tb\}$, but this can be easily extended to the more general case. 
Assume that there was a pattern $\gamma$ with $\mathcal{L}^{\oneD}_{\Sigma}(\gamma)=\mathcal{L}^{\oneD}_{\Sigma}(\alpha)\cup \mathcal{L}^{\oneD}_{\Sigma}(\beta)$.
Then, $$L_s(\gamma)=L_s(\alpha)\cup L_s(\beta)=\{\ta \ta \ta, \ta \ta \tb, \ta \tb \ta, \tb \ta \tb, \tb \tb \ta, \tb \tb \tb\}.$$
This means that $\gamma$ contains exactly three variable occurrences (with more, these words cannot be generated, with less, shorter
words could be generated), i.e., $\gamma=y_1\,y_2\,y_3$, $y_i \in X$, $1 \leq i \leq 3$. As $L_s(\gamma)\neq\{\ta, \tb\}^3$, some of these variables must coincide, which leads to a contradiction. More precisely, if $y_1 = y_2$ or $y_2 = y_3$, then $\ta \tb \ta$ cannot be generated and if $y_1 = y_3$, then $\ta \ta \tb$ cannot be generated.
Hence, $\gamma$ with $L_s(\gamma)=L_s(\alpha)\cup L_s(\beta)$  cannot exist.
\end{proof}

Now if there was a pattern $\gamma$ such that  $L_{\Sigma, z}(\gamma)= L_{\Sigma, z}(\alpha)\cup L_{\Sigma, z}(\beta)$, $z \in \{p, r, c, rc\}$, then, by Lemma~\ref{lem-transfer}, this would imply $L_{\Sigma}^{\oneD}(\gamma)= L_{\Sigma}^{\oneD}(\alpha)\cup L_{\Sigma}^{\oneD}(\beta)$, contradicting Proposition~\ref{prop-union}. We point out that in the proof of Proposition~\ref{prop-union}, we do not use any replacements by words of different lengths to obtain our contradiction. Hence, this argument is also valid in the case when $z=h$. 

\begin{corollary}\label{cor-clos-union}
None of the array pattern language classes under consideration (over some non-unary alphabet) is closed under union.
\end{corollary}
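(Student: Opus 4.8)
The corollary follows essentially by combining the one-dimensional non-closure result of Proposition~\ref{prop-union} with the transfer lemma (Lemma~\ref{lem-transfer}), so the plan is to leverage these two ingredients directly rather than to build any new machinery. First I would treat the modes $z \in \{p, r, c, rc\}$ uniformly. Suppose, for contradiction, that for one of these modes the class $\mathcal{L}_{\Sigma, z}$ were closed under union. Then in particular there would exist a single array pattern $\gamma$ with $L_{\Sigma, z}(\gamma) = L_{\Sigma, z}(\alpha) \cup L_{\Sigma, z}(\beta)$, where $\alpha = xyx$ and $\beta = xxy$ are regarded as array patterns of height one. Since both $\alpha$ and $\beta$ have height one, and since the left-hand side must then contain arrays of height one (e.g.\ any single-row image of $\alpha$), restricting attention to the height-one members forces $\gamma$ to have height one as well, so that $\gamma$ is itself a string pattern.

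Having reduced to a height-one pattern $\gamma$, I would intersect both sides with the set $\{W \in \Sigma^{++} \mid |W|_r = 1\}$ and invoke Lemma~\ref{lem-transfer}. For $z \in \{p, r, c, rc\}$ the lemma gives $L_{\Sigma, z}^{\oneD}(\gamma) = L^{\oneD}_{\Sigma}(\gamma)$ and likewise for $\alpha$ and $\beta$, so the assumed array-language identity collapses to the string-language identity $L^{\oneD}_{\Sigma}(\gamma) = L^{\oneD}_{\Sigma}(\alpha) \cup L^{\oneD}_{\Sigma}(\beta)$. This directly contradicts Proposition~\ref{prop-union}, which establishes precisely that no such string pattern $\gamma$ exists over a non-unary alphabet. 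Hence none of the classes $\mathcal{L}_{\Sigma, z}$, $z \in \{p, r, c, rc\}$, can be closed under union.

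The remaining case $z = h$ is the one place requiring a little care, since Lemma~\ref{lem-transfer} only yields the inclusion $L_{\Sigma, h}^{\oneD}(\gamma) \subseteq L^{\oneD}_{\Sigma}(\gamma)$ rather than an equality, so the clean reduction above is not immediately available. Here I would instead reuse the \emph{internal structure} of the proof of Proposition~\ref{prop-union}: as the paper already observes in the paragraph preceding the corollary, that argument only concerns the shortest generated words $L_s(\cdot)$ and uses substitutions by single symbols, i.e.\ it never relies on replacements by words of differing lengths. Since for height-one patterns the mode $h$ agrees with the other modes precisely on the length-uniform (in particular shortest-word) substitutions, the same counting argument on $L_s(\gamma) = L_s(\alpha) \cup L_s(\beta)$ applies verbatim and rules out a suitable $\gamma$ in mode $h$ as well.

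The only genuine obstacle, then, is the mode $h$, and the key observation that dissolves it is that the lower bound in Proposition~\ref{prop-union} is driven entirely by shortest words obtained via uniform (length-one) substitutions, which are common to all five modes. Everything else is routine: the reduction to height-one patterns and the pullback of the union identity to the string setting are immediate from Lemma~\ref{lem-transfer}. I would phrase the final statement so that the four ``easy'' modes are handled by the transfer argument and the mode $h$ is dispatched by the remark that Proposition~\ref{prop-union}'s proof is length-uniform, thereby covering all classes under consideration over any non-unary alphabet.
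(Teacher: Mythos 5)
Your proof is correct and follows essentially the same route as the paper: reduce to the string case via Lemma~\ref{lem-transfer} and invoke Proposition~\ref{prop-union} for $z \in \{p,r,c,rc\}$, then handle $z=h$ by observing that the proof of Proposition~\ref{prop-union} only uses substitutions by words of equal length. Your explicit justification that $\gamma$ must have height one is a detail the paper leaves implicit, but it is the same argument.
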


We proceed with the intersection operation.

\begin{proposition}\label{prop-intersection} For any non-unary alphabet $\Sigma$, 
 $\mathcal{L}^{\oneD}_{\Sigma}$ is not closed under intersection.
\end{proposition}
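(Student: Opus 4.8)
The plan is to mirror the structure of the proof of Proposition~\ref{prop-union}, exploiting the same finite ``shortest word'' fingerprint of a pattern to derive a contradiction, now for intersection rather than union. As before, it suffices to treat $\Sigma = \{\ta, \tb\}$, since any larger non-unary alphabet contains two distinct letters and the argument relativises to them. The idea is to pick two patterns whose $\mathcal{L}^{\oneD}_{\Sigma}$-languages have a nonempty intersection that no single pattern can describe, and to detect this through the sets $L_s(\cdot)$ of shortest words.

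First I would choose a candidate pair of patterns. A natural choice is again $\alpha = xyx$ and $\beta = xxy$, for which we already know $L_s(\alpha) = \{\ta\ta\ta, \ta\tb\ta, \tb\ta\tb, \tb\tb\tb\}$ and $L_s(\beta) = \{\ta\ta\ta, \ta\ta\tb, \tb\tb\ta, \tb\tb\tb\}$. Their intersection at the level of shortest words is $L_s(\alpha) \cap L_s(\beta) = \{\ta\ta\ta, \tb\tb\tb\}$. The next step is to verify that the full intersection $\mathcal{L}^{\oneD}_{\Sigma}(\alpha) \cap \mathcal{L}^{\oneD}_{\Sigma}(\beta)$ is genuinely infinite and nontrivial, and to pin down which words of each length it contains. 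Here I would argue that a word $w$ lies in $\mathcal{L}^{\oneD}_{\Sigma}(\alpha) \cap \mathcal{L}^{\oneD}_{\Sigma}(\beta)$ exactly when it admits both a characteristic factorisation for $xyx$ and one for $xxy$, which constrains $w$ strongly (for instance, it forces certain prefix/suffix coincidences); the shortest such words are the two unary cubes, as computed.

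Then I would run the contradiction argument. Suppose $\gamma$ is a pattern with $\mathcal{L}^{\oneD}_{\Sigma}(\gamma) = \mathcal{L}^{\oneD}_{\Sigma}(\alpha) \cap \mathcal{L}^{\oneD}_{\Sigma}(\beta)$. Since the shortest words in this intersection have length $3$ and there are two of them, $\gamma$ must have exactly three variable occurrences, say $\gamma = y_1 y_2 y_3$, and moreover $L_s(\gamma) = \{\ta\ta\ta, \tb\tb\tb\}$. A pattern of length three with $L_s(\gamma) = \{\ta\ta\ta, \tb\tb\tb\}$ (only the two unary cubes, and nothing like $\ta\tb\ta$) must identify all three positions, i.e.\ $y_1 = y_2 = y_3$, so $\gamma \sim x_1 x_1 x_1$. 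But then $\mathcal{L}^{\oneD}_{\Sigma}(\gamma)$ consists only of words of the form $u^3$ for $u \in \Sigma^+$, whereas the intersection contains words that are not perfect cubes; exhibiting one such witness word that lies in both $\mathcal{L}^{\oneD}_{\Sigma}(\alpha)$ and $\mathcal{L}^{\oneD}_{\Sigma}(\beta)$ but is not a cube completes the contradiction.

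The main obstacle I anticipate is the last step: producing an explicit word in $\mathcal{L}^{\oneD}_{\Sigma}(\alpha) \cap \mathcal{L}^{\oneD}_{\Sigma}(\beta)$ that is not of the form $u^3$, and confirming it really lies in both languages. One must find $w$ with factorisations $w = s\,t\,s$ (for $xyx$) and $w = s'\,s'\,t'$ (for $xxy$) simultaneously, which requires a small amount of combinatorics-on-words bookkeeping to check the overlapping period conditions; a short concrete witness over $\{\ta,\tb\}$ should settle it. As with Proposition~\ref{prop-union}, the argument uses only the shortest-word structure and no length-changing substitutions, so by Lemma~\ref{lem-transfer} it transfers to all array pattern classes $L_{\Sigma, z}$ with $z \in \{p, r, c, rc\}$, and, since no differing substitution lengths are invoked, to the case $z = h$ as well. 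I would then record this transfer as a corollary analogous to Corollary~\ref{cor-clos-union}.
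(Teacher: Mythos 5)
Your argument for the proposition itself is correct and is essentially the paper's proof: with $\alpha = xyx$ and $\beta = xxy$, the shortest words of the intersection are $\{\ta\ta\ta,\tb\tb\tb\}$, forcing $\gamma = xxx$, and then a witness in the intersection that is not a cube gives the contradiction. The concrete witness you were looking for is $\ta\ta\tb\ta\ta$: it lies in $L^{\oneD}_{\Sigma}(\alpha)$ via $x\mapsto\ta\ta$, $y\mapsto\tb$, in $L^{\oneD}_{\Sigma}(\beta)$ via $x\mapsto\ta$, $y\mapsto\tb\ta\ta$, and has length $5$, so it cannot lie in $L^{\oneD}_{\Sigma}(xxx)$.

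However, your closing claim that ``no differing substitution lengths are invoked'' and that the argument therefore transfers to the $h$-mode is wrong, and the witness above shows why: it enters $L^{\oneD}_{\Sigma}(\alpha)$ only through images of lengths $2$ and $1$, and $L^{\oneD}_{\Sigma}(\beta)$ only through images of lengths $1$ and $3$. This is unavoidable: under uniform (equal-length) substitutions, $w=uvu$ with $|u|=|v|$ together with $w=u'u'v'$ with $|u'|=|v'|$ forces $w$ to be a perfect cube, so $L_{\Sigma,h}^{\oneD}(\alpha)\cap L_{\Sigma,h}^{\oneD}(\beta)=L_{\Sigma,h}^{\oneD}(xxx)$ and no counterexample exists in the $h$-mode. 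Indeed, the paper explicitly notes this obstruction and proves the opposite of what you assert: $\mathcal{L}_{\Sigma,h}$ \emph{is} closed under intersection (Proposition~\ref{prop-hom-intersection}). So your transfer corollary must exclude the $h$-case, exactly as the paper's corollary does.
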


\begin{proof}
The argument resembles the previous proof.
Assume that $\gamma$ describes $L^{\oneD}_{\Sigma}(\alpha)\cap L^{\oneD}_{\Sigma}(\beta)$.
Notice that $L_s(\gamma)=\{\ta \ta \ta, \tb \tb \tb\}$, which clearly implies that $\gamma=xxx$.
However, $\ta \ta \tb \ta \ta\in (L^{\oneD}_{\Sigma}(\alpha)\cap L^{\oneD}_{\Sigma}(\beta))\setminus L^{\oneD}_{\Sigma}(\gamma)$.
\end{proof}

Notice that the replacement words we used for deriving a contradiction are of different lengths, meaning that $\ta \ta \tb \ta \ta \in L_{\Sigma}(\alpha)$ because of the replacement $x\mapsto \ta \ta$ and $y\mapsto \tb$, but
$\ta \ta \tb \ta \ta\in L_{\Sigma}(\beta)$ because of $x\mapsto \ta$ and $y\mapsto \tb \ta \ta$. Hence, we cannot conclude non-closure for the $h$-mode in the following corollary:

\begin{corollary}
None of the array pattern language classes under consideration (over some non-unary alphabet and apart from the $h$-case) is closed under intersection.
\end{corollary}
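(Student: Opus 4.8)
The plan is to mimic the transfer argument that yielded Corollary~\ref{cor-clos-union} for union, now driving it by Proposition~\ref{prop-intersection} instead. Assume, for contradiction, that for some $z \in \{p, r, c, rc\}$ the class $\mathcal{L}_{\Sigma, z}$ is closed under intersection. I regard the string patterns $\alpha = xyx$ and $\beta = xxy$ as the height-one array patterns $\begin{bsmallmatrix} x_1 & x_2 & x_1 \end{bsmallmatrix}$ and $\begin{bsmallmatrix} x_1 & x_1 & x_2 \end{bsmallmatrix}$. Closure then supplies an array pattern $\gamma$ with $L_{\Sigma, z}(\gamma) = L_{\Sigma, z}(\alpha) \cap L_{\Sigma, z}(\beta)$.

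First I would pass to the height-one parts of both sides. Since an array with $|W|_r = 1$ lies in an intersection exactly when it lies in each factor, restricting to height one commutes with intersection, giving $L_{\Sigma, z}^{\oneD}(\gamma) = L_{\Sigma, z}^{\oneD}(\alpha) \cap L_{\Sigma, z}^{\oneD}(\beta)$. Because $\alpha$ and $\beta$ have height one, Lemma~\ref{lem-transfer} identifies the right-hand side with $L^{\oneD}_{\Sigma}(xyx) \cap L^{\oneD}_{\Sigma}(xxy)$, which is non-empty, as it contains $\ta\ta\ta$.

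The one genuinely new step, and the place where care is needed, is arguing that $\gamma$ itself must have height one. Every image of an array pattern $\pi$ has height at least $|\pi|_r$, because each variable is replaced by a non-empty array and the rows of $\pi$ are stacked by row concatenation (this holds for each of the modes $p, r, c$, and hence for their union $rc$); thus if $|\gamma|_r > 1$ then $L_{\Sigma, z}(\gamma)$ would contain no height-one array, making $L_{\Sigma, z}^{\oneD}(\gamma)$ empty and contradicting the non-emptiness just observed. With $|\gamma|_r = 1$ established, Lemma~\ref{lem-transfer} applies to $\gamma$ as well, yielding $L^{\oneD}_{\Sigma}(\gamma) = L^{\oneD}_{\Sigma}(xyx) \cap L^{\oneD}_{\Sigma}(xxy)$, i.e.\ a string pattern language equal to this intersection. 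This is precisely what Proposition~\ref{prop-intersection} rules out, so no such $\gamma$ can exist.

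Finally I would record why the $h$-case is excluded, exactly as anticipated by the remark preceding the corollary: the witness $\ta\ta\tb\ta\ta$ driving Proposition~\ref{prop-intersection} lies in $L_{\Sigma}(\alpha)$ via $x \mapsto \ta\ta,\ y \mapsto \tb$ and in $L_{\Sigma}(\beta)$ via $x \mapsto \ta,\ y \mapsto \tb\ta\ta$, i.e.\ through substitutions of differing lengths. For the morphic mode these substitutions would have to be uniform by Lemma~\ref{lem-hom-charac}, so the membership argument breaks down and the transfer fails; the $h$-mode therefore remains genuinely open here and must be omitted from the statement.
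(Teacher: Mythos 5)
Your proof is correct and follows essentially the same route as the paper, which obtains this corollary by the same transfer argument via Lemma~\ref{lem-transfer} that it spelled out for the union case, combined with Proposition~\ref{prop-intersection} and the preceding remark about the differing-length substitutions excluding the $h$-mode. You additionally make explicit the small step that $\gamma$ must have height one (since any image of a taller array pattern has more than one row), which the paper leaves implicit; this is a welcome clarification rather than a deviation.
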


Indeed, the $h$-mode plays a special r\^ole, as can be seen by the following 
result.

\begin{proposition}\label{prop-hom-intersection}Let $\Sigma$ be some alphabet.
Then, $\mathcal{L}_{\Sigma, h}$ 
is closed under intersection.
 \end{proposition}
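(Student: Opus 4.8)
The plan is to realise the intersection as the morphic pattern language of a single pattern $\gamma$ defined on a common refinement grid. By Lemma~\ref{lem-hom-charac}, for a $p \times q$ pattern $\delta$ we have $W \in L_{\Sigma, h}(\delta)$ exactly when $W$ can be cut by a \emph{uniform} grid into $p \times q$ equal-sized blocks, one per entry of $\delta$, such that blocks indexed by the same variable carry identical content. Let $\alpha$ be $p_1 \times q_1$ and $\beta$ be $p_2 \times q_2$, and put $P := \operatorname{lcm}(p_1, p_2)$ and $Q := \operatorname{lcm}(q_1, q_2)$. I would define $\gamma$ on the index set $\{1, \dots, P\} \times \{1, \dots, Q\}$, which serves as a uniform grid that simultaneously refines the block grids of both $\alpha$ and $\beta$.

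First I would record the crucial dimensional observation: if $W \in L_{\Sigma, h}(\alpha) \cap L_{\Sigma, h}(\beta)$, then $|W|_r$ is a common multiple of $p_1, p_2$ and $|W|_c$ a common multiple of $q_1, q_2$, hence $|W|_r = Pm$ and $|W|_c = Qn$ for some $m, n$. Consequently the $P \times Q$ grid cuts $W$ into equal $m \times n$ cells, and each $\alpha$-block (of size $(P/p_1) \times (Q/q_1)$ cells) and each $\beta$-block is an integral union of such cells; this is where I use that the \emph{lcm} grid, being uniform, genuinely refines both (a non-uniform union of block boundaries would fail to yield a valid pattern). To encode the constraints I would introduce, on the cell positions, the relation $\equiv_\alpha$ declaring $(i, j) \equiv_\alpha (i', j')$ iff the two cells occupy the same position within their $\alpha$-blocks and those blocks carry the same variable of $\alpha$; this is readily checked to be an equivalence relation, and $\equiv_\beta$ is defined analogously. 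Let $\equiv$ be the equivalence relation generated by $\equiv_\alpha \cup \equiv_\beta$, and let $\gamma$ assign pairwise distinct variables to the $\equiv$-classes.

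To prove $L_{\Sigma, h}(\gamma) \subseteq L_{\Sigma, h}(\alpha) \cap L_{\Sigma, h}(\beta)$, I would take $W \in L_{\Sigma, h}(\gamma)$, so that cells in the same $\equiv$-class agree. Whenever $\alpha[I, J] = \alpha[I', J']$, every pair of cells at equal positions within blocks $(I, J)$ and $(I', J')$ is related by $\equiv_\alpha$, hence by $\equiv$, so the two $\alpha$-blocks coincide; the induced uniform substitution thus witnesses $W \in L_{\Sigma, h}(\alpha)$ via Lemma~\ref{lem-hom-charac}, and symmetrically $W \in L_{\Sigma, h}(\beta)$. For the reverse inclusion I would take $W$ in the intersection with the dimensions above. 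A single $\equiv_\alpha$-step forces equality of the two cells in $W$ (the enclosing $\alpha$-blocks are equal, so their content at the shared relative position matches), and likewise for $\equiv_\beta$; by transitivity all cells in one $\equiv$-class have identical content in $W$. Reading off this common content as the image of the corresponding variable yields a uniform substitution realising $W$ as a morphic image of $\gamma$, so $W \in L_{\Sigma, h}(\gamma)$.

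The main obstacle I anticipate is the bookkeeping needed to verify that the uniform \emph{lcm} grid is a legitimate common refinement: one must check that the block boundaries of $\alpha$ and of $\beta$ both align with the cell boundaries of the $P \times Q$ grid (which follows since $p_1, p_2 \mid P$ and $q_1, q_2 \mid Q$), so that ``same position within a block'' is well defined and $\gamma$ is a genuine array pattern. Once this alignment and the dimensional constraint are in place, both inclusions reduce to the observation that a single $\equiv_\alpha$- or $\equiv_\beta$-step corresponds exactly to one equality forced by the respective block structure, with everything else handled by transitivity and Lemma~\ref{lem-hom-charac}.
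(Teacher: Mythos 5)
Your proof is correct and follows essentially the same route as the paper's: refine both patterns to a common $\operatorname{lcm}(p_1,p_2)\times\operatorname{lcm}(q_1,q_2)$ grid and define $\gamma$ by identifying grid cells according to the equivalence relation generated by the constraints imposed by $\alpha$ and by $\beta$, verifying both inclusions via the uniform-substitution characterisation of two-dimensional morphisms (Lemma~\ref{lem-hom-charac}). If anything, your explicit passage to the \emph{generated} equivalence relation of $\equiv_\alpha\cup\equiv_\beta$ tightens a point the paper phrases as ``identical if and only if identical in at least one of $h_\alpha(\alpha)$, $h_\beta(\beta)$'', which read literally is not transitive.
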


\begin{proof}
Assume that $a\in\Sigma$. 
Let $\alpha,\beta$ be two array patterns. Let $m_\alpha$ be the height (number of rows) of $\alpha$ and $n_\alpha$ be the width (number of columns) of $\alpha$. 
Likewise, $m_\beta$ and $n_\beta$ are understood.
Then, the width of the smallest arrays in  $L_{\Sigma, h}(\alpha)\cap
 L_{\Sigma, h}(\beta)$ equals $n=\operatorname{lcm}(n_\alpha,n_\beta)$,
and their height equals $m=\operatorname{lcm}(m_\alpha,m_\beta)$.
This can be easily seen by substituting each variable in $\alpha$
by the unique array of width $n'_\alpha=\frac{\operatorname{lcm}(n_\alpha,n_\beta)}{n_\alpha}$ and height  $m'_\alpha=\frac{\operatorname{lcm}(m_\alpha,m_\beta)}{m_\alpha}$
 over the alphabet $\{a\}$
into the pattern $\alpha$, as $n=n_\alpha \times n'_\alpha$,
$m=m_\alpha \times m'_\alpha$, and a similar substitution within $\beta$.

For each variable $x$ that occurs in $\alpha$, take $m'_\alpha \times n'_\alpha$ new variables
$x_{i,j}$ with $1\leq i\leq m'_\alpha$ and $1\leq j\leq n'_\alpha$.
Define a morphism $h_\alpha$ by replacing the variable $x$ by the following array  of height $n'_\alpha$ and width $m'_\alpha$, consisting of the previously introduced variables:
$$\begin{smallmatrix}
   x_{11} & x_{12} & \cdots & x_{1n'_\alpha}\\
   x_{21} & x_{22} & \cdots & x_{2n'_\alpha}\\
\vdots & \vdots & \vdots & \vdots\\
x_{m'_\alpha 1} & x_{m'_\alpha 2} & \cdots & x_{m'_\alpha n'_\alpha}
  \end{smallmatrix}
$$
Hence, $h_\alpha(\alpha)$ is some array of height $m$ and of width $n$.
Accordingly, one can define a morphism $h_\beta$ such that $h_\beta(\beta)$ is again some array of height $m$ and of width $n$.
Due to Lemma~\ref{lem-hom-closure}, $L_{\Sigma, h}(h_\alpha(\alpha))\subseteq L_{\Sigma, h}(\alpha)$ and $L_{\Sigma, h}(h_\beta(\beta))\subseteq L_{\Sigma, h}(\beta)$ (*).
Namely, if $U\in L_{\Sigma, h}(h_\alpha(\alpha))$, then there exists some two-dimensional morphism $g$ such that $U=g(h_\alpha(\alpha))$, i.e., there also
exists some two-dimensional morphism $f:=h_\alpha\circ g$ with $U=f(\alpha)$, so that $U\in L_{\Sigma, h}(\alpha)$. 
Now, define an array pattern $\gamma$ of height $m$ and of width $n$, consisting exclusively of variable entries, as follows:
The variables occurring at positions $(i,j)$ and $(i',j')$ in $\gamma$ (where $1\leq i,i'\leq m$ and $1\leq j,j'\leq n$) are identical if and only if the corresponding variables in at least one of the arrays $h_\alpha(\alpha)$ or  $h_\beta(\beta)$ are identical.

We claim that $L_{\Sigma, h}(\gamma)=L_{\Sigma, h}(\alpha)\cap
 L_{\Sigma, h}(\beta)$.

As $\gamma$ is obtained from $h_\alpha(\alpha)$ by identifying certain variables, due to (*) we find that:
$$L_{\Sigma, h}(\gamma)\subseteq L_{\Sigma, h}(h_\alpha(\alpha))\subseteq L_{\Sigma, h}(\alpha),$$
and likewise for $\beta$, so that $L_{\Sigma, h}(\gamma)\subseteq L_{\Sigma, h}(\alpha)\cap
 L_{\Sigma, h}(\beta)$.

Conversely, we have already argued that the smallest arrays in $L_{\Sigma, h}(\alpha)\cap
 L_{\Sigma, h}(\beta)$ have height $m$ and width $n$.
More generally, any array $U\in L_{\Sigma, h}(\alpha)\cap
 L_{\Sigma, h}(\beta)$ has height $k\cdot m$ and width $l\cdot n$.
As $U\in L_{\Sigma, h}(\alpha)$, there is some two-dimensional morphism $h_{U,\alpha}$ such that 
$U=h_{U,\alpha}(\alpha)$. Moreover, for each variable $x$ in $\alpha$, $|h_{U,\alpha}(x)|_r=k_\alpha=\frac{k\cdot m}{m_\alpha}$
and  $|h_{U,\alpha}(x)|_c=l_\alpha=\frac{l\cdot n}{n_\alpha}$. We can make an analogous reasoning with $\beta$, introducing
the constants $k_\beta=\frac{k\cdot m}{m_\beta}$ and $l_\beta=\frac{l\cdot n}{n_\beta}$ for the morphism $h_{U,\beta}$. 
As $U=h_{U,\alpha}(\alpha)=h_{U,\beta}(\beta)$, entries in $U$ must coincide both according to $\alpha$ and to $\beta$.
This is exactly reflected in the construction of $\gamma$ provided above, so that $h_{U,\alpha}=h_{\alpha}\circ h'$
and $h_{U,\beta}=h_{\beta}\circ h'$ for some two-dimensional morphism $h'$ with $h'(\gamma)=U$.
Hence, $U\in L_{\Sigma, h}(\gamma)$.
\end{proof}

Arguments as in Propositions~\ref{prop-union} and~\ref{prop-intersection}
can be given for any non-trivial binary set operation, for instance, symmetric difference or set difference.
This also gives the according result for complementation, but there is also an easier argument in that case.
Notice that, as non-erasing pattern languages or array patterns cannot reasonably cope with the empty word or the empty array,
we disregard this in the complement operation.

\begin{proposition}\label{prop-complementation}
For any  alphabet $\Sigma$, $\mathcal{L}^{\oneD}_{\Sigma}$ is not closed under complementation.
\end{proposition}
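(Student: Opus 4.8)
The plan is to exhibit a single pattern language whose complement fails to be a pattern language, choosing the simplest possible example so that the argument works for \emph{every} alphabet $\Sigma$, including the unary one. This uniformity explains why the statement is phrased for an arbitrary $\Sigma$, rather than only for non-unary alphabets as in Propositions~\ref{prop-union} and~\ref{prop-intersection}.

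First I would take the one-variable pattern $\pi := \begin{bsmallmatrix} x_1 \end{bsmallmatrix}$, viewed as a string pattern, and observe that $L^{\oneD}_{\Sigma}(\pi) = \Sigma^+$. Indeed, for every $w \in \Sigma^+$ the morphism $h$ with $h(x_1) := w$ satisfies $h(\pi) = w$, and conversely every image of $\pi$ is a nonempty word; hence $\Sigma^+ \in \mathcal{L}^{\oneD}_{\Sigma}$. Respecting the stated convention that the empty word is disregarded in the complement operation, the complement of $\Sigma^+$ within the set of nonempty words is the empty language $\emptyset$.

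Next I would argue that $\emptyset \notin \mathcal{L}^{\oneD}_{\Sigma}$. This is the conceptual crux, but it is almost immediate: every pattern $\gamma$ describes a nonempty language, since $\gamma$ contains at least one variable and, fixing any terminal symbol $\ta \in \Sigma$, the morphism mapping every variable of $\gamma$ to $\ta$ produces a word in $L^{\oneD}_{\Sigma}(\gamma)$. Consequently no pattern can describe $\emptyset$, and therefore the complement of the pattern language $\Sigma^+$ is not itself a pattern language, witnessing non-closure.

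The only point requiring care — and the closest thing to an obstacle — is the treatment of the empty word in the definition of complement; once one adopts the stated convention that $\varepsilon$ is excluded, the argument is entirely elementary. I would finally remark that the heavier reasoning alluded to in the text, mirroring the set-difference arguments behind Propositions~\ref{prop-union} and~\ref{prop-intersection}, also yields non-closure under complementation, but only for non-unary $\Sigma$; the emptiness argument above is preferable precisely because it covers all alphabets uniformly.
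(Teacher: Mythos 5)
Your proof is correct, but it takes a genuinely different (and more degenerate) route than the paper. The paper works with the pattern $\alpha = xy$: its complement within $\Sigma^+$ is the nonempty language of all length-one words, any pattern generating it would have to consist of a single variable, and the single-variable pattern generates all of $\Sigma^+$ instead --- contradiction. You instead take the single-variable pattern $x_1$, whose language is $\Sigma^+$, so that the complement (after discarding $\varepsilon$ per the stated convention) is $\emptyset$, and you observe that no pattern language is empty since a nonerasing substitution of any pattern always yields a word. Your argument is logically sound and, like the paper's, covers unary alphabets, so it does establish the proposition. What the paper's choice buys is robustness: its witness has a nonempty, non-degenerate complement, so the non-closure is not an artefact of the convention that $\emptyset$ happens to lie outside the class $\mathcal{L}^{\oneD}_{\Sigma}$ by fiat (every pattern is nonempty and substitutions are nonerasing). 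Your argument leans entirely on that edge case; a reader who felt the complement operation should only be tested on languages with nonempty complement would not be persuaded, whereas the paper's example survives that objection. If you want to keep your minimalist style but avoid the degeneracy, note that your closing remark is essentially the paper's proof: the "heavier" reasoning you allude to for $xy$ in fact works for every alphabet, not just non-unary ones, since the shortest-word count argument there never needs two distinct letters.
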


\begin{proof}
Let $\Sigma$ be some alphabet with $a\in\Sigma$. 
Consider the pattern $\alpha=xy$.
The complement (disregarding the empty word) contains as shortest words a word of the form $a$.
This implies that a pattern $\beta$  describing the complement must have length one, i.e., $\beta=x$.
Hence, $L(\beta)=\Sigma^+$, which is not the complement of $L(\alpha)$.
\end{proof}

\begin{corollary}
None of the array pattern language classes under consideration (over any alphabet) is closed under complementation.
\end{corollary}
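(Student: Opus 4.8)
The plan is to give a single, uniform argument covering all five modes $z \in \{h, p, r, c, rc\}$ simultaneously, rather than transferring Proposition~\ref{prop-complementation} mode by mode. The reason is that complementation is taken inside $\Sigma^{++}$ and therefore mixes arrays of all dimensions, so the clean ``restrict to one row'' reduction that was used for union and intersection does not apply directly here. Instead I would exploit that the complement of a pattern language necessarily contains the smallest arrays that the pattern itself cannot reach, together with the fact that the only pattern reaching $1 \times 1$ arrays is the single-variable pattern.

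Concretely, I would fix $a \in \Sigma$ and take as witness the one-row, two-variable pattern $\alpha := \begin{bsmallmatrix} x_1 & x_2 \end{bsmallmatrix}$ for every mode $z$. First I would record two elementary facts that follow immediately from the definitions of $h_{\varobar, \varominus}$ and $h_{\varominus, \varobar}$: (i) for \emph{any} pattern $\gamma$, every $z$-image of $\gamma$ has height at least $|\gamma|_r$ and width at least $|\gamma|_c$, since each variable is replaced by a non-empty array and row/column concatenation never decreases the respective dimension; and (ii) as a consequence, every $z$-image of $\alpha$ has width at least $2$, so the $1 \times 1$ array $\begin{bsmallmatrix} a \end{bsmallmatrix}$ does not belong to $L_{\Sigma, z}(\alpha)$ and hence lies in its complement.

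Now suppose, for contradiction, that some pattern $\gamma$ satisfied $L_{\Sigma, z}(\gamma) = \Sigma^{++} \setminus L_{\Sigma, z}(\alpha)$. Then $\begin{bsmallmatrix} a \end{bsmallmatrix} \in L_{\Sigma, z}(\gamma)$, and by fact (i) this forces $|\gamma|_r = |\gamma|_c = 1$, i.e.\ $\gamma = \begin{bsmallmatrix} x_1 \end{bsmallmatrix}$ up to renaming. But then $L_{\Sigma, z}(\gamma) = \Sigma^{++}$ (as recalled earlier for all five modes), whereas $\Sigma^{++} \setminus L_{\Sigma, z}(\alpha)$ is a \emph{proper} subset of $\Sigma^{++}$: it misses, for instance, $\begin{bsmallmatrix} a & a \end{bsmallmatrix} = h(x_1) \varobar h(x_2) \in L_{\Sigma, z}(\alpha)$ obtained from the uniform substitution $h(x_1) = h(x_2) = \begin{bsmallmatrix} a \end{bsmallmatrix}$. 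This contradiction settles all five modes at once.

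The main point to get right is that this ``smallest array'' argument also covers the $h$-mode, which is precisely the case where Lemma~\ref{lem-transfer} only yields an inclusion $L^{\oneD}_{\Sigma, h}(\pi) \subseteq L^{\oneD}_{\Sigma}(\pi)$ and where the more general set-operation arguments had to be excluded (cf.\ the intersection corollary). Since facts (i) and (ii) and the identity $L_{\Sigma, z}(\begin{bsmallmatrix} x_1 \end{bsmallmatrix}) = \Sigma^{++}$ hold verbatim for $z = h$, no separate treatment is needed, and I expect no genuine obstacle beyond carefully justifying the dimension bound (i) from the concatenation definitions. As an alternative for the modes $z \neq h$, one could instead first use (i) to deduce $|\gamma|_r = 1$, then restrict to a single row via Lemma~\ref{lem-transfer} and invoke Proposition~\ref{prop-complementation}; but the direct argument above is shorter and handles the $h$-mode for free.
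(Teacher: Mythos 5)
Your argument is correct and is essentially the paper's own: the paper proves Proposition~\ref{prop-complementation} by observing that the complement of $L(xy)$ contains a length-one word, which forces any candidate pattern to be a single variable and hence to generate the universal language, exactly the ``smallest array forces $\gamma=\begin{bsmallmatrix}x_1\end{bsmallmatrix}$'' step you carry out. The only difference is that you run this argument directly on arrays of all dimensions instead of invoking the string proposition via the one-row transfer, which tidily avoids the dimension-mixing subtlety you point out but does not change the underlying idea.
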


Notice that in the other cases (but complementation),  we cannot cope with unary alphabets.
This might need some different arguments.\par
We shall now turn to operations that are described by different kinds of morphisms. For the array case, codings (or projections) is a common such operation.

\begin{theorem}\label{thm-projection}
Any of our array pattern language classes (over arbitrary alphabets) is closed under projections.
\end{theorem}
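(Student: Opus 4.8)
The plan is to prove the stronger statement that for every array pattern $\alpha$, every projection $\pi : \Sigma^{**} \to \Gamma^{**}$, and every mode $z \in \{h, p, r, c, rc\}$ we have $\pi(L_{\Sigma, z}(\alpha)) = L_{\Gamma, z}(\alpha)$; since the right-hand side is by definition a member of $\mathcal{L}_{\Gamma, z}$, this immediately yields closure. I would rely on two structural facts. First, a projection is a surjective letter-to-letter morphism, hence a $(1,1)$-uniform substitution, so by Lemma~\ref{lem-hom-charac} it is a genuine two-dimensional morphism, and being letter-to-letter it preserves both dimensions. Second, for any two-dimensional morphism $\phi$ and any substitution $g : X \to \Sigma^{++}$, the morphism $\phi$ distributes over both concatenations, so that $\phi(g_{\varobar, \varominus}(\alpha)) = (\phi \circ g)_{\varobar, \varominus}(\alpha)$ and $\phi(g_{\varominus, \varobar}(\alpha)) = (\phi \circ g)_{\varominus, \varobar}(\alpha)$, where $\phi \circ g$ denotes the substitution $x \mapsto \phi(g(x))$.

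For the inclusion $\pi(L_{\Sigma, z}(\alpha)) \subseteq L_{\Gamma, z}(\alpha)$ I take $W \in L_{\Sigma, z}(\alpha)$ with witnessing substitution $g$ and set $g' := \pi \circ g : X \to \Gamma^{++}$. For $z \in \{r, c, p\}$ the defining equation(s) $g_{\varobar,\varominus}(\alpha) = W$ and/or $g_{\varominus,\varobar}(\alpha) = W$ transform under the second fact into $g'_{\varobar,\varominus}(\alpha) = \pi(W)$ and/or $g'_{\varominus,\varobar}(\alpha) = \pi(W)$, so $\pi(W)$ is an image of the same type over $\Gamma$; the mode $rc$ follows by splitting the union $L_{\Sigma,r}(\alpha) \cup L_{\Sigma,c}(\alpha)$ into its two parts. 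For $z = h$ the witness is a two-dimensional morphism $\eta$ with $\eta(\alpha) = W$, and $\pi \circ \eta$ is again a two-dimensional morphism by Lemma~\ref{lem-hom-closure}, with $(\pi \circ \eta)(\alpha) = \pi(W)$; hence $\pi(W) \in L_{\Gamma, h}(\alpha)$.

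For the reverse inclusion I would exploit surjectivity of $\pi$: choose a right inverse $\rho : \Gamma \to \Sigma$ with $\pi(\rho(b)) = b$ for all $b \in \Gamma$ and extend it letter-to-letter. Being $(1,1)$-uniform, $\rho$ is a two-dimensional morphism by Lemma~\ref{lem-hom-charac}. Given $W' \in L_{\Gamma, z}(\alpha)$ with witness $g'$, the substitution $\rho \circ g'$ (respectively, the composed morphism $\rho \circ \eta'$ in the $h$ case, again a morphism by Lemma~\ref{lem-hom-closure}) witnesses $\rho(W') \in L_{\Sigma, z}(\alpha)$, by exactly the same distributivity computation as above. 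Since $\pi(\rho(W')) = W'$ because $\pi \circ \rho = \operatorname{id}_\Gamma$ and both maps are letter-to-letter, we obtain $W' \in \pi(L_{\Sigma, z}(\alpha))$, giving equality and thus the theorem.

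I expect no single hard obstacle; the only points demanding care are bookkeeping ones: verifying that a letter-to-letter map (in particular a projection or its section $\rho$) is indeed covered by the uniform-substitution characterisation of Lemma~\ref{lem-hom-charac}, keeping the composition order consistent between $g' = \pi \circ g$ and $\rho \circ g'$, and treating the $h$ mode separately via Lemma~\ref{lem-hom-closure}, since there the witness is a morphism defined on all of $X^{**}$ rather than a mere substitution, while the $rc$ mode needs only the trivial observation that $\pi$ and $\rho$ respect the union $L_{\cdot, r} \cup L_{\cdot, c}$.
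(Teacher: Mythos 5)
Your proposal is correct and follows essentially the same route as the paper's proof: establish $\pi(L_{\Sigma,z}(\alpha)) = L_{\Gamma,z}(\alpha)$ by composing the witnessing substitution with $\pi$ in one direction (dimensions are preserved since $\pi$ is letter-to-letter) and with a section $\rho$ of $\pi$ (the paper's $\kappa$) in the other. Your write-up is merely more explicit about the per-mode bookkeeping and the appeal to Lemmas~\ref{lem-hom-charac} and~\ref{lem-hom-closure}, which the paper leaves implicit.
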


\begin{proof}
The following proof works for every $z \in \{h, p, r, c, rc\}$. Let $\alpha$ be some array pattern and let $\pi:\Sigma\to\Gamma$ be a surjective mapping that describes some projection. We shall now show that $\pi(L_{\Sigma, z}(\alpha))=L_{\Gamma, z}(\alpha)$.\par
Namely, consider some array $U\in L_{\Sigma, z}(\alpha)$.
$U$ is obtained from $\alpha$ by replacing any variable $x$ occurring in $\alpha$ by some
array $h(x)$. If we replace $x$ by $\pi(h(x))$ instead (which has the same dimensions as $h(x)$),
we can see that $\pi(U)\in L_{\Gamma, z}(\alpha)$.\par
Conversely, fix some letter from $\pi^{-1}(\ta)$ for each $\ta\in\Gamma$ for now and denote it by $\kappa(\ta)$.
This is possible, as $\pi$ is a surjective mapping.
As $\pi(\kappa(\ta))=\ta$ for each $\ta\in\Gamma$ by construction, for any $U\in L_{\Gamma, z}(\alpha)$ we can 
describe some $V\in L_{\Sigma, x}(\alpha)$ such that $\pi(V)=U$, just by taking $V=\kappa(U)$.
\end{proof}

The result does not generalize to (string) morphisms where each image is of the same length.

\begin{proposition}\label{prop-length2-morphisms} For any non-unary alphabet $\Sigma$, 
 $\mathcal{L}^{\oneD}_{\Sigma}$ is not closed under
morphisms that map every letter to a word of length two.
\end{proposition}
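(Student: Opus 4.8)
The plan is to exhibit a single pattern together with a single length-two morphism whose image escapes the class, and then to force a contradiction by counting shortest words. Without loss of generality assume $\{\ta,\tb\}\subseteq\Sigma$. I would take the pattern $\alpha=x_1\,x_2$ and the morphism $\phi:\Sigma\to\Sigma^*$ defined by $\phi(c):=\ta\tb$ for \emph{every} $c\in\Sigma$; this sends each letter to a word of length two, so it is one of the morphisms under consideration. Since $L^{\oneD}_{\Sigma}(\alpha)=\{w\in\Sigma^+\mid |w|\ge 2\}$, applying $\phi$ turns every word of length $k$ into $(\ta\tb)^k$, whence $\phi(L^{\oneD}_{\Sigma}(\alpha))=\{(\ta\tb)^k\mid k\ge 2\}$.

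Next I would analyse the shortest words of this image language. Its minimal length is $4$, attained by the single word $\ta\tb\ta\tb$, so the image has \emph{exactly one} shortest word. The key step is then the observation that no terminal-free non-erasing pattern language over $\Sigma$ can have exactly one shortest word: for any pattern $\gamma$, the words of minimal length in $L^{\oneD}_{\Sigma}(\gamma)$ are precisely those obtained by substituting a single terminal letter for each variable (any substitution by a longer word strictly increases the length), so their number equals $|\Sigma|^{v}$, where $v\ge 1$ is the number of distinct variables occurring in $\gamma$. As $|\Sigma|\ge 2$, this count is at least $2$. Hence $\{(\ta\tb)^k\mid k\ge 2\}$ cannot equal $L^{\oneD}_{\Sigma}(\gamma)$ for any pattern $\gamma$, which is exactly the claimed non-closure.

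The argument is short, and the single point that needs care is the shortest-word count, namely that distinct single-letter substitutions really yield distinct words (so that there are $|\Sigma|^{v}$ of them rather than fewer). This holds because the letter sitting at any fixed position of such a shortest word is the letter assigned to the variable occupying that position, so the word determines the substitution on every variable that actually occurs. I expect the main (and quite mild) obstacle to be merely presentational, namely confirming that the counterexample morphism is admissible: since the statement imposes no degeneracy condition on $\phi$ (in particular neither injectivity nor distinct images are required), the constant morphism $\phi(c)=\ta\tb$ is permitted. Should a non-constant witness be preferred, one could instead work with an injective length-two morphism, at the cost of a more involved case analysis of the resulting four-element shortest-word sets.
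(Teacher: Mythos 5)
Your proof is correct and follows essentially the same route as the paper: the paper applies the constant morphism $h(\ta)=h(\tb)=\ta\tb$ to the language of the pattern $x$ and observes that $\{\ta\tb\}^+$ is not a pattern language ``as an easy analysis shows.'' Your only differences are the immaterial choice of $x_1x_2$ instead of $x$ and that you actually supply that easy analysis, via the (correct) count of $|\Sigma|^{v}\ge 2$ minimal-length words in any nonerasing terminal-free pattern language.
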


\begin{proof}
Consider the pattern $x$ on the alphabet $\Sigma=\{\ta, \tb\}$, describing the universal language $\Sigma^+$, and the morphism $h$ with $h(\ta)=\ta \tb$ and $h(\tb)=\ta \tb$. Then, $h(\Sigma^+)=\{\ta \tb\}^+$, which is not a pattern language, as an easy analysis shows.
\end{proof}

\begin{corollary}
None of the array pattern language classes under consideration (over some non-unary alphabet) is closed under two-dimensional morphisms.
\end{corollary}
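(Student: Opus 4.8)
The plan is to exploit the fact that a string morphism sending every letter to a word of length two is merely a very special two-dimensional morphism, and then to transfer the non-closure of Proposition~\ref{prop-length2-morphisms} to the array world by restricting attention to arrays of height one. Concretely, I would take (without loss of generality) $\Sigma = \{\ta, \tb\}$ and let $\widehat{h}$ be the two-dimensional morphism induced by the substitution $g$ with $g(\ta) = g(\tb) = \begin{bsmallmatrix} \ta & \tb \end{bsmallmatrix}$. Since $g$ is $(1,2)$-uniform, Lemma~\ref{lem-hom-charac} guarantees that $\widehat{h}$ really is a two-dimensional morphism; it is precisely the array incarnation of the string morphism used in Proposition~\ref{prop-length2-morphisms}. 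The key structural feature is that $\widehat{h}$ sends every letter to the same $1 \times 2$ array, so it preserves the height of an array and doubles its width.

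First I would start from the single-variable pattern $\alpha := \begin{bsmallmatrix} x_1 \end{bsmallmatrix}$, for which $L_{\Sigma, z}(\alpha) = \Sigma^{++}$ holds for every $z \in \{h, p, r, c, rc\}$. Applying $\widehat{h}$ cellwise to an arbitrary $m \times n$ array replaces each entry by $\begin{bsmallmatrix} \ta & \tb \end{bsmallmatrix}$, so $\widehat{h}(\Sigma^{++})$ is exactly the set of $m \times 2n$ arrays all of whose rows equal $(\ta\tb)^n$. Because $\widehat{h}$ preserves height, the height-one arrays inside $\widehat{h}(\Sigma^{++})$, read as strings, are precisely the members of $\{\ta\tb\}^+$.

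Next I would argue by contradiction, assuming $\widehat{h}(L_{\Sigma, z}(\alpha)) = L_{\Sigma, z}(\gamma)$ for some array pattern $\gamma$. Restricting both sides to arrays of height one turns the left-hand side into $\{\ta\tb\}^+$. Since this set is non-empty and any array pattern of height at least two yields only images of height at least two (row concatenation over at least two pattern rows forces the total height up), $\gamma$ must itself have height one. For $z \in \{p, r, c, rc\}$, Lemma~\ref{lem-transfer} then gives $L^{\oneD}_{\Sigma}(\gamma) = L_{\Sigma, z}^{\oneD}(\gamma) = \{\ta\tb\}^+$, contradicting Proposition~\ref{prop-length2-morphisms}, which says $\{\ta\tb\}^+$ is not a string pattern language. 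This settles these four modes.

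The remaining obstacle, and the step I expect to require the most care, is the mode $z = h$, where Lemma~\ref{lem-transfer} supplies only the inclusion $L_{\Sigma, h}^{\oneD}(\gamma) \subseteq L^{\oneD}_{\Sigma}(\gamma)$, so Proposition~\ref{prop-length2-morphisms} cannot be quoted verbatim. Here I would instead redo the short shortest-word analysis directly: the shortest elements of $\{\ta\tb\}^+$ have length two, which forces a height-one $\gamma$ to possess at most two variable positions; inspecting the candidates $x$, $x\,x$ and $x\,y$ under the equal-length substitutions that the $h$-mode enforces shows that each of them already produces a word such as $\ta$ or $\ta\ta$ lying outside $\{\ta\tb\}^+$, so none describes it. Combined with the observation that patterns of height exceeding one admit no height-one images, this closes the $h$-case and hence the whole corollary.
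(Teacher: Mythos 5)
Your proposal is correct and follows essentially the same route the paper intends: the corollary is meant to fall out of Proposition~\ref{prop-length2-morphisms} via the transfer principle of Lemma~\ref{lem-transfer}, using the very morphism $\ta,\tb\mapsto\ta\tb$ lifted to a $(1,2)$-uniform two-dimensional morphism and the observation that the height-one slice of the image is $\{\ta\tb\}^+$. Your only addition is to make explicit the $h$-mode case (where Lemma~\ref{lem-transfer} gives only an inclusion) by noting that the shortest-word analysis for $\{\ta\tb\}^+$ uses only equal-length substitutions; the paper leaves this implicit, so this is a welcome but not divergent refinement.
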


This is also true for the more general operation of substitution, with the same examples.

Let us also remark that Proposition~\ref{prop-length2-morphisms}
did not rely on the fact that we restricted our attention to one specific non-unary alphabet $\Sigma$.
However, if we have a specific alphabet, then we can even state:

\begin{proposition}\label{prop-bijective-coding}
 Any of our array pattern language classes (over some fixed  alphabet $\Sigma$) is closed under some projection $\pi:\Sigma\to\Sigma$ 
if and only if $\pi$ is a bijection.
\end{proposition}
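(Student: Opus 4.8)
The plan is to prove both implications of the biconditional, treating $\pi$ as a letter-to-letter morphism (coding) $\Sigma \to \Sigma$ and observing that, since $\Sigma$ is finite, $\pi$ is a bijection if and only if it is surjective, i.e., if and only if $\pi(\Sigma) = \Sigma$. Thus the substantive content is that closure holds precisely when the image $\pi(\Sigma)$ exhausts $\Sigma$; the non-bijective case is exactly the case where $\pi$ maps onto a proper subalphabet.

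For the \emph{if} direction, suppose $\pi$ is a bijection. I would first note that a letter-to-letter morphism is a $(1,1)$-uniform substitution, hence a two-dimensional morphism by Lemma~\ref{lem-hom-charac}, and that it therefore commutes with both $\varobar$ and $\varominus$. Consequently, for any substitution $h : X \to \Sigma^{++}$ and any mode $z \in \{h, p, r, c, rc\}$, applying $\pi$ to an image of $\alpha$ under $h$ yields exactly the image of $\alpha$ under the composed substitution $\pi \circ h$ (for the $h$-mode, that $\pi \circ h$ is again a morphism is Lemma~\ref{lem-hom-closure}). Since $\pi$ is a bijection, the map $h \mapsto \pi \circ h$ is a bijection on the set of all substitutions, so as $h$ ranges over all substitutions so does $\pi \circ h$; hence $\pi(L_{\Sigma, z}(\alpha)) = L_{\Sigma, z}(\alpha)$ for every pattern $\alpha$. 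In particular the image is again a $z$ pattern language, so each class is closed (indeed invariant) under $\pi$.

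For the \emph{only if} direction I would argue contrapositively: assume $\pi$ is not a bijection, so $\Gamma := \pi(\Sigma)$ is a proper subalphabet of $\Sigma$. Take the single-variable pattern $\alpha := \begin{bsmallmatrix} x_1 \end{bsmallmatrix}$, for which $L_{\Sigma, z}(\alpha) = \Sigma^{++}$ in every mode. Then $\pi(L_{\Sigma, z}(\alpha)) = \pi(\Sigma^{++}) = \Gamma^{++}$, and it remains to show that $\Gamma^{++}$ is not a $z$ pattern language over $\Sigma$. Here I would reuse the minimal-array observation already exploited in Lemma~\ref{equalityLemma}: any pattern $\beta$ with $|\beta|_r = p$ and $|\beta|_c = q$ has smallest images of height exactly $p$ and width exactly $q$ (replacing each variable occurrence by a single cell). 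Since $\Gamma^{++}$ contains $1 \times 1$ arrays, a pattern $\beta$ describing it would have to satisfy $p = q = 1$, i.e., $\beta$ is a single variable; but then $L_{\Sigma, z}(\beta) = \Sigma^{++} \supsetneq \Gamma^{++}$, a contradiction. Hence $\Gamma^{++} \notin \mathcal{L}_{\Sigma, z}$ and the class is not closed under $\pi$.

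The main obstacle is this non-closure half, and specifically the argument that $\Gamma^{++}$ escapes every class $\mathcal{L}_{\Sigma, z}$ over the \emph{full} alphabet $\Sigma$. This is precisely the point of contrast with Theorem~\ref{thm-projection}: there the image $\pi(L_{\Sigma, z}(\alpha))$ is read as a language over the restricted target alphabet $\Gamma$, where it equals $L_{\Gamma, z}(\alpha)$ and is a genuine pattern language, whereas here we insist on staying over the fixed alphabet $\Sigma$, and no pattern over $\Sigma$ can force its terminal symbols to avoid the missing letters of $\Sigma \setminus \Gamma$ while still generating arbitrarily small arrays.
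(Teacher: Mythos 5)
Your proof is correct and follows essentially the same route as the paper: the forward direction reduces to the substitution-composition argument of Theorem~\ref{thm-projection}, and the converse uses the single-variable pattern whose image $(\pi(\Sigma))^{++}$ is shown not to be a pattern language over the full alphabet $\Sigma$. The only (immaterial) difference is the final step: the paper observes that every array pattern language over $\Sigma$ contains arrays using a letter $b\in\Sigma\setminus\pi(\Sigma)$, whereas you first pin the candidate pattern down to a single variable via the minimal-array-size observation and then note it would generate all of $\Sigma^{++}$ --- both arguments are sound.
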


\begin{proof}
If $\pi$ is some bijection, then the argument given in the proof of Theorem~\ref{thm-projection} applies. Recall now that for finite sets as $\Sigma$, $\pi$ is a bijection if and only if $\pi$ is a surjection. If $\pi$ is not a bijection, then $\pi$ is not a surjection; hence, there is some $b \in \Sigma\setminus\pi(\Sigma)$, Consider the pattern $\alpha=x$. Clearly, $\pi(L_{\Sigma, z}(x))=(\pi(\Sigma))^{**}$, $z \in \{h, p, r, c, rc\}$, but any array pattern language over $\Sigma$ will also contain patterns with the letter $b$.
\end{proof}

This immediately implies the following two results:

\begin{corollary}
Any of our  array pattern language classes (over
binary alphabets) is closed under conjugation.
\end{corollary}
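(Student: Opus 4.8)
The plan is to recognise conjugation as nothing more than a bijective projection and then invoke Proposition~\ref{prop-bijective-coding}. Concretely, over a binary alphabet $\Sigma = \{\ta, \tb\}$, the conjugation operation $U \mapsto U^{\texttt{C}}$ is exactly the letter-to-letter morphism $\pi : \Sigma \to \Sigma$ given by $\pi(\ta) := \tb$ and $\pi(\tb) := \ta$, applied symbol-by-symbol to the array. First I would observe that this $\pi$ is the unique non-identity permutation of the two-element set $\Sigma$, hence a bijection, and that it is in particular a surjective letter-to-letter morphism, i.\,e., a projection in the sense used throughout this section.

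The key step is then immediate. Proposition~\ref{prop-bijective-coding} states that any of our array pattern language classes over a fixed alphabet $\Sigma$ is closed under a projection $\pi : \Sigma \to \Sigma$ precisely when $\pi$ is a bijection. Since the letter-swapping map above is a bijection, closure under it follows at once; and as conjugation over a binary alphabet is by definition the application of this map, the desired closure under conjugation follows for every mode $z \in \{h, p, r, c, rc\}$.

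I do not anticipate any genuine obstacle here, since the substantive work has already been done in Proposition~\ref{prop-bijective-coding} (and, underlying it, in Theorem~\ref{thm-projection}, which supplies closure under arbitrary projections via the substitution argument). The only point to verify is the harmless observation that exchanging the two letters of a binary alphabet is a bijective projection, which is immediate from $|\Sigma| = 2$. Thus the corollary is a direct specialisation of the preceding proposition.
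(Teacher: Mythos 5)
Your proposal is correct and matches the paper's own reasoning exactly: the paper derives this corollary as an immediate consequence of Proposition~\ref{prop-bijective-coding}, by observing that conjugation over a binary alphabet is precisely the bijective projection swapping the two letters. Nothing further is needed.
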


\begin{corollary}
None of our  array pattern language classes (over some fixed  alphabet $\Sigma$) is closed under all letter-to-letter morphisms.
\end{corollary}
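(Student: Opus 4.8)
The plan is to read off the corollary directly from Proposition~\ref{prop-bijective-coding}. The crucial observation is that the family of letter-to-letter morphisms $\pi : \Sigma \to \Sigma$ is strictly larger than the family of bijective codings: whenever $|\Sigma| \geq 2$, there exist self-maps of $\Sigma$ that fail to be bijections. (For a unary alphabet the only letter-to-letter self-morphism is the identity, which is a bijection and under which every class is trivially closed, so the statement is only meaningful for $|\Sigma| \geq 2$; I would make this restriction explicit.) Thus the task reduces to producing a single letter-to-letter morphism witnessing non-closure.

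First I would fix $\Sigma$ with $|\Sigma| \geq 2$ and exhibit a concrete non-bijective letter-to-letter morphism, for instance the map $\pi$ that identifies two distinct symbols $\ta, \tb \in \Sigma$ by setting $\pi(\tb) := \ta$ and fixing every other symbol. Since $\tb \notin \pi(\Sigma)$, this $\pi$ is not surjective, hence (as $\Sigma$ is finite) not a bijection, while it clearly satisfies $\pi(\Sigma) \subseteq \Sigma$ and so is a letter-to-letter morphism in the sense of Section~\ref{sec:2DPatternLanguages}. I would then invoke the non-bijective branch of the proof of Proposition~\ref{prop-bijective-coding}: taking the pattern $\alpha = \begin{bsmallmatrix} x_1 \end{bsmallmatrix}$ gives $L_{\Sigma, z}(\alpha) = \Sigma^{++}$, whence $\pi(L_{\Sigma, z}(\alpha)) = (\pi(\Sigma))^{++}$ for every $z \in \{h, p, r, c, rc\}$. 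This image is an array language over $\Sigma$ in which the letter $\tb$ never occurs, whereas every array pattern language over $\Sigma$ contains arrays using $\tb$; therefore $\pi(L_{\Sigma, z}(\alpha))$ is not an array pattern language, and no class $\mathcal{L}_{\Sigma, z}$ is closed under $\pi$.

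Since $\pi$ is a letter-to-letter morphism under which none of the classes is closed, none of them can be closed under \emph{all} letter-to-letter morphisms, which is exactly the claim. I expect no genuine obstacle, as this is an immediate consequence of the preceding proposition; the only points requiring care are the exclusion of the unary alphabet (where all letter-to-letter self-morphisms are bijective and no witness exists) and the terminological remark that, although a non-surjective self-map is not a \emph{projection} in the strict sense of the paper, the non-surjective argument inside the proof of Proposition~\ref{prop-bijective-coding} applies to it verbatim.
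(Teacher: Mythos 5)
Your proposal is correct and follows exactly the paper's route: the paper derives this corollary immediately from Proposition~\ref{prop-bijective-coding}, whose non-bijective branch (the pattern $\alpha = x$ and a letter $b \in \Sigma \setminus \pi(\Sigma)$) is precisely the argument you invoke. Your explicit caveat about the unary alphabet (where the identity is the only letter-to-letter self-map and the claim fails) and your remark that a non-surjective self-map is not literally a ``projection'' in the paper's sense are both reasonable clarifications the paper glosses over, but they do not change the substance of the argument.
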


Alternatively, we could also look at inverse morphisms. Here, we already get negative results for inverse codings.

\begin{proposition}\label{prop-one-dim-inverse-coding} For any alphabet $\Sigma$ with at least four letters, 
 $\mathcal{L}^{\oneD}_{\Sigma}$ is not closed under
inverse letter-to-letter morphisms.
\end{proposition}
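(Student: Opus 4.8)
The plan is to exhibit a single pattern language over $\Sigma$ together with a length-preserving (letter-to-letter) endomorphism of $\Sigma$ whose inverse image is provably not a pattern language, arguing entirely through \emph{shortest words} as in the proof of Proposition~\ref{prop-union}. The engine of the argument is the following bookkeeping fact about terminal-free non-erasing pattern languages: the shortest words of $L^{\oneD}_{\Sigma}(\gamma)$ all have length $|\gamma|$ (each variable must receive a single letter, since erasing is disallowed), and their number equals $|\Sigma|^{k}$, where $k$ is the number of \emph{distinct} variables occurring in $\gamma$. Indeed, assigning single letters to the $k$ distinct variables yields $|\Sigma|^{k}$ pairwise different words, and no shorter word can be generated.

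Concretely, write $q := |\Sigma| \geq 4$, fix four distinct letters $\ta, \tb, \tc, \td \in \Sigma$, and take $L := L^{\oneD}_{\Sigma}(x_1\,x_1) = \{w\,w \mid w \in \Sigma^{+}\}$, the set of squares. I would then define the letter-to-letter morphism $h : \Sigma \to \Sigma$ by $h(\tc) := \ta$, $h(\td) := \tb$, and $h(\sigma) := \sigma$ for every remaining $\sigma \in \Sigma$; note $h$ is length-preserving but not surjective, which is admissible for a letter-to-letter morphism (and is in fact unavoidable, since a surjective endomorphism of a finite alphabet is a bijection, whose inverse merely renames letters and preserves pattern languages). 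Since $h$ preserves length, $w \in h^{-1}(L)$ holds iff $|w| = 2m$ and $h(w[i]) = h(w[m+i])$ for $1 \leq i \leq m$.

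Next I would compute the shortest words of $h^{-1}(L)$. Their length is $2$ (for instance $\ta\,\ta \in h^{-1}(L)$, and no nonempty square is shorter), and the length-two members are exactly the pairs $\sigma_1\,\sigma_2$ with $h(\sigma_1) = h(\sigma_2)$; hence their number is $\sum_{\tau \in \Sigma}|h^{-1}(\tau)|^{2} = 2^{2} + 2^{2} + (q-4) = q+4$, because the fibres over $\ta$ and $\tb$ have size two, the fibres over $\tc$ and $\td$ are empty, and the remaining $q-4$ nonempty fibres are singletons. Now suppose for contradiction that $h^{-1}(L) = L^{\oneD}_{\Sigma}(\gamma)$ for some pattern $\gamma$. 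As its shortest words have length $2$, necessarily $|\gamma| = 2$, so $\gamma$ has one or two distinct variables and therefore exactly $q$ or $q^{2}$ shortest words. But $q + 4 \neq q$, and $q^{2} - q - 4 = 0$ has no integer root, so $q + 4 \neq q^{2}$ either; this numerical gap is precisely what the hypothesis $q \geq 4$ secures (it guarantees the four distinct letters needed to realise the value $q+4$). The contradiction shows that $h^{-1}(L)$ is not a pattern language.

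I expect the only delicate point to be the two counting claims — that shortest words of a pattern language number $|\Sigma|^{k}$, and that the shortest words of $h^{-1}(L)$ sit at length two with count $q+4$. Once this bookkeeping is nailed down, the impossibility $q+4 \notin \{q, q^{2}\}$ is immediate, and everything else is routine.
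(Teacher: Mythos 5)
Your proposal is correct and follows essentially the same route as the paper: both pull back the square language $L^{\oneD}_{\Sigma}(x_1\,x_1)$ along a non-injective letter-to-letter morphism, note that the shortest words of the preimage have length two, and then rule out the only two candidate patterns $yy$ and $yz$. The sole (cosmetic) difference is the finishing move --- the paper exhibits explicit witnesses ($\ta\tb$ lies in the preimage but is not generated by $yy$, while $\tc\td$ is generated by $yz$ but lies outside the preimage), whereas you count the length-two words and observe that $q+4\notin\{q,q^2\}$; both are sound.
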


\begin{proof}
Consider the coding
$h: \ta \mapsto 1, \tb\mapsto 1, \tc\mapsto 2, \td\mapsto 3$. The pattern $xx$ generates (over the alphabet $\{1,2,3\}$) the language $L=\{ww\mid w\in \{1,2,3\}^+\}$. However, $L'=h^{-1}(L)$ is not a pattern language. The shortest words in $L'$ are of length two, so that a hypothetical pattern for $L'$ must be of the form $yz$ or $yy$. In the first case, $\tc \td\notin L'$ is produced, while in the second case, $\ta \tb\in L'$ cannot be produced.
\end{proof}

\begin{corollary}
None of the array pattern language classes under consideration (over some sufficiently large alphabet) is closed under inverse (two-dimensional) morphisms.
\end{corollary}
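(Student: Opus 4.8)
The plan is to transfer the one-dimensional counterexample of Proposition~\ref{prop-one-dim-inverse-coding} to the array setting, exactly along the lines used for the earlier non-closure corollaries. The first observation is that the coding $h$ from that proposition, namely $h(\ta) = 1$, $h(\tb) = 1$, $h(\tc) = 2$, $h(\td) = 3$, is a letter-to-letter morphism and therefore, being induced by a $(1,1)$-uniform substitution, a two-dimensional morphism by Lemma~\ref{lem-hom-charac}. Hence a class closed under inverse two-dimensional morphisms is in particular closed under the inverse of this $h$, so it suffices to refute the latter. As the witnessing array language I would take the (height-one) pattern $\begin{bsmallmatrix} x_1 & x_1 \end{bsmallmatrix}$ over $\Gamma := \{1,2,3\}$; since this pattern has a single variable and one row, $L_{\Gamma, z}(\begin{bsmallmatrix} x_1 & x_1 \end{bsmallmatrix}) = \{V \varobar V \mid V \in \Gamma^{++}\}$ for every $z \in \{h, p, r, c, rc\}$, and its one-row members are precisely $L^{\oneD}_{\Gamma}(x_1 x_1) = \{w w \mid w \in \Gamma^{+}\}$.

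Next I would argue by contradiction: assume $\mathcal{L}_{\Sigma, z}$ is closed under inverse two-dimensional morphisms, so that $h^{-1}(L_{\Gamma, z}(\begin{bsmallmatrix} x_1 & x_1 \end{bsmallmatrix})) = L_{\Sigma, z}(\gamma)$ for some array pattern $\gamma$. Because $h$ is letter-to-letter it preserves dimensions, so this inverse image still contains one-row arrays (e.g.\ $\begin{bsmallmatrix} \ta & \ta \end{bsmallmatrix}$, whose $h$-image is $11$). I would then note that, for each $z \in \{h, p, r, c, rc\}$, every image of $\gamma$ has height at least the height of $\gamma$ (each variable is replaced by a non-empty array and the resulting row blocks are stacked), so the presence of a one-row array in $L_{\Sigma, z}(\gamma)$ forces $\gamma$ to have height one; thus $\gamma$ is a string pattern. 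Restricting to one row and using that $h$ acts letter by letter, I obtain $L^{\oneD}_{\Sigma, z}(\gamma) = h^{-1}(L^{\oneD}_{\Gamma}(x_1 x_1))$, where the right-hand side is the string inverse image.

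For $z \in \{p, r, c, rc\}$ the argument then closes immediately: by Lemma~\ref{lem-transfer} we have $L^{\oneD}_{\Sigma, z}(\gamma) = L^{\oneD}_{\Sigma}(\gamma)$, so $L^{\oneD}_{\Sigma}(\gamma) = h^{-1}(L^{\oneD}_{\Gamma}(x_1 x_1))$ would exhibit $h^{-1}(\{w w \mid w \in \Gamma^+\})$ as a one-dimensional pattern language, contradicting Proposition~\ref{prop-one-dim-inverse-coding}. I expect the main obstacle to be the mode $z = h$, for which Lemma~\ref{lem-transfer} only yields the inclusion $L^{\oneD}_{\Sigma, h}(\gamma) \subseteq L^{\oneD}_{\Sigma}(\gamma)$ and hence does not let me quote Proposition~\ref{prop-one-dim-inverse-coding} directly. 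Here I would re-run the short counting argument at the level of length-uniform string patterns, which is exactly what $L^{\oneD}_{\Sigma, h}(\gamma)$ describes: the shortest words of $h^{-1}(\{w w\})$ have length two (a single variable would already contribute words of length one), so $\gamma$ is either $y_1 y_1$ or $y_1 y_2$; in the first case $\ta \tb$ cannot be produced, while in the second case $\tc \td$ is produced although $\tc \td \notin h^{-1}(\{w w\})$. Either way we reach a contradiction, which settles all five modes and thereby establishes the corollary.
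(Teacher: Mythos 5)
Your proof is correct and follows essentially the same route the paper intends: the corollary is left without an explicit proof there, being the standard transfer of the one-dimensional counterexample of Proposition~\ref{prop-one-dim-inverse-coding} via Lemma~\ref{lem-transfer}, combined with the paper's recurring observation that an argument using only equal-length replacements also covers the $h$-mode. Your write-up merely makes explicit the details the paper suppresses (why $\gamma$ must have height one, and the separate check for $z=h$), so there is nothing substantively different to compare.
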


\subsection{Operations Similar to String Language Operations}\label{sec:similartostringcaseops}

Let us first turn to the concatenation operation.
As a warm-up, we first consider the string case.

\begin{lemma}\label{lem-string-concatenation} For any alphabet $\Sigma$, 
 $\mathcal{L}^{\oneD}_{\Sigma}$ is closed under concatenation.
\end{lemma}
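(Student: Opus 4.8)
The plan is to show that the concatenation of two one-dimensional pattern languages is again a pattern language by exhibiting an explicit pattern that generates it. Let me think about what concatenation means here: given patterns $\alpha$ and $\beta$, I want a pattern $\gamma$ with $L^{\oneD}_{\Sigma}(\gamma) = L^{\oneD}_{\Sigma}(\alpha) \cdot L^{\oneD}_{\Sigma}(\beta)$.

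The natural candidate is simply $\gamma = \alpha \cdot \beta$ — but with the crucial caveat that the variables in $\beta$ must be renamed so that $\alpha$ and $\beta$ share no variables. So if $\alpha$ uses variables $x_1, \ldots, x_k$ and $\beta$ uses $x_1, \ldots, x_\ell$, I rename $\beta$'s variables to fresh ones $x_{k+1}, \ldots, x_{k+\ell}$ (or I can just assume their variable sets are disjoint), and set $\gamma$ to be the concatenation of the two strings.

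Let me check both inclusions. For $\subseteq$: any word $w \in L^{\oneD}_{\Sigma}(\gamma)$ arises from a morphism $h: X^+ \to \Sigma^+$ applied to $\gamma = \alpha \cdot \beta$. Since morphisms respect concatenation, $h(\gamma) = h(\alpha) \cdot h(\beta)$, and $h$ restricted to $\alpha$'s variables gives a word in $L^{\oneD}_{\Sigma}(\alpha)$, similarly for $\beta$. So $w$ factors into a member of each language. For $\supseteq$: given $u \in L^{\oneD}_{\Sigma}(\alpha)$ via morphism $h_1$ and $v \in L^{\oneD}_{\Sigma}(\beta)$ via morphism $h_2$, because the variable sets are disjoint I can define a single combined morphism $h$ agreeing with $h_1$ on $\alpha$'s variables and with $h_2$ on $\beta$'s, and then $h(\gamma) = u \cdot v$.

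Now, where is this subtle? I would flag two points. First, the disjointness of variable sets is exactly what makes the $\supseteq$ direction work — without it, a shared variable would be forced to take the same value in both halves, making $\gamma$ generate a proper subset. This is the one genuine idea in the proof and worth stating explicitly. Second, the boundary between the two factors is not marked in $w$, so a given word in $L^{\oneD}_{\Sigma}(\gamma)$ could in principle have several factorizations; but this causes no difficulty, because I only need \emph{existence} of a valid factorization in the $\subseteq$ direction (read off from the characteristic factorization induced by $h$ on $\gamma$), and the $\supseteq$ direction builds a specific one. So there is no real obstacle — the main thing to get right is the bookkeeping of renaming variables to be disjoint, which I would state up front as a harmless normalization.

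The proof plan in full: \textbf{(1)} Normalize so that $\alpha$ and $\beta$ use disjoint variable sets, possible since the pattern language depends only on the pattern up to renaming (cf.\ the equivalence-up-to-renaming discussion, analogous to Lemma~\ref{equalityLemma} in the array setting). \textbf{(2)} Define $\gamma := \alpha \cdot \beta$. \textbf{(3)} Prove $L^{\oneD}_{\Sigma}(\gamma) \subseteq L^{\oneD}_{\Sigma}(\alpha) \cdot L^{\oneD}_{\Sigma}(\beta)$ using the morphism definition and the fact that a morphism splits across a concatenation. \textbf{(4)} Prove the reverse inclusion by gluing together the two witnessing morphisms into one, which is well-defined precisely because the variable sets are disjoint. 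This completes the argument.
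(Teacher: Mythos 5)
Your proposal is correct and follows exactly the paper's argument: after renaming so that $\alpha$ and $\beta$ have disjoint variable sets, the pattern $\alpha\beta$ generates the concatenation, with the two inclusions verified just as you describe. The paper states this in one line without spelling out the inclusions, so your write-up is simply a more detailed version of the same proof.
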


\begin{proof}
Consider two patterns $\alpha$ and $\beta$. After renaming, we can assume that $\alpha$ and $\beta$ do not contain any
identical variables.
Then, $L^{\oneD}_{\Sigma}(\alpha\beta)=L^{\oneD}_{\Sigma}(\alpha)L^{\oneD}_{\Sigma}(\beta)$.
\end{proof}

The first thing one should note is that the concatenation of two arrays could be undefined (i.\,e., if their dimensions do not match), even though the concatenation of the two according languages need not be empty. However, we can prove:

\begin{theorem}\label{thm-catenation} Fix some alphabet $\Sigma$.
\begin{itemize}
 \item $\mathcal{L}_{\Sigma,r}$ is closed under row concatenation $\varominus$;
\item  $\mathcal{L}_{\Sigma,c}$ is closed under column concatenation $\varobar$;
\item  $\mathcal{L}_{\Sigma,p}$  and  $\mathcal{L}_{\Sigma,h}$ are closed both under row and under column concatenation.
\end{itemize}
\end{theorem}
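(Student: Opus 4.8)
The plan is to realise each admissible concatenation by a single array pattern $\gamma$ built by juxtaposing variable-disjoint copies of the two given patterns $\alpha$ and $\beta$, and to reduce everything to a distributivity identity for the image maps. Concretely, I would first prove that if $\gamma_1,\gamma_2\in X^{++}$ have disjoint variable sets and $|\gamma_1|_c=|\gamma_2|_c$, then for every substitution $h$ one has $h_{\varobar,\varominus}(\gamma_1\varominus\gamma_2)=h_{\varobar,\varominus}(\gamma_1)\varominus h_{\varobar,\varominus}(\gamma_2)$, where the right-hand side (and hence the left) is defined precisely when the two image widths coincide; dually $h_{\varominus,\varobar}(\gamma_1\varobar\gamma_2)=h_{\varominus,\varobar}(\gamma_1)\varobar h_{\varominus,\varobar}(\gamma_2)$ when $|\gamma_1|_r=|\gamma_2|_r$. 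Both are read straight off the definitions: stacking $\gamma_2$ below $\gamma_1$ merely appends the row-strips of $\gamma_2$ beneath those of $\gamma_1$, and variable-disjointness lets $h$ act independently on the two blocks.

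Granting this, the equal-dimension cases fall out at once. For $\mathcal{L}_{\Sigma,r}$ under $\varominus$ I take $\gamma=\alpha\varominus\beta$ (after renaming), and the identity gives $L_{\Sigma,r}(\alpha\varominus\beta)=L_{\Sigma,r}(\alpha)\varominus L_{\Sigma,r}(\beta)$ whenever $|\alpha|_c=|\beta|_c$; the claim for $\mathcal{L}_{\Sigma,c}$ under $\varobar$ is the vertical mirror. For $\mathcal{L}_{\Sigma,p}$ I keep $\gamma=\alpha\varominus\beta$ but must make both assemblies agree: the exchange law of the two concatenations (illustrated in Section~\ref{sec:def}) supplies $h_{\varominus,\varobar}(\alpha\varominus\beta)=h_{\varominus,\varobar}(\alpha)\varominus h_{\varominus,\varobar}(\beta)$ as well, and since $h_{\varobar,\varominus}(\alpha)$ and $h_{\varominus,\varobar}(\alpha)$ always have the same number of rows (namely $\sum_i|h(y_{i,1})|_r$ whenever both are defined), the common row-split of $W$ is unambiguous. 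Hence $W$ is a proper image of $\gamma$ iff $W=U\varominus V$ with $U,V$ proper images of $\alpha,\beta$, giving closure of $\mathcal{L}_{\Sigma,p}$ under $\varominus$; closure under $\varobar$ is symmetric. In every case the two inclusions are just the two readings of these equalities.

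The main obstacle is that $\alpha\varominus\beta$ is a legal pattern only when $|\alpha|_c=|\beta|_c$, whereas $L_{\Sigma,z}(\alpha)\varominus L_{\Sigma,z}(\beta)$ is usually non-empty for mismatched widths too. To lift this restriction I would pass to a common width $N=\operatorname{lcm}(|\alpha|_c,|\beta|_c)$ and replace $\alpha,\beta$ by width-$N$ patterns that subdivide each column using fresh, pairwise distinct variables. The delicate point to check is that this widening is lossless after concatenation: the partner language already forces the shared width to be a common multiple of the two original widths, so every image of an admissible width can be re-partitioned to fit the refined column grid without gaining or losing any array. Verifying this re-partitioning — in particular that columns carrying repeated variables can still be split consistently — is the real work, and it is the same kind of bookkeeping as in the proof of Proposition~\ref{prop-hom-intersection}.

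I expect $\mathcal{L}_{\Sigma,h}$ to demand the most care, because a morphic image uses a globally uniform substitution (Lemma~\ref{lem-hom-charac}): a single block size $\mu\times\nu$ is forced on all variables simultaneously, so the two stacked parts cannot choose their block dimensions independently. Here I would equalise both dimensions at once, combining the variable-duplication device of Proposition~\ref{prop-hom-intersection} with the composition closure of morphisms (Lemma~\ref{lem-hom-closure}) to blow up $\alpha$ and $\beta$ into variable-disjoint width-$N$ patterns whose morphic images are exactly the width-$N$-compatible morphic images of the originals, and then stack them. The two inclusions would again be verified by restriction (forward) and by reading a single uniform substitution for $\gamma$ off the two given ones (backward); reconciling the forced common block size across the two parts is the crux, and the step I would scrutinise most.
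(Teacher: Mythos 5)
Your strategy coincides with the paper's: stack variable-disjoint copies, settle the equal-width case by a distributivity identity, and reduce unequal widths to it by an $\operatorname{lcm}$-refinement; your handling of the equal-width case (in particular the observation that $h_{\varobar,\varominus}(\alpha)$ and $h_{\varominus,\varobar}(\alpha)$ have the same number of rows, so the row-split of a proper image of $\alpha\varominus\beta$ is unambiguous) is if anything more careful than what the paper records. However, the width-equalisation step rests on a false premise. You assert that ``the partner language already forces the shared width to be a common multiple of the two original widths,'' but for $z\in\{r,c,p\}$ the width of an image of a pattern of width $n$ need not be a multiple of $n$: for $\alpha=\begin{bsmallmatrix}x_1&x_2\end{bsmallmatrix}$ the achievable widths are \emph{all} integers $\geq 2$, since the two column blocks may have different widths. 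Hence for $\alpha=\begin{bsmallmatrix}x_1&x_2\end{bsmallmatrix}$ and $\beta=\begin{bsmallmatrix}y_1&y_2&y_3\end{bsmallmatrix}$ the language $L_{\Sigma,r}(\alpha)\varominus L_{\Sigma,r}(\beta)$ contains arrays of every width $\geq 3$, while your width-$\operatorname{lcm}(2,3)=6$ refinement generates only widths $\geq 6$. The common width to aim for is dictated by the sets of achievable widths of the two languages (here $\max$ rather than $\operatorname{lcm}$; in general, with repeated variables, something in between), and the ``lossless re-partitioning'' must be argued relative to that. The paper's own proof makes exactly the same $\operatorname{lcm}$ assertion, so you have reproduced its gap rather than introduced a new one, but the step genuinely needs repair.

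The $h$-case, which you rightly flag as the one to scrutinise most, cannot be repaired: the claim is false there. Take $\alpha=\begin{bsmallmatrix}x_1\end{bsmallmatrix}$ and $\beta=\begin{bsmallmatrix}x_2\end{bsmallmatrix}$, so that $L_{\Sigma,h}(\alpha)\varominus L_{\Sigma,h}(\beta)=\Sigma^{++}\varominus\Sigma^{++}$ is the set of all arrays with at least two rows. By Lemma~\ref{lem-hom-charac}, every morphic image of a pattern $\delta$ has $\mu\cdot|\delta|_r$ rows and $\nu\cdot|\delta|_c$ columns for a single pair $(\mu,\nu)$; since the target language contains one-column arrays of heights $2$ and $3$, any candidate $\delta$ would need $|\delta|_c=1$ and $|\delta|_r$ dividing both $2$ and $3$, i.e., $\delta$ is a single variable --- but then $L_{\Sigma,h}(\delta)=\Sigma^{++}$ contains height-one arrays not in the target. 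So no pattern works: the globally forced block size across the two stacked parts is not merely the delicate step, it is where the statement itself fails for the $h$-mode, and no amount of variable duplication in the style of Proposition~\ref{prop-hom-intersection} can decouple the block heights of the two parts.
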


\begin{proof}
We only prove the first item. The others follow similarly. Observe that in the case of $\mathcal{L}_{\Sigma,h}$, we need Lemmas~\ref{lem-hom-charac} and~\ref{lem-hom-closure} to finish the argument. Let $\alpha$ and $\beta$ be two array patterns. We can assume that the variable alphabets of $\alpha$ and $\beta$ are disjoint.
We want to construct an array pattern $\gamma$ such that
$$L_ {\Sigma,r} (\gamma)=L_ {\Sigma,r}(\alpha)\varominus L_ {\Sigma,r}(\beta).\qquad (*)$$
Let $m_\alpha$ be the number of rows of $\alpha$ and $n_\alpha$ be the number of columns of $\alpha$.
Accordingly, $m_\beta$ is the number of rows of $\beta$ and $n_\beta$ is the number of columns of $\beta$.
If $n_\alpha=n_\beta$, then we can set $\gamma=\alpha\varominus\beta$ to satisfy $(*)$ as in the string case of Lemma~\ref{lem-string-concatenation}.
More generally, set $n=\operatorname{lcm}(n_\alpha,n_\beta)$.
$n$ is the width of the smallest arrays in $L_ {\Sigma,r}(\alpha)\varominus L_ {\Sigma,r}(\beta)$.
More generally speaking, any array in $L_ {\Sigma,r}(\alpha)\varominus L_ {\Sigma,r}(\beta)$ has some width
that is a multiple of $n$. We are going to exploit this property by constructing two array patterns $\alpha'$
and $\beta'$ of width $n$ such that
$$L_ {\Sigma,r}(\alpha)\varominus L_ {\Sigma,r}(\beta)
=L_ {\Sigma,r}(\alpha')\varominus L_ {\Sigma,r}(\beta'),$$
so that we can apply our previous reasoning, defining now $\gamma=\alpha'\varominus\beta'$.
Consider the two-dimensional morphism $h_\alpha$ that maps every variable $x$ of $\alpha$ on an array of height one and 
 width $n'_\alpha=\frac{\operatorname{lcm}(n_\alpha,n_\beta)}{n_\alpha}$,
more precisely, onto $x_1x_2\cdots x_{n'_\alpha}$.
Here, $x_1,\dots,x_{n'_\alpha}$ are ``new variables.''
Then, $h_\alpha(\alpha)$ is an array pattern of width $n=n_\alpha\cdot n'_\alpha$.
Similarly, we define a morphism $h_\beta$, yielding an array pattern $h_\beta(\beta)$ of width $n=n_\beta\cdot n'_\beta$.
Now, any array $U$ in $L_ {\Sigma,r}(\alpha')$ has a width that is a multiple of $n$ and also belongs to $L_ {\Sigma,r}(\alpha)$,
and conversely any $U\in L_ {\Sigma,r}(\alpha)$ that  has a width that is a multiple of $n$ belongs to  $L_ {\Sigma,r}(\alpha')$.
Together with similar statements for the pattern $\beta$, the claim follows.
\end{proof}

It is not a coincidence that for  $\mathcal{L}_{\Sigma,r}$ and  $\mathcal{L}_{\Sigma,c}$,
we had to focus on the ``correct'' concatenation operation in the preceding theorem.
More precisely, we can show:

\begin{theorem}
\label{thm-concatenation-nonclosure}
 Fix some non-unary alphabet $\Sigma$.
\begin{itemize}
 \item $\mathcal{L}_{\Sigma,r}$ is not closed under column  concatenation $\varobar$;
\item  $\mathcal{L}_{\Sigma,c}$ is not closed under row concatenation $\varominus$;
\item  $\mathcal{L}_{\Sigma,rc}$  is neither closed under row nor under column concatenation.
\end{itemize}
\end{theorem}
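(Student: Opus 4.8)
The plan is to prove all three non-closure statements via the transfer principle (Lemma~\ref{lem-transfer}) together with explicit small patterns, exploiting the asymmetry between the ``correct'' and ``wrong'' concatenation operation for the $r$- and $c$-modes, and the fact that $rc$-languages are unions of an $r$-language and a $c$-language. For the first item, I would argue by contradiction: suppose $\mathcal{L}_{\Sigma,r}$ were closed under column concatenation $\varobar$. The key observation is that the column-row mode $r$ preserves the horizontal neighbourship relation but is free in the vertical direction; hence an $r$-pattern language can contain arrays of arbitrarily many rows with, essentially, independent row-blocks. Column-concatenating two such languages glues together arrays of matching height, which forces a coupling across the glue-edge that no single $r$-pattern can reproduce.

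\textbf{First item (and, symmetrically, the second).} Concretely, I would take $\alpha := \begin{bsmallmatrix} x_1 \end{bsmallmatrix}$ and $\beta := \begin{bsmallmatrix} y_1 \end{bsmallmatrix}$, so that $L_{\Sigma,r}(\alpha) = L_{\Sigma,r}(\beta) = \Sigma^{++}$, and examine $L := L_{\Sigma,r}(\alpha) \varobar L_{\Sigma,r}(\beta) = \{U \varobar V \mid U,V \in \Sigma^{++}, |U|_r = |V|_r\}$. I would show that $L$ is \emph{not} of the form $L_{\Sigma,r}(\gamma)$ for any single pattern $\gamma$. The obstruction is a width/height parity or divisibility argument: since $r$-images of a fixed pattern $\gamma$ of width $n_\gamma$ have width a multiple of $n_\gamma$ but $L$ contains arrays of \emph{every} width $\geq 2$ (take $U,V$ single columns of equal height), we would be forced to $n_\gamma = 1$; but a width-one pattern $\gamma$ has $L_{\Sigma,r}(\gamma)$ equal to either $\Sigma^{++}$ or $\{W \varominus W \varominus \cdots\}$-type repetition languages, neither of which equals $L$ (the former because $L$ requires width $\geq 2$, the latter for height reasons). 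I expect the cleanest route is actually to reuse a one-dimensional separation: restricting to a \emph{single row} via Lemma~\ref{lem-transfer}, the column concatenation of two one-row $r$-languages is ordinary string concatenation, and one instead picks $\alpha,\beta$ of height two so that the genuinely two-dimensional coupling across the vertical glue line cannot be captured one-dimensionally. The second item is identical after transposition (exchanging the roles of $\varobar$ and $\varominus$, rows and columns, and $r$ and $c$).

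\textbf{Third item.} For $\mathcal{L}_{\Sigma,rc}$, I would use that $L_{\Sigma,rc}(\gamma) = L_{\Sigma,r}(\gamma) \cup L_{\Sigma,c}(\gamma)$ is always a union of an $r$- and a $c$-language sharing the \emph{same} pattern $\gamma$, hence in particular a language with a very constrained set of minimal-size arrays. I would exhibit two patterns $\alpha,\beta$ (for instance $\begin{bsmallmatrix} x_1 \end{bsmallmatrix}$ again, or a height-two pattern if needed) such that $L_{\Sigma,rc}(\alpha) \varominus L_{\Sigma,rc}(\beta)$ contains, among its smallest arrays, two arrays whose dimension-profiles cannot both arise as $r$-images of one pattern nor both as $c$-images of one pattern, so that no single $\gamma$ with $L_{\Sigma,rc}(\gamma)$ equal to this concatenation can exist; the union structure is the leverage, since membership of a small array in $L_{\Sigma,rc}(\gamma)$ forces it into \emph{one} of the two components, and I would arrange the concatenation to contain incompatible witnesses for each component. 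The analogous argument with $\varobar$ in place of $\varominus$ handles the remaining half.

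\textbf{Main obstacle.} The hard part will be the third item: unlike the first two, where a single divisibility-of-width argument suffices, the $rc$-case must defeat \emph{both} the $r$- and the $c$-interpretation of a single hypothetical $\gamma$ simultaneously, and must do so for both concatenation directions. I expect the crux is to choose the witnessing arrays in the concatenation language so that their shapes (ratios of width to height, and their minimal decompositions) are mutually exclusive with respect to the ``width is a multiple of $n_\gamma$'' constraint for $r$ and the dual ``height is a multiple of $m_\gamma$'' constraint for $c$; verifying that no $\gamma$ threads both needles is where the real work lies, and I would isolate it as a short counting lemma about the minimal arrays of $rc$-languages rather than grinding it out inline.
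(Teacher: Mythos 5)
There is a genuine gap, and it sits at the heart of your argument. Your key obstruction is the claim that $r$-images of a pattern $\gamma$ of width $n_\gamma$ have width a multiple of $n_\gamma$. This is false: in a column-row image $h_{\varobar,\varominus}(\gamma)$, the variables within one row of $\gamma$ are only forced to have equal \emph{height} (so that their column concatenation is defined); their widths may differ freely, subject only to all rows summing to the same total. Hence the width of an $r$-image of a width-$n_\gamma$ pattern is an arbitrary integer $\geq n_\gamma$, not a multiple of $n_\gamma$. This also sinks your concrete counterexample: with $\alpha=\begin{bsmallmatrix}x_1\end{bsmallmatrix}$ and $\beta=\begin{bsmallmatrix}y_1\end{bsmallmatrix}$ you get $L=\Sigma^{++}\varobar\Sigma^{++}$, which is exactly the set of all arrays of width at least two, and that \emph{is} an $r$-pattern language, namely $L_{\Sigma,r}(\begin{bsmallmatrix}x_1 & x_2\end{bsmallmatrix})$ (put $h(x_1),h(x_2)$ side by side with arbitrary widths and equal heights). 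So no contradiction arises. Your fallback remark that one should instead take height-two patterns is the right instinct, but it is precisely where all the work lies, and you leave it unspecified; note also that the transfer principle cannot help here, since one-dimensional pattern languages \emph{are} closed under concatenation (Lemma~\ref{lem-string-concatenation}).

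The paper's proof does what your hedge gestures at. It takes $\alpha=\begin{bsmallmatrix}x_1&x_2\\x_2&x_3\end{bsmallmatrix}$ and a variable-disjoint copy $\beta$, observes that the smallest arrays of $L:=L_{\Sigma,r}(\alpha)\varobar L_{\Sigma,r}(\beta)$ are $2\times 4$, and uses explicit witnesses such as $\begin{bsmallmatrix}\ta&\tb&\ta&\tb\\\tb&\ta&\tb&\ta\end{bsmallmatrix}\in L$ versus $\begin{bsmallmatrix}\ta&\tb&\ta&\ta\\\tb&\ta&\tb&\ta\end{bsmallmatrix}\notin L$ to pin down any candidate $\gamma$ to (a renaming of) $\begin{bsmallmatrix}z_1&z_2&z_3&z_4\\z_2&z_6&z_4&z_8\end{bsmallmatrix}$. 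The punchline then exploits exactly the freedom your divisibility claim denies: substituting arrays of \emph{unequal} widths (e.g.\ $z_1\mapsto\begin{bsmallmatrix}\ta&\ta\end{bsmallmatrix}$, $z_3\mapsto\begin{bsmallmatrix}\tb\end{bsmallmatrix}$) produces a $2\times 6$ array in $L_{\Sigma,r}(\gamma)$ that cannot be split down the middle into two members of $L_{\Sigma,r}(\alpha)$. Without an argument of this two-stage form (first constrain $\gamma$ by the minimal arrays, then defeat it with a larger one), your proof of the first item does not go through, and the second and third items, which you build on the same divisibility idea, inherit the problem.
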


\begin{proof}
Again, we only prove the first item; the others can be seen in a similar fashion.

Consider the array patterns $\alpha=\begin{bsmallmatrix}
x_1 & x_2\\
x_2 & x_3
\end{bsmallmatrix}$
and $\beta=\begin{bsmallmatrix}
                             y_1 & y_2\\
y_2 & y_3
                            \end{bsmallmatrix}$.
Notice that $L_{\Sigma,r}(\alpha)=L_{\Sigma,r}(\beta)$,
There are arrays of width four and height two in $L:=L_{\Sigma,r}(\alpha)\varobar L_{\Sigma,r}(\beta)$,
and these are the smallest arrays in $L$. Hence, any array pattern $\gamma$ with $L=L_{\Sigma,r}(\gamma)$ has width four and height two.
Let $\gamma=\begin{bsmallmatrix}
             z_1 & z_2 & z_3 & z_4\\
 z_5 & z_6 & z_7 & z_8
            \end{bsmallmatrix}
$, with possibly some of the variables being the same.
As $\begin{bsmallmatrix}
             \ta & \tb & \ta & \tb\\
             \tb & \ta & \tb & \ta \\
            \end{bsmallmatrix}
\in L
$, but $\begin{bsmallmatrix}
             \ta & \tb & \ta & \ta\\
             \tb & \ta & \tb & \ta \\
            \end{bsmallmatrix}
\notin L
$ and also 
$\begin{bsmallmatrix}
             \ta & \ta & \ta & \tb\\
             \tb & \ta & \tb & \ta \\
            \end{bsmallmatrix}
\notin L
$, $z_2=z_5$ and $z_4=z_7$.
More generally, it can be verified that the array pattern
$\zeta=\begin{bsmallmatrix}
             z_1 & z_2 & z_3 & z_4\\
 z_2 & z_6 & z_4 & z_8
            \end{bsmallmatrix}
$ describes those and only those arrays of width four and height two that belong to $L$.
However, $U=\begin{bsmallmatrix}
           \ta & \ta & \tb & \ta & \tb & \tb\\
\tb & \ta & \tb & \tb& \ta & \tb
          \end{bsmallmatrix}$
is an array from $L_{\Sigma,r}(\zeta)$ (namely, consider $z_1=\ta\ta$, $z_2=\tb\ta$, $z_3=z_4=z_6=\tb$  and $z_8=\ta\tb$) that does not belong to $L$, as this would mean that
$U=U_1\varobar U_2$ for some array $U_1$ or $U_2$ of width two and height two from  $L_{\Sigma,r}(\alpha)=L_{\Sigma,r}(\beta)$.
However, neither $U_1=\begin{bsmallmatrix}
                       \ta & \ta\\
\tb & \ta
                      \end{bsmallmatrix}
$
 nor $U_2=\begin{bsmallmatrix}
           \tb & \tb\\
\ta & \tb
          \end{bsmallmatrix}
$
belongs to $L_{\Sigma,r}(\alpha)=L_{\Sigma,r}(\beta)$.
\end{proof}

Notice that the proofs of negative closure properties necessitate a non-unary alphabet to work.

We now turn to the Kleene closure. Here, we can again first show a non-closure result for the string case that then readily transfers to the array cases.

\begin{lemma}
 \label{lem-Kleene}
Let $\Sigma$ be a non-unary alphabet.
Consider $\alpha=xx$. Then, $(L^{\oneD}_{\Sigma}(\alpha))^+\notin \mathcal{L}^{\oneD}_{\Sigma}$.
\end{lemma}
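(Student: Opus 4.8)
The plan is to use the standard shortest-words technique for disproving membership in the class of (nonerasing, terminal-free) pattern languages. First I would make the target language explicit: since $L^{\oneD}_{\Sigma}(\alpha) = \{ww \mid w \in \Sigma^+\}$ for $\alpha = xx$, its Kleene plus is
\[
(L^{\oneD}_{\Sigma}(\alpha))^+ = \{w_1 w_1 w_2 w_2 \cdots w_k w_k \mid k \geq 1,\ w_i \in \Sigma^+\}\,.
\]
Every factor $w_i w_i$ has length at least two, so the shortest words in this language have length exactly two, and they are precisely the words $\ta\ta$ with $\ta \in \Sigma$ (a single square built from a length-one block). In particular no word of length two other than such $\ta\ta$ lies in the language.

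Next I would argue by contradiction, assuming there is a pattern $\gamma$ with $L^{\oneD}_{\Sigma}(\gamma) = (L^{\oneD}_{\Sigma}(\alpha))^+$. The number of variable occurrences in $\gamma$ equals the length of the shortest words in $L^{\oneD}_{\Sigma}(\gamma)$, since mapping every variable to a single letter realises this minimum and no shorter word is derivable. As the shortest words have length two, $\gamma$ must consist of exactly two variable occurrences, so either $\gamma = yz$ with $y \neq z$, or $\gamma = yy$.

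Then I would rule out both forms using the non-unary assumption, fixing $\ta, \tb \in \Sigma$ with $\ta \neq \tb$. If $\gamma = yz$, then $L^{\oneD}_{\Sigma}(\gamma) = \Sigma^+\Sigma^+$ contains $\ta\tb$, but $\ta\tb$ is not a concatenation of squares (the only squares of length two are $\ta\ta$ and $\tb\tb$), so $\ta\tb \notin (L^{\oneD}_{\Sigma}(\alpha))^+$, a contradiction. If $\gamma = yy$, then $L^{\oneD}_{\Sigma}(\gamma) = \{ww \mid w \in \Sigma^+\}$ contains only squares, whereas $\ta\ta\tb\tb = (\ta\ta) \cdot (\tb\tb)$ lies in $(L^{\oneD}_{\Sigma}(\alpha))^+$ yet is not a square (if $\ta\ta\tb\tb = ww$ then necessarily $w = \ta\ta$, forcing $\ta\ta\ta\ta \neq \ta\ta\tb\tb$); hence the two languages again differ. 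Either way a contradiction results, so no such $\gamma$ exists.

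The argument is essentially routine once set up; the only points that require care are the reduction to a two-variable pattern via the shortest-word-length invariant, and the choice of the two witnesses $\ta\tb$ and $\ta\ta\tb\tb$, both of which crucially rely on $|\Sigma| \geq 2$. I do not expect any genuine obstacle beyond verifying that $\ta\ta\tb\tb$ is not a square, which is the one slightly non-obvious step.
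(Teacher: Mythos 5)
Your proposal is correct and follows essentially the same route as the paper's proof: the shortest-word argument forces the hypothetical pattern to have exactly two variable occurrences, and the two cases $yz$ and $yy$ are then refuted with the same witnesses $\ta\tb$ and $\ta\ta\tb\tb$, respectively. The only difference is that you spell out the intermediate verifications (e.g.\ that $\ta\ta\tb\tb$ is not a square) which the paper leaves implicit.
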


\begin{proof}
The shortest words in $L:=(L^{\oneD}_{\Sigma}(\alpha))^+$ are of length two. Hence, there are only two different possibilities for any pattern $\beta$ with $L^{\oneD}_{\Sigma}(\beta)=L$: If $\beta=\alpha=xx$, then $\ta\ta\tb\tb\in L\setminus L^{\oneD}_{\Sigma}(\beta)$, while if $\beta=xy$, then $\ta\tb\in L^{\oneD}_{\Sigma}(\beta)\setminus L$.
\end{proof}

\begin{proposition}\label{prop-Kleene}
Let $\Sigma$ be a non-unary alphabet.
Then, none of the array language families 
  $\mathcal{L}_{\Sigma,x}$ with $x\in\{r,c,rc,p,h\}$ is closed under 
column concatenation closure nor under row concatenation closure.
\end{proposition}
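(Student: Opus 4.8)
The plan is to leverage Lemma~\ref{lem-transfer}, which transfers one-dimensional non-closure results to all array modes, exactly as was done for union (Corollary~\ref{cor-clos-union}) and intersection. Concretely, I would take the one-dimensional pattern $\alpha = xx$ and observe that Lemma~\ref{lem-Kleene} already shows $(L^{\oneD}_{\Sigma}(\alpha))^+ \notin \mathcal{L}^{\oneD}_{\Sigma}$. The task is then to reduce the array concatenation-closure problem to this string fact. The subtlety is that we have two closures, $\varobar *$ and $\varominus *$, and the reduction must be compatible with whichever one we pick.

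First I would restrict attention to height-one arrays. If $\pi$ is an array pattern of height one, then by Lemma~\ref{lem-transfer} its various array pattern languages, restricted to single-row arrays, coincide with the string pattern language $L^{\oneD}_{\Sigma}(\pi)$ (exactly for $z \in \{p, r, c, rc\}$, and with the noted subtlety for $z = h$). Taking $\alpha = xx$ as a height-one array pattern, the row concatenation $\varominus$ would produce taller arrays and thus destroy the height-one structure, whereas the column concatenation closure $L_{\Sigma, x}(\alpha)^{\varobar *}$, when intersected with the single-row arrays, corresponds precisely to the string Kleene closure $(L^{\oneD}_{\Sigma}(\alpha))^+$. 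Hence the natural choice is to argue via column concatenation closure for the height-one witness, and then appeal to the symmetry between rows and columns (via transposition, or simply by repeating the argument with a width-one pattern) to handle row concatenation closure.

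The key steps, in order, would be: (1) fix the height-one pattern $\alpha = xx$; (2) suppose for contradiction that some array pattern $\gamma$ satisfies $L_{\Sigma, x}(\gamma) = L_{\Sigma, x}(\alpha)^{\varobar *}$; (3) argue that $\gamma$ must itself have height one, since the smallest arrays in $L_{\Sigma, x}(\alpha)^{\varobar *}$ have height one (each factor $L_{\Sigma, x}(\alpha)$ contains height-one arrays and column concatenation preserves height); (4) restrict to single-row arrays and invoke Lemma~\ref{lem-transfer} to obtain $L^{\oneD}_{\Sigma}(\gamma) = (L^{\oneD}_{\Sigma}(\alpha))^+$, contradicting Lemma~\ref{lem-Kleene}; (5) transpose the whole construction to get the row-concatenation-closure non-closure result. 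The main obstacle I anticipate is step (3): pinning down that $\gamma$ has height exactly one requires care, because one must rule out the possibility that a taller pattern $\gamma$ happens to generate only the height-one language when one also demands that all of $L_{\Sigma, x}(\alpha)^{\varobar *}$ be captured. The clean way around this is the minimal-array argument already used in Theorem~\ref{thm-concatenation-nonclosure} and Lemma~\ref{equalityLemma}: the presence of height-one arrays in the target language forces $|\gamma|_r = 1$, after which the reduction to the string case is routine. A secondary subtlety is that the proofs of Lemma~\ref{lem-Kleene} use replacement words of possibly differing lengths, so one should check (as was done for intersection) whether the $h$-mode is genuinely covered; here the witnessing words $\ta\ta\tb\tb$ and $\ta\tb$ of Lemma~\ref{lem-Kleene} already work within the equal-length constraint of the $h$-mode, so no separate argument is needed.
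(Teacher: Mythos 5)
Your proof is correct and shares its core with the paper's: both reduce to Lemma~\ref{lem-Kleene} through Lemma~\ref{lem-transfer} using the height-one pattern $xx$. The difference lies in how the two concatenation directions are split. The paper handles only one closure by the string-transfer argument and settles the other by revisiting the witness $\begin{bsmallmatrix}x_1&x_2\\x_2&x_3\end{bsmallmatrix}$ from Theorem~\ref{thm-concatenation-nonclosure}, asserting that the argument given there also excludes the iterated closure; you instead dispose of both directions with the single witness $xx$ and its transpose (equivalently, the width-one pattern $\begin{bsmallmatrix}x\\x\end{bsmallmatrix}$), which is more uniform and spares the reader from re-verifying that earlier, more involved construction. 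You are also more precise than the paper's wording about which closure actually pairs with the height-one reduction: it is the \emph{column} concatenation closure whose single-row part equals $(L^{\oneD}_{\Sigma}(xx))^+$, whereas row-concatenating height-one witnesses only produces taller arrays --- indeed $(L_{\Sigma,r}(xx))^{\varominus *}=L_{\Sigma,r}(xx)$, since $(V_1\varobar V_1)\varominus (V_2\varobar V_2)=(V_1\varominus V_2)\varobar (V_1\varominus V_2)$ whenever the left-hand side is defined --- so your explicit transposition step (5) is genuinely needed and makes rigorous what the paper's one-line attribution of the row-closure case to Lemmas~\ref{lem-transfer} and~\ref{lem-Kleene} leaves implicit. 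Your step (3) (the minimal-array argument forcing $|\gamma|_r=1$) and your check that the $h$-mode is covered because the witnesses $\ta\ta\tb\tb$ and $\ta\tb$ in Lemma~\ref{lem-Kleene} arise from equal-length substitutions both match exactly what the paper relies on; for $r$ and $c$ the transposition step should be routed through Corollary~\ref{cor-transpose} (which swaps the two classes) or, as you suggest, simply by repeating the argument with the width-one pattern, and then there are no gaps.
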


\begin{proof}
 Consider  $\mathcal{L}_{\Sigma,r}$. Due to Lemmas~\ref{lem-transfer} and~\ref{lem-Kleene}, this class is not closed under row concatenation closure.
For the case of column contentation closure, reconsider the proof of Theorem~\ref{thm-concatenation-nonclosure}.
There, we have presented a language $L\in\mathcal{L}_{\Sigma,r}$ such that $L\varobar L\notin \mathcal{L}_{\Sigma,r}$.
But that argument also shows that the column contentation closure of $L$ does not belong to $\mathcal{L}_{\Sigma,r}$.

The other cases are simillarly seen.
For the case of morphisms, observe that the contradiction in Lemma~\ref{lem-Kleene} was derived by
substituting the variables by words of the same length.
\end{proof}

\subsection{Operations Special to Arrays}\label{sec:arraycaseops}

Recall that the transposition operation is first defined for arrays (or patterns) and can then be lifted to languages and even to language classes.
Nearly by definition, we find:

\begin{lemma}\label{lem-transpose}
Let $\Sigma$ be some alphabet. Let $\alpha$ be a pattern. Then, 
 $\transpose{L_{\Sigma, r}(\alpha)}=L_{\Sigma, c}(\transpose{\alpha})$
and   
 $\transpose{L_{\Sigma, c}(\alpha)}=L_{\Sigma, r}(\transpose{\alpha})$.
\end{lemma}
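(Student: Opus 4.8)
The plan is to prove the two stated identities by directly unwinding the definitions of transposition, column-row substitution, and row-column substitution, exploiting the fundamental interaction between transposition and the two concatenation operations. The key observation I would establish first is the ``swapping'' lemma for the geometric primitive: for any arrays $U, V \in \Sigma^{**}$, we have $\transpose{(U \varobar V)} = \transpose{U} \varominus \transpose{V}$ and $\transpose{(U \varominus V)} = \transpose{U} \varobar \transpose{V}$ (with the usual conventions for $\lambda$ and $\undef$, which transposition respects since $\transpose{\lambda} = \lambda$ and $\undef$ is preserved). This is immediate from the definitions, since reflecting along the main diagonal interchanges the horizontal and the vertical joining edge. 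Everything else in the proof is bookkeeping built on top of this single fact.

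First I would focus on the inclusion $\transpose{L_{\Sigma, r}(\alpha)} \subseteq L_{\Sigma, c}(\transpose{\alpha})$. Take any $W \in L_{\Sigma, r}(\alpha)$, so by definition there is a substitution $h : X \to \Sigma^{++}$ with $W = h_{\varobar, \varominus}(\alpha)$. I would then define a companion substitution $h' : X \to \Sigma^{++}$ by setting $h'(x) := \transpose{h(x)}$ for every $x \in X$. The heart of the argument is the claim that $\transpose{(h_{\varobar, \varominus}(\alpha))} = h'_{\varominus, \varobar}(\transpose{\alpha})$. To see this, write $\alpha = [y_{i,j}]_{m,n}$, so that $\transpose{\alpha} = [y_{j,i}]_{n,m}$. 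Applying the swapping lemma repeatedly to the defining formula for $h_{\varobar, \varominus}(\alpha)$ (a row-concatenation of $m$ blocks, each a column-concatenation of $n$ images), transposition turns every $\varobar$ into a $\varominus$ and vice versa, and reverses the nesting order, which is precisely the formula for $h'_{\varominus, \varobar}(\transpose{\alpha})$ once we record that $h'(y_{i,j}) = \transpose{h(y_{i,j})}$. Hence $\transpose{W} = h'_{\varominus, \varobar}(\transpose{\alpha}) \in L_{\Sigma, c}(\transpose{\alpha})$.

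For the reverse inclusion I would run the symmetric argument: given $W' \in L_{\Sigma, c}(\transpose{\alpha})$ witnessed by some $g$, set $g'(x) := \transpose{g(x)}$ and use the same swapping identity (now in the other direction, using $\transpose{\transpose{\alpha}} = \alpha$ and $\transpose{\transpose{U}} = U$) to conclude that $\transpose{W'} \in L_{\Sigma, r}(\alpha)$, i.e.\ $W' \in \transpose{L_{\Sigma, r}(\alpha)}$. This establishes $\transpose{L_{\Sigma, r}(\alpha)} = L_{\Sigma, c}(\transpose{\alpha})$. The second identity, $\transpose{L_{\Sigma, c}(\alpha)} = L_{\Sigma, r}(\transpose{\alpha})$, then follows either by the identical method with the roles of $\varobar,\varominus$ interchanged, or more slickly by applying the first identity to the pattern $\transpose{\alpha}$ in place of $\alpha$ and taking transposes, using involutivity of transposition on both arrays and languages.

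The main obstacle is purely notational rather than conceptual: carefully verifying that transposition commutes past the nested double concatenation in the right order, i.e.\ that the outer $\varominus$ over rows becomes an outer $\varobar$ over the columns of $\transpose{\alpha}$ while the inner $\varobar$ over columns becomes an inner $\varominus$, matching the definition of $h'_{\varominus, \varobar}$ exactly. I would make this rigorous by an induction on the number of concatenation operations, with the swapping lemma as the base/step, rather than manipulating the large explicit formulas; this keeps the index juggling manageable and avoids any off-by-one transposition of the $m \times n$ versus $n \times m$ shapes. No genuinely hard step arises, which matches the ``nearly by definition'' remark preceding the statement.
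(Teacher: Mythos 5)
Your proof is correct and matches the paper's intent: the paper states this lemma without proof, remarking only that it holds ``nearly by definition,'' and your argument --- the swapping identities $\transpose{(U \varobar V)} = \transpose{U} \varominus \transpose{V}$, $\transpose{(U \varominus V)} = \transpose{U} \varobar \transpose{V}$ together with the companion substitution $h'(x) := \transpose{h(x)}$ giving $\transpose{h_{\varobar, \varominus}(\alpha)} = h'_{\varominus, \varobar}(\transpose{\alpha})$ --- is exactly the definitional verification the authors leave implicit. No gap.
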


\begin{corollary}\label{cor-transpose}
Let $\Sigma$ be some alphabet. Then, 
  $\transpose{\mathcal{L}_{\Sigma, r}}= \mathcal{L}_{\Sigma, c}$ and
  $\transpose{\mathcal{L}_{\Sigma, c}}= \mathcal{L}_{\Sigma, r}$.
\end{corollary}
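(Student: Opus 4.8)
The plan is to derive the corollary directly from Lemma~\ref{lem-transpose}, lifting the pointwise identity on individual pattern languages to the level of language classes. First I would unfold the definition of the class transposition: by definition $\transpose{\mathcal{L}_{\Sigma, r}} = \{\transpose{L} \mid L \in \mathcal{L}_{\Sigma, r}\}$, and since $\mathcal{L}_{\Sigma, r} = \{L_{\Sigma, r}(\alpha) \mid \alpha \in X^{++}\}$, this equals $\{\transpose{L_{\Sigma, r}(\alpha)} \mid \alpha \in X^{++}\}$.

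Next I would apply Lemma~\ref{lem-transpose} to each member of this collection, replacing $\transpose{L_{\Sigma, r}(\alpha)}$ by $L_{\Sigma, c}(\transpose{\alpha})$, which yields $\transpose{\mathcal{L}_{\Sigma, r}} = \{L_{\Sigma, c}(\transpose{\alpha}) \mid \alpha \in X^{++}\}$. The only step requiring a word of justification is that, as $\alpha$ ranges over all array patterns, so does $\transpose{\alpha}$. This holds because transposition is an involution on $X^{++}$ (reflecting twice along the main diagonal returns the original array), and is therefore a bijection of the set of array patterns onto itself. Consequently $\{L_{\Sigma, c}(\transpose{\alpha}) \mid \alpha \in X^{++}\} = \{L_{\Sigma, c}(\beta) \mid \beta \in X^{++}\} = \mathcal{L}_{\Sigma, c}$, establishing the first equality.

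The second equality $\transpose{\mathcal{L}_{\Sigma, c}} = \mathcal{L}_{\Sigma, r}$ follows by the identical argument using the second identity of Lemma~\ref{lem-transpose}; alternatively, it is immediate from the first equality by applying transposition once more to both sides and invoking the involution property $\transpose{(\transpose{U})} = U$.

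I do not expect any genuine obstacle: the entire content is carried by Lemma~\ref{lem-transpose}, and the corollary is merely its reformulation at the level of classes. The one point worth stating explicitly is the bijectivity (indeed involutivity) of transposition on $X^{++}$, since this is precisely what guarantees that the resulting collection $\{L_{\Sigma, c}(\transpose{\alpha}) \mid \alpha \in X^{++}\}$ contains every $c$ pattern language and no spurious ones, so that it coincides exactly with $\mathcal{L}_{\Sigma, c}$.
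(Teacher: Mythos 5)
Your proof is correct and follows exactly the route the paper intends: the corollary is stated without explicit proof as an immediate consequence of Lemma~\ref{lem-transpose}, and your argument simply spells out the lifting to the class level, with the involutivity of transposition on $X^{++}$ being the one detail worth making explicit, as you do.
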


Since $\alpha :=
\begin{bsmallmatrix}
  x_1 & x_2 \\
  x_2 & x_1 \\
 \end{bsmallmatrix}$ is identical to its transposition and, as shown in the proof of Lemma~\ref{prcIncompLemma}, describes an $r$ pattern language (a $c$ pattern language), which is not a $c$ pattern language (not an $r$ pattern language, respectively), we can conclude the following:

\begin{proposition}\label{prop-transposition}
Let $\Sigma$ be an alphabet.
Neither $\mathcal{L}_{\Sigma, c}$ nor $\mathcal{L}_{\Sigma, r}$
are closed under transposition.
\end{proposition}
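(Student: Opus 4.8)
The plan is to exploit the single pattern $\alpha := \begin{bsmallmatrix} x_1 & x_2 \\ x_2 & x_1 \end{bsmallmatrix}$, whose crucial feature is that $\transpose{\alpha} = \alpha$: transposing swaps the two off-diagonal occurrences of $x_2$ with each other and fixes the two diagonal occurrences of $x_1$. I would take as witness the language $L := L_{\Sigma, c}(\alpha) \in \mathcal{L}_{\Sigma, c}$ and show that its transpose escapes the class. By Lemma~\ref{lem-transpose}, $\transpose{L} = \transpose{L_{\Sigma, c}(\alpha)} = L_{\Sigma, r}(\transpose{\alpha}) = L_{\Sigma, r}(\alpha)$, so the whole proposition reduces to the single assertion that $L_{\Sigma, r}(\alpha) \notin \mathcal{L}_{\Sigma, c}$ (and, symmetrically, $L_{\Sigma, c}(\alpha) \notin \mathcal{L}_{\Sigma, r}$).

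For this assertion I would combine Lemma~\ref{prcIncompLemma} with Lemma~\ref{equalityLemma}, and this is where the main work lies. Lemma~\ref{prcIncompLemma} only yields $L_{\Sigma, r}(\alpha) \neq L_{\Sigma, c}(\alpha)$, i.e.\ it separates two \emph{specific} languages; the non-closure claim needs the stronger statement that $L_{\Sigma, r}(\alpha)$ is not the $c$ pattern language of \emph{any} pattern whatsoever. The obstacle is precisely this upgrade from "these two differ" to "the transpose leaves the entire class." I would handle it by contradiction: suppose $L_{\Sigma, r}(\alpha) = L_{\Sigma, c}(\beta)$ for some $\beta \in X^{++}$. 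Since $|\Sigma| \geq 2$, Lemma~\ref{equalityLemma} applied with $z = r$ and $z' = c$ forces $\alpha \sim \beta$; as we no longer distinguish patterns agreeing up to renaming, $\beta = \alpha$, and hence $L_{\Sigma, r}(\alpha) = L_{\Sigma, c}(\alpha)$, contradicting Lemma~\ref{prcIncompLemma}. Thus $\transpose{L} = L_{\Sigma, r}(\alpha) \notin \mathcal{L}_{\Sigma, c}$ while $L \in \mathcal{L}_{\Sigma, c}$, so $\mathcal{L}_{\Sigma, c}$ is not closed under transposition.

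The case of $\mathcal{L}_{\Sigma, r}$ is entirely symmetric: from $L_{\Sigma, r}(\alpha) \in \mathcal{L}_{\Sigma, r}$, Lemma~\ref{lem-transpose} gives $\transpose{L_{\Sigma, r}(\alpha)} = L_{\Sigma, c}(\transpose{\alpha}) = L_{\Sigma, c}(\alpha)$, and the same argument via Lemmas~\ref{prcIncompLemma} and~\ref{equalityLemma} shows $L_{\Sigma, c}(\alpha) \notin \mathcal{L}_{\Sigma, r}$. Alternatively, I could phrase both halves at the class level through Corollary~\ref{cor-transpose}: were $\mathcal{L}_{\Sigma, c}$ transposition-closed, then $\mathcal{L}_{\Sigma, r} = \transpose{\mathcal{L}_{\Sigma, c}} \subseteq \mathcal{L}_{\Sigma, c}$, which is refuted by the separating language $L_{\Sigma, r}(\alpha)$ just exhibited. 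The one point demanding care is the reliance on $|\Sigma| \geq 2$ in Lemma~\ref{equalityLemma}: for a unary alphabet the chosen witness in fact fails (a unary $r$ pattern language of $\alpha$ is, by dimension counting, also realizable as a $c$ pattern language), so the intended reading throughout is $|\Sigma| \geq 2$ and the unary alphabet would need separate treatment.
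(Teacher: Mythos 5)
Your proof is correct and follows essentially the same route as the paper: the self-transposed pattern $\begin{bsmallmatrix} x_1 & x_2 \\ x_2 & x_1 \end{bsmallmatrix}$ combined with Lemma~\ref{lem-transpose} and Lemma~\ref{prcIncompLemma}. The only difference is that you make explicit the upgrade via Lemma~\ref{equalityLemma} from ``$L_{\Sigma,r}(\alpha)\neq L_{\Sigma,c}(\alpha)$'' to ``$L_{\Sigma,r}(\alpha)\notin\mathcal{L}_{\Sigma,c}$'' (which the paper justifies once in the remark preceding Lemma~\ref{hIncompLemma} and then uses silently), and your caveat that the argument needs $|\Sigma|\geq 2$ matches the paper's own acknowledgment that its negative closure proofs require a non-unary alphabet.
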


\begin{proof}
Consider the pattern $\alpha :=
\begin{bsmallmatrix}
  x_1 & x_2 \\
  x_2 & x_1 \\
 \end{bsmallmatrix}$. 
As $\alpha=\transpose{\alpha}$,  $$\transpose{L_{\Sigma, r}(\alpha)}=L_{\Sigma, c}(\transpose{\alpha})=L_{\Sigma, c}(\alpha)\notin \mathcal{L}_{\Sigma, r},$$
 as we have shown in Lemma~\ref{prcIncompLemma}.
Symmetrically, the other claim follows.
\end{proof}

\begin{proposition}\label{prop-transposition-a}
For any alphabet $\Sigma$ and $x\in\{h,p,rc\}$, $\mathcal{L}_{\Sigma, x}$ is closed under transposition.
 \end{proposition}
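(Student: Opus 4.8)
The plan is to establish, for each $x \in \{h, p, rc\}$ and each pattern $\alpha$, the identity $\transpose{L_{\Sigma,x}(\alpha)} = L_{\Sigma,x}(\transpose{\alpha})$, which immediately shows that the transposed language is again an $x$ pattern language. The whole argument rests on the elementary interaction between transposition and the two concatenations, namely $\transpose{(U \varobar V)} = \transpose{U} \varominus \transpose{V}$ and $\transpose{(U \varominus V)} = \transpose{U} \varobar \transpose{V}$, valid whenever the concatenations are defined (with both sides equal to $\undef$ otherwise). These identities follow directly from the definitions by comparing entries, and since transposition is an involution they may be read in either direction.

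Building on this, I would prove the following central fact. For a substitution $h$, let $g$ be the substitution with $g(x) := \transpose{h(x)}$ for every variable $x$. Then $g_{\varobar, \varominus}(\transpose{\alpha}) = \transpose{h_{\varominus, \varobar}(\alpha)}$ and $g_{\varominus, \varobar}(\transpose{\alpha}) = \transpose{h_{\varobar, \varominus}(\alpha)}$. Intuitively, transposing exchanges the roles of rows and columns and hence turns a column-row assembly into a row-column assembly and vice versa. Formally this is a routine induction using the concatenation identities above together with $\transpose{\alpha}[i,j] = \alpha[j,i]$. The only delicate point is tracking the index bookkeeping so that the outer and inner concatenations are swapped correctly; this is where I expect the main (purely bookkeeping) effort to lie. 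Note also that $h \mapsto g$ is a bijection on substitutions, again because transposition is an involution.

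Given this fact, each of the three cases is short. For $x = p$: if $W$ is a proper image of $\alpha$ witnessed by $h$, i.\,e.\ $h_{\varobar, \varominus}(\alpha) = h_{\varominus, \varobar}(\alpha) = W$, then with $g(x) := \transpose{h(x)}$ the central fact yields $g_{\varobar, \varominus}(\transpose{\alpha}) = g_{\varominus, \varobar}(\transpose{\alpha}) = \transpose{W}$, so $\transpose{W}$ is a proper image of $\transpose{\alpha}$; the converse follows from the bijectivity of $h \mapsto g$. Hence $\transpose{L_{\Sigma,p}(\alpha)} = L_{\Sigma,p}(\transpose{\alpha})$. For $x = h$: by Lemma~\ref{lem-hom-charac} a morphic image is precisely $h_{\varobar, \varominus}(\alpha) = h_{\varominus, \varobar}(\alpha)$ for a uniform $h$; since $g(x) = \transpose{h(x)}$ is again uniform (transposition merely swaps the two dimensions, uniformly over all variables), the same computation combined with Lemma~\ref{lem-hom-charac} shows $\transpose{W} \in L_{\Sigma,h}(\transpose{\alpha})$, and symmetrically, giving $\transpose{L_{\Sigma,h}(\alpha)} = L_{\Sigma,h}(\transpose{\alpha})$.

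Finally, for $x = rc$ I simply invoke Lemma~\ref{lem-transpose}, which gives $\transpose{L_{\Sigma,r}(\alpha)} = L_{\Sigma,c}(\transpose{\alpha})$ and $\transpose{L_{\Sigma,c}(\alpha)} = L_{\Sigma,r}(\transpose{\alpha})$. Since transposition distributes over union, $\transpose{L_{\Sigma,rc}(\alpha)} = \transpose{L_{\Sigma,r}(\alpha)} \cup \transpose{L_{\Sigma,c}(\alpha)} = L_{\Sigma,c}(\transpose{\alpha}) \cup L_{\Sigma,r}(\transpose{\alpha}) = L_{\Sigma,rc}(\transpose{\alpha})$. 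In all three cases the transposed language is an $x$ pattern language over $\Sigma$, completing the proof. Note that, in contrast to the $r$ and $c$ modes treated in Proposition~\ref{prop-transposition}, here transposition maps each class back into itself precisely because the defining condition is symmetric in the two concatenation orders (for $p$ and $h$) or closed under the $r \leftrightarrow c$ swap (for $rc$).
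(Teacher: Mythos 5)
Your proof is correct and follows the same route as the paper: the $rc$ case is handled identically via Lemma~\ref{lem-transpose} and distributivity of transposition over union, while for $h$ and $p$ you spell out in detail what the paper dismisses as ``immediate from the fact that we have proper factorizations,'' namely that transposition swaps the two assembly orders (so $g_{\varobar,\varominus}(\transpose{\alpha}) = \transpose{h_{\varominus,\varobar}(\alpha)}$ with $g(x):=\transpose{h(x)}$) and preserves uniformity of substitutions.
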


\begin{proof}
For $h$ and $p$, this claim is immediate from the fact that we have proper factorizations.
For the case $x=rc$, we use Lemma~\ref{lem-transpose}.
 Let $\alpha$ be some pattern.
Then,
\begin{eqnarray*}
\transpose{L_{\Sigma, rc}(\alpha)}&=&\transpose{({L_{\Sigma, r}(\alpha)}\cup {L_{\Sigma, c}(\alpha)})}\\
&=&L_{\Sigma, c}(\transpose{\alpha})\cup L_{\Sigma, r}(\transpose{\alpha})\\
&=& L_{\Sigma, rc}(\transpose{\alpha})
\end{eqnarray*}
This immediately implies the claim.
\end{proof}

With respect to purely geometric operations as turns and reflections, we find the following:

\begin{proposition}\label{prop-turns}
Let $\Sigma$ be some alphabet.
\begin{itemize}
\item $\mathcal{L}_{\Sigma,rc}$, $\mathcal{L}_{\Sigma,p}$ and $\mathcal{L}_{\Sigma,h}$ are closed under quarter-turn.
\item For every $x\in\{r,c,rc,p,h\}$, $\mathcal{L}_{\Sigma,x}$ is closed under half-turn and reflections.
\item $\mathcal{L}_{\Sigma,r}$ and $\mathcal{L}_{\Sigma,c}$ are closed 
neither under left nor under right turn.
\end{itemize}
\end{proposition}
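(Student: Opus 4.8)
The plan is to reduce every geometric operation to the interplay between transposition, already settled in Lemma~\ref{lem-transpose} and Proposition~\ref{prop-transposition-a}, and the two axis reflections, whose effect on a factorisation can be tracked directly. I would treat the reflections and the half-turn first (the second item), since they are needed for the quarter-turn claims. The key computational facts are that for all arrays $U,V$ one has $(U \varominus V)^{\varominus\texttt{R}} = V^{\varominus\texttt{R}} \varominus U^{\varominus\texttt{R}}$ and $(U \varobar V)^{\varominus\texttt{R}} = U^{\varominus\texttt{R}} \varobar V^{\varominus\texttt{R}}$, together with the dual identities obtained by exchanging $\varobar$ with $\varominus$ and $\varominus\texttt{R}$ with $\varobar\texttt{R}$. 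The decisive point is that neither reflection exchanges the two concatenation axes: it only reverses the order along one axis and reflects each factor. Applying these identities to $h_{\varobar,\varominus}(\alpha)$ gives $(L_{\Sigma,r}(\alpha))^{\varominus\texttt{R}} = L_{\Sigma,r}(\alpha^{\varominus\texttt{R}})$ and $(L_{\Sigma,r}(\alpha))^{\varobar\texttt{R}} = L_{\Sigma,r}(\alpha^{\varobar\texttt{R}})$, where the reflected substitution $x \mapsto h(x)^{\varominus\texttt{R}}$ (respectively $x\mapsto h(x)^{\varobar\texttt{R}}$) is again a well-defined substitution because a reflection is a dimension-preserving bijection applied uniformly to all images. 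The same computation with $h_{\varominus,\varobar}$ yields the analogous statements for $\mathcal{L}_{\Sigma,c}$; the classes $\mathcal{L}_{\Sigma,rc}$, $\mathcal{L}_{\Sigma,p}$ and $\mathcal{L}_{\Sigma,h}$ follow by taking unions and, for $p$ and $h$, by using that the column-row and row-column images coincide. Since the half-turn is the composition $(\cdot)^{\varominus\texttt{R}}\circ(\cdot)^{\varobar\texttt{R}}$, closure under half-turn is then immediate.

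For the first item I would use that a quarter-turn factors through a transposition and a reflection, namely $U^{\curvearrowright} = (\transpose{U})^{\varobar\texttt{R}}$ and $U^{\curvearrowleft} = (\transpose{U})^{\varominus\texttt{R}}$. For every $x\in\{h,p,rc\}$ the class $\mathcal{L}_{\Sigma,x}$ is closed under transposition by Proposition~\ref{prop-transposition-a} and under both reflections by the preceding paragraph, hence it is closed under both quarter-turns.

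For the third item I would instead exploit that transposition \emph{does} swap the $r$- and $c$-modes. Combining $U^{\curvearrowright} = (\transpose{U})^{\varobar\texttt{R}}$ with Lemma~\ref{lem-transpose} and the reflection identity for $\mathcal{L}_{\Sigma,c}$ gives $(L_{\Sigma,r}(\alpha))^{\curvearrowright} = L_{\Sigma,c}((\transpose{\alpha})^{\varobar\texttt{R}})$, so a right turn always sends an $r$-language to a $c$-language. Taking the self-symmetric pattern $\alpha = \begin{bsmallmatrix} x_1 & x_2 \\ x_2 & x_1 \end{bsmallmatrix}$, which satisfies $(\transpose{\alpha})^{\varobar\texttt{R}} = \alpha$ up to renaming, this specialises to $(L_{\Sigma,r}(\alpha))^{\curvearrowright} = L_{\Sigma,c}(\alpha)$. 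Were this an $r$-language, say $L_{\Sigma,r}(\delta)$, then Lemma~\ref{equalityLemma} would force $\delta \sim \alpha$ and hence $L_{\Sigma,r}(\alpha) = L_{\Sigma,c}(\alpha)$, contradicting Lemma~\ref{prcIncompLemma}. The same pattern together with $U^{\curvearrowleft} = (\transpose{U})^{\varominus\texttt{R}}$ handles the left turn, and transposing the whole argument (or appealing to Corollary~\ref{cor-transpose}) yields the corresponding non-closure statements for $\mathcal{L}_{\Sigma,c}$.

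I expect the main obstacle to be the bookkeeping in the first step: one must verify carefully that each reflection reverses exactly one of the two concatenation orders while leaving the other intact, so that a column-row image is mapped to a column-row image and never to a row-column image. Once this axis-preservation is pinned down, all remaining cases are merely compositions of operations whose behaviour is already understood, and no further estimates are required.
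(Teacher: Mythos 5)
Your proof is correct and follows essentially the same route as the paper: the positive items come down to the observation that applying the geometric operation to the pattern yields the operation applied to the language (you merely make this explicit by factoring quarter-turns through transposition and reflections and verifying the concatenation identities), and the non-closure part uses the very same self-symmetric pattern $\begin{bsmallmatrix} x_1 & x_2 \\ x_2 & x_1 \end{bsmallmatrix}$ reduced to Lemma~\ref{prcIncompLemma} via Lemma~\ref{equalityLemma}. Your version is somewhat more detailed than the paper's one-line justification of the positive claims, but the underlying argument is identical.
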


\begin{proof}
For the positive closure results, simply observe that the language described by the quarter-turn, by the half-turn or by a reflection of the array pattern $\alpha$ is just the quarter-turn, the half-turn or the reflection of the language described by $\alpha$.\par
For the non-closure properties, by symmetry it suffices to show that there is a language in  $\mathcal{L}_{\Sigma,r}$
whose quarter-turn is not in $\mathcal{L}_{\Sigma,r}$.
To this end, consider $L:=L_{\Sigma,r}\left(\begin{bsmallmatrix}x&y\\y&x\end{bsmallmatrix}\right)$.
Observe that the quarter-turn of $L$ is the same as $L_{\Sigma,c}\left(\begin{bsmallmatrix}x&y\\y&x\end{bsmallmatrix}\right)$,
which was proven not to be in $\mathcal{L}_{\Sigma,r}$ in   Lemma~\ref{prcIncompLemma}.
\end{proof}

The positive closure properties can be easily observed by applying the geometric operation directly on the array pattern. In order to show non-closure of $\mathcal{L}_{\Sigma,r}$ and $\mathcal{L}_{\Sigma,c}$ with respect to left and right turn, it is again sufficient to observe that the pattern $\alpha$ from above is identical to its left or right turn and then apply a similar argument as in the proof of Lemma~\ref{prcIncompLemma}.\par
Due to symmetry, it does not matter if we consider horizontal or vertical reflections. Notice that both half-turns and reflections coincide in the string case in any meaningful, non-trivial interpretation; in that case,
the operation is also known as mirror image.

\section{Future Research Directions}

A thorough investigation of the typical decision problems for two-dimensional pattern languages like the membership, inclusion and equivalence problem is left for future research. It can be easily seen that the NP-completeness of the membership problem for string pattern languages carries over to $\mathcal{L}_{\Sigma,x}$, $x \in \{p, r, c, rc\}$. On the other hand, for a given array pattern $\alpha$ and a terminal array $W$, the question whether or not $W \in L_{\Sigma, h}(\alpha)$ can be decided in polynomial time by checking whether $W$ is a morphic image of $\alpha$ with respect to a $(\frac{|W|_r}{|\alpha|_r}, \frac{|W|_c}{|\alpha|_c})$-uniform substitution. As shown by Lemma~\ref{equalityLemma}, the equivalence problem for all the classes $\mathcal{L}_{\Sigma,x}$ with $x \in \{h, p, r, c, rc\}$ and $|\Sigma| \geq 2$ can be easily solved by simply comparing the patterns. However, for every $z, z' \in \{h, p, r, c, rc\}$, $z \neq z'$, the problem to decide for given patterns $\alpha$ and $\beta$ whether or not $L_{\Sigma, z}(\alpha) = L_{\Sigma, z'}(\beta)$ might be worth investigating. The inclusion problem for terminal-free nonerasing string pattern languages is still open. Hence, with respect to the inclusion problem, a positive decidability result for two-dimensional pattern languages implies a positive decidability result for terminal-free nonerasing string pattern languages.\par
For string pattern languages it is common to use terminal symbols in the patterns as well as to consider the \emph{erasing} case, i.\,e., variables can be replaced by the empty word. The $p$ pattern languages can be adapted to the erasing case by allowing variables to be substituted by the empty array. Furthermore, the situation of having a terminal symbol at position $(i, j)$ of an array pattern simply forces all the variables in the $i^{\text{th}}$ row to be substituted by arrays of height $1$ and all the variables in the $j^{\text{th}}$ column to be substituted by arrays of width $1$. As in the string case, it is likely that in the two-dimensional case the difference between erasing and nonerasing substitutions and patterns with and without terminal symbols lead to different language classes with different decidability properties, too. \par
Finally, we wish to point out that it is straightforward to generalise our different classes of two-dimensional pattern languages to the three-dimensional or even $n$-dimensional case.

\bibliographystyle{plain}
\bibliography{abbrev,MSbib}

\end{document}